\newtheorem{proposition}{Proposition}
\newtheorem{corollary}{Corollary}
\pgfplotsset{compat=1.18} 
\tikzstyle{block} = [rectangle, rounded corners, minimum width=3cm, minimum height=1cm, text centered, draw=black, fill=blue!20]
\tikzstyle{arrow} = [thick,->,>=stealth]
\tikzstyle{block2} = [
\tikzstyle{data2} = [
\begin{document}

\title{ Breaking the Barriers of Text-Hungry and  Audio-Deficient  AI}

\author{ Hamidou Tembine, Issa Bamia, Massa NDong,  Bakary Coulibaly \\ Oumar Issiaka Traor\'e,  Moussa Traor\'e,  Moussa Sanogo,  \\ Mamadou Eric Sangar\'e, Salif Kant\'e, Daryl Noupa Yongueng,  \\  Hafiz Tiomoko Ali,  Malik Tiomoko, Fréjus Laleye  \\ Boualem Djehiche,   Wesmanegda Elisee Dipama,  Idris Baba Saje, \\ Hammid Mohammed  Ibrahim,  Moumini Sanogo, Marie Coursel Nininahazwe,\\   Abdul-Latif  Siita,   Haine Mhlongo,  Teddy Nelvy Dieu Merci Kouka, \\ Mariam Serine Jeridi, Mutiyamuogo  Parfait Mupenge, Lekoueiry Dehah, \\ Abdoul Aziz Bio Sidi D Bouko, Wilfried Franceslas Zokoue,  Odette Richette Sambila,\\ Alina RS Mbango, Mady Diagouraga, Oumarou  Moussa Sanoussi, Gizachew Dessalegn,\\ Mohamed Lamine Samoura, Bintou Laetitia Audrey Coulibaly  }

\maketitle

\begin{abstract}

While global linguistic diversity spans more than 7164 recognized languages, the current dominant architecture of machine intelligence remains fundamentally biased toward written text. This bias excludes over 700 million people particularly in rural and remote regions who are audio-literate. In this work, we introduce a fully textless, audio-to-audio machine intelligence framework designed to serve this underserved  population, and all the people who prefer audio-efficiency.
Our contributions include novel  Audio-to-Audio  translation architectures that bypass text entirely, including spectrogram-, scalogram-, wavelet-, and unit-based models. Central to our approach is the Multiscale Audio-Semantic Transform (MAST), a representation that encodes tonal, prosodic, speaker, and expressive features. We further integrate MAST into a fractional diffusion of mean-field-type framework powered by fractional Brownian motion. It enables the generation of high-fidelity, semantically consistent speech without reliance on textual supervision.
The result is a robust and scalable system capable of learning directly from raw audio, even in languages that are unwritten or rarely digitized. This work represents a fundamental shift toward  audio-native machine intelligence systems, expanding access to language technologies for communities historically left out of the current machine intelligence ecosystem. 
\end{abstract}
\newpage 
\tableofcontents 
\newpage
\section{Introduction}
Given the global population of approximately 8.2 billion as of March 2025, the potential for linguistic variation extends well beyond the 7,164 recognized languages, as each individual possesses unique vocal characteristics and may utilize multiple dialects or sign languages. Text-based machine intelligence proves ineffective in numerous village and remote settings where audio-literacy predominates.  While basic education is fundamental, text-based machine intelligence  currently excludes roughly 700 million audio-literate individuals, across the globe.  Therefore, co-developing a textless audio-to-audio machine intelligence  is critical. It will enable collaborative learning between parents and children in their native languages, contextualized within their environments. This approach also facilitates local commerce and knowledge sharing. It has the potential to empower these 700 million individuals through a more inclusive technological co-development. Africa alone has  400 million  audio-literates, which   presents a unique business opportunity for textless, audio-to-audio applications across diverse sectors: electric vehicles require voice-driven interfaces in local languages  for wider adoption; autonomous vehicles demand clear audio navigation  in local languages for safety; tourism benefits from multilingual audio tours; e-commerce expands through voice-activated shopping; and military interventions rely on  audio communication with the local  occupants of the intervention zone. By prioritizing audio-first design, businesses can tap into a vast, underserved market, driving economic growth and social impact across the continent. Additional business applications of Audio2audio machine intelligence \cite{basar2024foundations,basar2024applications,tapo2024machine,tembine2024mean,tembine2023machine}
 are provided in Tables \ref{tab:trends_mali}, \ref{tab:trends_mali2}, \ref{tab:trends3}, 
\ref{tab:audio2audiot0},\ref{tab:audio2audiot1}, \ref{tab:audio2audio} and \ref{tab:audio2audio2}.

\newpage
\begin{table}[htb]
\centering
\caption{Impact of Multilingual Audio Machine Intelligence on Key Malian Sectors }
\label{tab:trends_mali}
\begin{tabular}{|p{2cm}|p{2cm}|p{2cm}|p{3cm}|p{2.5cm}|p{2.5cm}|}
\hline
\textbf{Sector} & \textbf{Period} & \textbf{Languages} & \textbf{Key Applications} & \textbf{Emerging Professions} & \textbf{Required Skills} \\
\hline

\multirow{3}{*}{Electricity}
& Short-term & Bambara &
Vocal fault diagnosis via MI assistants &
Vocal Energy Technician &
Electrotechnics, voice recognition \\
& Medium-term & Songhai &
Vocal management of solar micro-grids &
Community Energy Manager &
Energy blockchain, multilingualism \\
& Long-term & All languages &
Automatic negotiation of energy tariffs &
MI Energy Negotiator &
Energy law, predictive algorithms \\
\hline

\multirow{3}{*}{Livestock}
& Short-term & Fulani/Bozo &
Vocal tracking of herds by animal recognition &
Digital Herder &
Vocal databases \\
& Medium-term & Senoufo &
Vocal alerts for livestock diseases &
Mobile MI Veterinarian &
Animal health, audio chatbots \\
& Long-term & All languages &
Vocal futures markets for livestock &
Digital Livestock Broker &
Islamic finance, market analysis \\
\hline

\multirow{3}{*}{Food}
& Short-term & Dogon &
Quality control by vocal analysis of processes &
MI Production Supervisor &
Standards, IoT sensors \\
& Medium-term & Songhai &
Vocal optimization of value chains &
Vocal Food Engineer &
Reverse logistics, traceability \\
& Long-term & All languages &
Authentic brands certified by vocal MI &
Digital Certification Agent &
Food law, NFTs \\
\hline
\multirow{3}{*}{Informal Finance}
& Short-term & Bobo &
"Tontines" managed by vocal assistants &
Oral Fintech Mediator &
Local trust systems \\
& Medium-term & Fulani &
Vocal credit rating via reputation analysis &
MI Microcredit Advisor &
Psycho-sociology \\
& Long-term & All languages &
Decentralized vocal banks &
Community Financial Auditor &
Smart contracts, regulation \\
\hline
\end{tabular}
\end{table}

\newpage 
\begin{table}[htb]
\centering
\caption{Impact of Multilingual Audio MI on Key Malian Sectors }
\label{tab:trends_mali2}
\begin{tabular}{|p{2cm}|p{2cm}|p{2cm}|p{4cm}|p{2cm}|p{2cm}|}
\hline
\textbf{Sector} & \textbf{Period} & \textbf{Languages} & \textbf{Key Applications} & \textbf{Emerging Professions} & \textbf{Required Skills} \\
\hline

\multirow{3}{*}{Fishing}
& Short-term & Bozo &
Vocal alerts for water levels and stocks &
Digital Fish Warden &
Hydrological data \\
& Medium-term & Bobo &
Automated vocal fish markets &
MI Fish Trader &
Cross-border trade \\
& Long-term & All languages &
Ecosystem management by acoustic MI &
Digital River Ecologist &
Bioacoustics, fisheries law \\ \hline

\multirow{3}{*}{Mining}
& Short-term & Senoufo &
Vocal reporting of mining accidents &
MI Mining Safety &
Mining law, sensors \\
& Medium-term & Dogon &
Vocal negotiation of contracts with communities &
Digital Mining Mediator &
Anthropology, blockchain \\
& Long-term & All languages &
Mineral exchange driven by vocal MI &
Mineral Resources Broker &
Circular economy, Web3 \\ \hline

\multirow{3}{*}{Electric Vehicles}
& Short-term & Fulani &
Predictive maintenance by vocal diagnostics &
Intelligent EV Mechanic &
Lithium batteries, vocal \\
& Medium-term & Bozo &
Vocal management of solar charging stations &
Green Mobility Manager &
Smart grids \\
& Long-term & All languages &
Autonomous vocal carpooling systems &
African Mobility Architect &
Conversational MI, UX \\ \hline

\multirow{3}{*}{Blockchain}
& Short-term & Sonhrai &
Vocal recording of land on blockchain &
Community Digital Notary &
Oral land law \\
& Medium-term & Senoufo &
Vocalized smart contracts for commerce &
Vocal Web3 Developer &
Solidity, local languages \\
& Long-term & All languages &
Decentralized governance by vocal DAOs &
Digital Society Facilitator &
Algorithmic governance \\
\hline
\end{tabular}
\end{table}

\newpage 
\begin{table}[htb]
\centering
\caption{Audio MI in Business in Mali}
\label{tab:trends3}
\begin{tabular}{|p{2cm}|p{2cm}|p{4cm}|p{2cm}|p{2cm}|p{2cm}|}
\hline
\textbf{Sector} & \textbf{Period} & \textbf{Key Applications} & \textbf{Emerging Professions} & \textbf{Required Skills} \\
\hline
\multirow{3}{*}{Agriculture}
& Short-term &
Vocal assistants for access to market data (prices, weather) &
Agricultural Data Analyst &
Audio MI, basics of agronomy \\
& Medium-term &
Vocal management of stocks and supply chain &
Vocal Logistics Manager &
Vocal management of the chain \\
& Long-term &
Virtual markets driven by voice commands &
Digital Agricultural Negotiator &
Blockchain, vocal e-commerce \\
\hline
\multirow{3}{*}{Commerce}
& Short-term &
Vocal kiosks for informal markets &
Digital Mediation Agent &
Multi-lingual \\
& Medium-term &
Vocal analysis of customer behavior &
Customer Experience Consultant &
Visualization \\
& Long-term &
Hyper-personalized advertising via voice recognition &
Vocal Marketing Strategist &
Consumer psychology, MI ethics \\
\hline
\multirow{3}{*}{Fintech}
& Short-term &
Mobile transactions by voice command &
Financial Inclusion Agent &
Transaction security \\
& Medium-term &
Vocal assistants for micro-credits &
Vocal Financial Advisor &
Risk analysis, regulation \\
& Long-term &
Autonomous vocal banks &
Financial MI Systems Auditor &
Cybersecurity, blockchain \\
\hline
\multirow{3}{*}{Logistics}
& Short-term &
Vocal tracking of deliveries &
Logistics Coordinator &
Crisis management, geolocation \\
& Medium-term &
Warehouses automated by audio I &
Intelligent Warehouse Supervisor &
Predictive maintenance \\
& Long-term &
Self-managed transport fleets &
Network Optimizer &
IoT, real-time analysis \\
\hline
\end{tabular}
\end{table}

\newpage

\begin{table}[htb]
\centering
\caption{Audio-to-Audio MI for Connecting Demand \& Supply in Regional Agri-Trade}
\label{tab:audio2audiot0}
\begin{tabular}{|p{3cm}|p{3cm}|p{6.5cm}|}
\hline
\textbf{Country/Region} & \textbf{Product} & \textbf{Audio-MI Application \& Impact} \\
\hline
\textbf{Dogon Country, Mali} & Onion/Shallot & Voice-MI alerts in Tommo-So Dogon languages notify producers of price surges in Bamako/Dakar/Abidjan markets, enabling bulk sales during shortages. \\
\hline
\textbf{Ansongo, Mali} & Onion/Shallot & Solar-powered voice platforms connect women growers to buyers in Niger, bypassing exploitative middlemen. \\
\hline
\textbf{Sikasso, Mali} & Mango & Bambara-language voice chatbots negotiate export contracts with Ivorian buyers, reducing post-harvest losses. \\
\hline
\textbf{Koutiala, Mali} & Cotton & Voice-MI logistics tools in Senoufo/Bambara help cooperatives track shipments to textile hubs like Segou, Dakar, Abidjan. \\
\hline
\textbf{San, Mali} & Shea Butter & Audio quality guides in Soninke validate purity standards for EU/US,  buyers, empowering women-led cooperatives. \\
\hline
\textbf{Mopti, Mali} & Rice & Voice alerts in Fula inform farmers about optimal irrigation schedules and Bamako market demand peaks. \\
\hline
\textbf{Thiès, Senegal} & Mango & Wolof voice-MI systems coordinate harvests with Dakar exporters, ensuring fair pricing for smallholders. \\
\hline
\textbf{Senegal} & Peanut & Pulaar-language voice apps link women processors to bulk buyers in Touba during religious festival demand spikes. \\
\hline
\textbf{Saint Louis, Senegal} & Fishing & Wolof audio platforms share real-time catch data with fishmongers in Nouakchott (Mauritania), reducing waste. \\
\hline
\textbf{Bobo-Dioulasso, Burkina Faso} & Sorghum Beer & Senoufo-language voice logs manage sorghum supply chains for women brewers, stabilizing production costs. \\
\hline
\textbf{Lake Chad, Niger} & Sheep (Tabaski) & Zarma/Hausa voice updates alert herders to seasonal demand spikes in Niamey and  Bamako for Eid al-Adha. \\
\hline
\end{tabular}
\end{table}

\newpage
\begin{table}[htb]
\centering
\caption{Audio-to-Audio MI Applications for Small Businesses \& Women Empowerment in Mali, Senegal, Burkina Faso, and Niger}
\label{tab:audio2audiot1}
\begin{tabular}{|p{2cm}|p{5cm}|p{6.5cm}|}
\hline
\textbf{Country} & \textbf{Application \& Business Sector} & \textbf{Empowerment Impact \& Example} \\
\hline
\textbf{Mali} & \textbf{Textile Cooperatives}: Voice-guided weaving tutorials in Bambara for traditional Bogolan cloth. & Women artisans in Ségou use audio prompts to learn complex patterns, scaling production for international markets. \\
\hline
\textbf{Senegal} & \textbf{Fish Trading}: Wolof/Fula voice marketplaces for pricing and logistics in coastal communities. & Female fish traders in Saint-Louis negotiate prices via voice-MI apps, bypassing male-dominated intermediaries. \\ &  \textbf{Basic Education}&  Primary school teachers with real time translation of their speech during lessons  \\
&
\textbf{Adult Lifelong Learning } & Adult learning in their native language  \\
\hline
\textbf{Burkina Faso} & \textbf{Shea Butter Production}: Audio quality control guides in Dioula/Mooré for rural cooperatives. & Women in Bobo-Dioulasso validate shea butter grades using voice-MI tools, improving export compliance. \\
\hline
\textbf{Niger} & \textbf{Agro-Pastoral Advisory}: Zarma/Hausa audio alerts for crop-livestock integration. & Women farmers in Maradi receive drought-resistant millet planting tips via solar-powered voice devices. \\
\hline
\textbf{Mali} & \textbf{Local Restaurants}: Bambara voice menus and hygiene reminders for street food vendors. & Women-run "maquis" in Bamako use voice-MI to manage orders during Ramadan rush hours. \\
\hline
\textbf{Senegal} & \textbf{Microloans}: Pulaar-language voice interfaces for audio-literate women to access credit. & Women in Thiès apply for loans via voice-MI chatbots, avoiding bureaucratic paperwork. \\
\hline
\textbf{Burkina Faso} & \textbf{Handicraft Export}: Voice-driven inventory tracking in Gulmancema for artisanal groups. & Female pottery cooperatives in Ouagadougou manage orders via audio logs, doubling sales efficiency. \\
\hline
\textbf{Niger} & \textbf{Maternal Health}: Tamajaq-language audio assistants for prenatal care in nomadic communities. & Women entrepreneurs distribute voice-MI health kits to remote Tuareg populations, reducing maternal mortality. \\
\hline
\end{tabular}
\end{table}
\newpage
\begin{table}[htb]
\centering
\caption{Applications of Audio-to-Audio MI in Local African Businesses}
\label{tab:audio2audio}
\begin{tabular}{|p{2cm}|p{5cm}|p{6cm}|}
\hline
\textbf{Application Area} & \textbf{Description} & \textbf{Real-World Example} \\
\hline
Electric Vehicles & Voice-driven interfaces for vehicle controls and navigation in local languages. & Drivers in Mali using Bambara voice commands to operate EVs (remaining useful  life of batteries depends on temperature, usage and voltage). \\
\hline
Autonomous Vehicles & Audio navigation systems for safety and route guidance in local dialects. & Delivery drones providing real-time audio instructions in Yoruba for rural logistics. \\
\hline
Tourism & Multilingual audio tours for cultural/heritage sites. & Tourists in Kenya receiving Swahili audio guides at Maasai Mara National Reserve. \\
\hline
E-Commerce & Voice-activated shopping platforms for non-literate users & Hausa-speaking customers in Nigeria ordering goods via voice commands on local apps. \\
\hline
Military/ Peacekeeping & Real-time audio communication with local communities during operations. & Soldiers using Fulfulde audio translation to interact with residents in conflict zones. \\
\hline
Local Market Transactions & Voice-based negotiation/payment systems for informal markets. & Traders in Ghana using Akan-language voice bots to process payments at open markets. \\
\hline
Agricultural Advisory & Voice-driven crop/weather updates for farmers. & Farmers in Senegal receiving Wolof audio alerts about pest outbreaks or rainfall. \\
\hline
Community Education & Interactive audio lessons for skill-sharing in native languages. & Parents and children in Ethiopia learning via Oromo-language audio storytelling apps. \\
 \hline
Livestock Breeding & Voice-based alerts for disease prevention, feeding schedules, and market prices. & Fulani herders in Niger receiving Zarma-language updates on livestock vaccines via community radio MI systems. \\
\hline
\end{tabular}
\end{table} 

\newpage
\begin{table}[htb]
\centering
\caption{Applications of Audio-to-Audio MI in Local African Businesses}
\label{tab:audio2audio2}
\begin{tabular}{|p{2cm}|p{5cm}|p{6cm}|}
\hline
\textbf{Application Area} & \textbf{Description} & \textbf{Real-World Example} \\
\hline
Local Transportation & Real-time audio dispatch systems for informal transit (e.g., minibuses, motorcycle taxis). & "Boda-boda" drivers in Uganda receiving Luganda voice alerts about traffic or passenger pickups. \\
\hline
Local Distribution & Voice-guided inventory management for small-scale suppliers and wholesalers. & Distributors in Côte d'Ivoire using Dioula-language voice logs to track stock levels in warehouses. \\
\hline
Restaurants \& Street Food & Voice-activated order systems and hygiene compliance reminders. & Millet fufu vendors in Bamako using Tommo-So Dogon voice commands to manage orders during peak hours. \\
\hline
Artisanal Crafting & Audio tutorials for traditional craft techniques (weaving, pottery). & Kente cloth weavers in Ghana learning new patterns via Akan-language instructional audio. \\
\hline
Healthcare Outreach & Multilingual voice assistants for symptom checks and clinic appointments. & Community health workers in Malawi using Chichewa audio tools to triage patients in remote villages. \\
\hline
Waste Management & Voice-guided sorting instructions for recycling initiatives. & Waste collectors in Rwanda receiving Kinyarwanda audio prompts to categorize recyclables. \\
\hline
Fishing Communities & Weather and market alerts for river and lake-based fisheries. & Bozo-speaking fishermen in Mali accessing real-time river/ lake conditions via voice updates on basic phones. \\
\hline
\end{tabular}
\end{table} 
\newpage

\subsection*{ Audio-rich languages are not necessarily of low  resource languages}
The misconception that audio-rich languages are low-resource stems from a conflation of phonological complexity with text-data scarcity. Audio-richness, characterized by features like high phoneme inventories, tonal variations, and a dominant oral tradition, describes a language's inherent structural complexity and reliance on spoken communication, not necessarily the availability of text-computational resources. 
Conversely, the term "low text-resource" traditionally refers to a lack of annotated textual data, speech transcriptions, and computational tools, which hinder the application of conventional text-centric  language processing models. The term "low text-resource" differs from  "low resource" which is also different from "low audio-resource".  A language can possess an audio system while still having substantial resources in other forms, such as rich audio archives of oral histories, community-generated audio datasets, or even emerging speech processing tools. Languages with vibrant oral traditions might lack extensive written texts but possess vast repositories of recorded speech, which, with advancements in self-supervised learning and audio-based  models like Wav2Vec 2.0 and HuBERT, are becoming increasingly valuable resources. These models demonstrate the potential to learn directly from raw audio data, bypassing the traditional reliance on transcribed text. The integration of multimodal learning, combining audio with visual or other sensory inputs, opens new avenues for understanding and processing audio-rich languages. Currently, the evolving language processing, with its growing focus on audio-centric and multimodal approaches, is effectively challenging the limitations imposed by traditional text-based paradigms. A more nuanced understanding of resource availability, one that accounts for diverse data modalities and acknowledges the potential of audio data, is essential for promoting equitable and inclusive language technology co-development, moving beyond the simplistic equation of audio-richness with low-resource status.

\subsection*{ Predominantly oral  languages are not necessarily of low resource languages}

The prevalent misconception that predominantly oral languages  equate to low-resource status overlooks the rapidly growing potential of audio-centric data. While lacking extensive written corpora, these languages often possess rich audio archives, representing a vast, untapped resource for modern language technologies. Advances in self-supervised learning demonstrate the feasibility of training robust models directly from raw audio, circumventing the need for laborious text transcriptions. Furthermore, community-driven digitization efforts and the co-development of accessible audio processing tools are transforming these previously marginalized languages into audio data-rich domains. By shifting the focus from text-centric metrics to a holistic evaluation of available data modalities, including audio, we recognize that oral languages often harbor substantial and significant resources.

 \subsection*{  Mismatch between text-based technology supply and audio data realities}
There is a clear mismatch between technological demand and supply here. Currently, audio-to-audio translation systems fail to support many African, south Asian, and autochtone languages, particularly those that are unwritten or have writing systems not widely used by native speakers. Current architectures are text-hungry and audio-deficient, relying on large-scale supervised learning with extensive labeled text/images datasets. There is a potential for high-quality audio data in these languages but the current technologies fail to bring these distributed audio data into a connected network for federated training.   Unlike widely spoken languages with substantial transcribed corpora, most African languages with low text-literacy rates  suggest high-quality audio data such as  paired speech data for model training.
State-of-the-art systems typically use cascaded approaches involving automatic speech recognition (ASR), machine translation (MT), and text-to-speech (TTS). However, unwritten languages break this pipeline, as ASR and MT require text representation. Current technology supply such as  ASR, MT, TTS are text-hungry. The primary  demand is textless audio to audio for these languages. 
Currently end-to-end  Audio-to-Audio  models, though promising, depend on parallel spoken corpora, which remain scarce. Additionally, phonetic diversity, tonal variation, and dialectal fragmentation pose challenges for standardization and data collection. Native speakers of African, south Asian or autochtone  languages exhibit very high audio literacy and possess audio-rich linguistic data, yet current models are ill-equipped to extract and exploit this resource. The field lags in developing architectures that can natively process and translate spoken content without intermediary text, highlighting the need for novel self-supervised or direct  Audio-to-Audio  learning frameworks.

In Figure \ref{fig:a2aa},   three architectures for  Audio-to-Audio  Translation  are represented. Figure \ref{fig:a2aa}(a) is an indirect translation, it requires a full text as an intermediary. Figure \ref{fig:a2aa}(a) is text-hungry and audio-deficient. Figures \ref{fig:a2aa} (b)  is a spectrogram-based direct audio-to-audio translation. It is textless.  Figures \ref{fig:a2aa} (c) is a textless unit-based direct audio-to-audio translation.   Figures \ref{fig:a2aa} (b)  and (c) are in early-stage testing for most African languages. For all these architectures, high-quality audio data is the key.
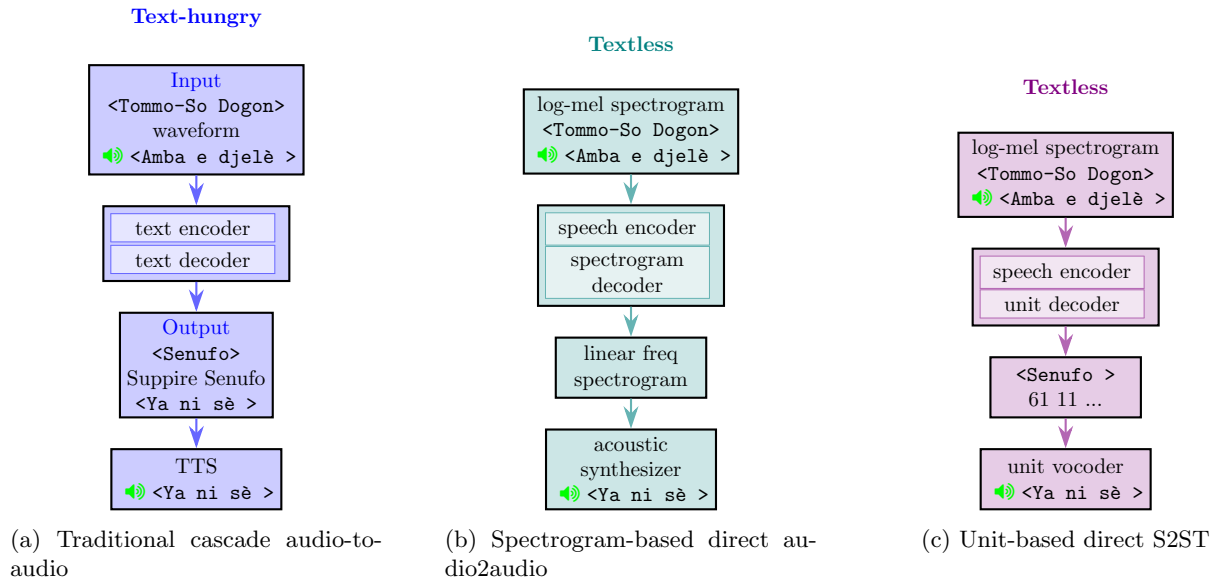
\begin{figure}[htbp]
\centering

\begin{subfigure}[t]{0.3\textwidth}
\centering
\begin{tikzpicture}[
  box/.style = {
    draw, thick, fill=blue!20, 
    minimum width=2.5cm, minimum height=1cm, align=center
  },
  arrow/.style = {
    -{Stealth[length=3mm,width=2mm]}, thick, draw=blue!60
  },
  label/.style = {font=\bfseries, text=blue},
  scale=0.8, every node/.style={transform shape}
]

\node[label, anchor=west] (r1) at (0,0) {\textcolor{blue}{Text-hungry}};
\node[box, below=0.5cm of r1] (in1) {%
  \textcolor{blue}{Input}\\
  \texttt{<Tommo-So Dogon>}\\
  waveform \\ { \color{green} \faVolumeUp}  \texttt{<Amba e djel\`e >}
}; 
\node[box, below=0.5cm of in1] (model1) {
  \begin{tabular}{@{}c@{}}
    \fcolorbox{blue!60}{blue!10}{\parbox{2.5cm}{\centering text encoder}}\\[2pt]
    \fcolorbox{blue!60}{blue!10}{\parbox{2.5cm}{\centering text decoder}}
  \end{tabular}
};
\node[box, below=0.5cm of model1] (out1) {%
  \textcolor{blue}{Output}\\
  \texttt{<Senufo>}\\
  Suppire Senufo \\    \texttt{<Ya ni s\`e >} 
};
\node[box, below=0.5cm of out1] (voc1) {TTS \\  { \color{green} \faVolumeUp} \texttt{<Ya ni s\`e >}};

\draw[arrow] (in1) -- (model1);
\draw[arrow] (model1) -- (out1);
\draw[arrow] (out1) -- (voc1);

\end{tikzpicture}
\caption{Traditional cascade audio-to-audio}
\end{subfigure}%
\hfill
\begin{subfigure}[t]{0.3\textwidth}
\centering
\begin{tikzpicture}[
  box/.style = {
    draw, thick, fill=teal!20, 
    minimum width=2.5cm, minimum height=1cm, align=center
  },
  arrow/.style = {
    -{Stealth[length=3mm,width=2mm]}, thick, draw=teal!60
  },
  label/.style = {font=\bfseries, text=teal},
  scale=0.8, every node/.style={transform shape}
]

\node[label, anchor=west] (r2) at (0,0) {\textcolor{teal}{Textless}};
\node[box, below=0.5cm of r2] (in2) {%
  log‐mel spectrogram\\
   \texttt{<Tommo-So Dogon>} \\ { \color{green} \faVolumeUp} 
 \texttt{<Amba e djel\`e >}
}; 
\node[box, below=0.5cm of in2] (model2) {
  \begin{tabular}{@{}c@{}}
    \fcolorbox{teal!60}{teal!10}{\parbox{2.5cm}{\centering speech encoder}}\\[2pt]
    \fcolorbox{teal!60}{teal!10}{\parbox{2.5cm}{\centering spectrogram decoder}}
  \end{tabular}
};
\node[box, below=0.5cm of model2] (out2) {%
  linear freq\\
  spectrogram
};
\node[box, below=0.5cm of out2] (voc2) {acoustic\\ synthesizer\\ { \color{green} \faVolumeUp}   \texttt{<Ya ni s\`e >}};

\draw[arrow] (in2) -- (model2);
\draw[arrow] (model2) -- (out2);
\draw[arrow] (out2) -- (voc2);

\end{tikzpicture}
\caption{Spectrogram-based direct audio2audio}
\end{subfigure}%
\hfill
\begin{subfigure}[t]{0.3\textwidth}
\centering
\begin{tikzpicture}[
  box/.style = {
    draw, thick, fill=violet!20, 
    minimum width=2.5cm, minimum height=1cm, align=center
  },
  arrow/.style = {
    -{Stealth[length=3mm,width=2mm]}, thick, draw=violet!60
  },
  label/.style = {font=\bfseries, text=violet},
  scale=0.8, every node/.style={transform shape}
]

\node[label, anchor=west] (r3) at (0,0) {\textcolor{violet}{Textless}};
\node[box, below=0.5cm of r3] (in3) {%
  log‐mel spectrogram\\
  \texttt{<Tommo-So Dogon>} \\  { \color{green} \faVolumeUp} \texttt{<Amba e djel\`e >}
}; 
\node[box, below=0.5cm of in3] (model3) {
  \begin{tabular}{@{}c@{}}
    \fcolorbox{violet!60}{violet!10}{\parbox{2.5cm}{\centering speech encoder}}\\[2pt]
    \fcolorbox{violet!60}{violet!10}{\parbox{2.5cm}{\centering unit decoder}}
  \end{tabular}
};
\node[box, below=0.5cm of model3] (out3) {%
  \texttt{<Senufo >}\\
  61 11 ... 
};
\node[box, below=0.5cm of out3] (voc3) {unit vocoder\\  { \color{green} \faVolumeUp} \texttt{<Ya ni s\`e >} };

\draw[arrow] (in3) -- (model3);
\draw[arrow] (model3) -- (out3);
\draw[arrow] (out3) -- (voc3);

\end{tikzpicture}
\caption{Unit-based direct S2ST}
\end{subfigure}

\caption{Three architectures for voice-to-voice translation. (a) is text-hungry and audio-deficient. (b)  and (c) are in early-stage testing for most African languages. High-quality audio data is the key.} \label{fig:a2aa}
\end{figure}

 \subsection*{ Why is multispeaker identification still a problem even for text-based languages?}
 Accurate multispeaker identification, even within text-based languages, remains a significant challenge despite advances in deep learning. 
While transformers excel at capturing medium-range dependencies and contextual information, core limitations include robust speaker discrimination in real-world scenarios. Primarily, 
transformers struggle with the  variability of human speech, which transcends linguistic content. Factors such as vocal tract differences, emotional states, and 
environmental noise introduce substantial acoustic variability, obscuring speaker-specific characteristics.
Current transformer models often conflate speaker identity with linguistic content. This is  especially true when training data exhibits strong correlations between speaker and language use. 
This leads to models that are sensitive to lexical variations rather than pure speaker embeddings. The limited availability of large, diverse multispeaker datasets 
 exacerbates this issue, restricting the generalizability of trained models. Current audio-transformers' attention mechanisms  are not inherently designed to isolate and preserve speaker-specific information. They prioritize contextual relevance, potentially diluting  subtle but critical acoustic cues necessary for accurate speaker discrimination. This is evident in the models' tendency to perform poorly when presented with speakers outside the training set or under noisy conditions. These limitations are dramatically amplified when considering audio-rich, text-scarce languages, particularly prevalent in many African contexts. 
Text-hungry, audio-deficient transformers, by design, rely on textual representations as a crucial intermediate step. The absence of transcribed data renders these architectures fundamentally inapplicable. Consequently, the rich acoustic information, which holds vital speaker-specific cues, remains inaccessible to these models. The challenge extends beyond mere text-data scarcity. The acoustic complexity of these languages, often with nuanced tonal variations and phonological structures, demands specialized 
feature extraction techniques beyond standard spectrogram  representations. Moreover, the lack of standardized orthographies or linguistic resources 
limits the co-development of even rudimentary speech-to-text systems, further isolating these languages from the advancements in text-based machine intelligence.

\subsection{Literature Review on Text-hungry and audio-deficient  LLMs and SLMs}

\begin{longtable}{|p{7cm}|p{5cm}|}
\toprule
\textbf{Topic} & \textbf{References} \\
\midrule
\textbf{Textless Speech Language Models } & \cite{hsu2023exploring, chang2024exploring, wu2023speechgen, lin2024align, nguyen2024spoken, dekel2024exploring, guo2025recent, turetzky2024last} \\
\midrule
\textbf{Discrete Speech Units} & \cite{inaguma2024massively, wu2023improving, wu2023speechgen} \\
\midrule
\textbf{Speech Representation} & \cite{yeo2025zero, li2024brainecho, tan2024ssr} \\
\midrule
\textbf{Speech Generation} & \cite{park2024long, meng2024parrot, wang2025parrot, mai2025real, lu2025slide} \\
\midrule
\textbf{Speech Quality Evaluation} & \cite{chen2025audio, wang2025enabling} \\
\midrule
\textbf{Audio Language Models} & \cite{chu2023qwen, chu2024qwen2, goel2024audio, weck2024muchomusic, chen2024beyond, manakul2024enhancing, du2023lauragpt, yang2024audio, wang2024audiobench, ahmedqwen, li2025baichuan, yang2024uniaudio, deshmukh2025adiff} \\
\midrule
\textbf{Textless  Audio-to-Audio  Translation } & \cite{peng2024mslm, lee2021textless, hwang2024textless, chen2022speech, fang2024ctc, kim2024textless, zhu2023diffs2ut, zhao2025textless, diwan2023textless, li2023textless, duret2023enhancing, duret2024analyzing, fang2024can} \\
\midrule
\textbf{Direct S2ST} & \cite{gupta2024direct, sarim2025direct, gao2007speech, zhang2022direct, shankarappa2022faster, zhang2021uwspeech, dong2023polyvoice, arya2022analysis, min2025unit, kaur2024direct} \\
\midrule
\textbf{S2ST for Low Text-Resource Languages} & \cite{metalom2022towards, diane2023french, ochieng2024exploiting, kabenamualu2022listra, shoba2024spoken, rajkhowa2023optimizing, radhakrishnan2024voice, zouhair2021automatic, gong2025tibetan, liu2023speech, ekpenyong2022towards} \\
\midrule
\textbf{Multilingual S2ST} & \cite{jeuris2022libris2s, phogat2023bridging, gong2023multilingual, cho2025mavflow, liu2024recent} \\
\midrule
\textbf{S2ST Systems \& Techniques} & \cite{etchegoyhen2022cascade, salesky2019fluent, tjandra2020speech, waibel2003speechalator, agranovich2025simultron, labiausse2025high, hu2025chain, chen2022blaser, chen2022speech} \\
\midrule
\textbf{Historical S2ST Systems} & \cite{lavie1997janus, kitano1991phi, vidal1997finite, wahlster2000mobile, waibel1991janus,waibel1991janus2, reithinger1996predicting, cettolo1999speech, reithinger1995treatment, tomita1988towards, noth2000verbmobil, schultz1995acoustic, lavie1997janus} \\
\midrule
\textbf{Audio-Visual Speech Recognition } & \cite{yeo2024visual, ma2023auto, cappellazzo2025large, liu2025listening} \\
\midrule
\textbf{Multimodal LLMs} & \cite{han2024onellm, chen2024multi, sapkota2025image, verma2024whisper, verma2025whisper, chu2023qwen, chu2024qwen2} \\
\midrule
\textbf{Audio Analysis} & \cite{goel2024audio, weck2024muchomusic, rahman2024sonics, mehta2025make, wang2025they, deshmukh2025adiff} \\
\midrule
\textbf{Speech Recognition and Processing} & \cite{xu2025fireredasr, song2024comparative, casanova2025low, yang2022torchaudio} \\
\midrule
\textbf{Speech Synthesis} & \cite{lyu2025build, nekvinda2020one, renovalles2021text} \\
\midrule
\textbf{Spoken Dialogue Systems} & \cite{ji2024wavchat, zhang2025llm, park2024let} \\
\midrule
\textbf{In Context Learning} & \cite{hsu2023exploring, chang2024exploring} \\
\midrule
\textbf{Speech Analysis} & \cite{sieqman1979speech, erber1975auditory} \\
\bottomrule
\end{longtable}

 The work in \cite{chen2022speech} discusses audio-to-audio translation  for unwritten languages, using English-Taiwanese Hokkien as a case study. The authors aim to address the challenges of S2ST when there is no text data available. While their work  aims to address the challenges of unwritten languages, it does not  purely focus on a textless audio-to-audio approach. The methodology relies heavily on leveraging Mandarin, a related language {with} a writing system, to aid in the process. 
  Mandarin is used as a pivot language for translation. Mandarin text is also used to provide extra supervision in the two-pass decoding model.    This reliance on a written language diminishes the claim of a truly textless audio-to-audio translation system. A more genuine textless approach would avoid any dependency on text from another language.
Their evaluation relies on ASR-BLEU. BLEU requires reference text, which means the evaluation isn't truly textless. To evaluate the quality of audio-to-audio translation without relying on text, other metrics could be considered, such as:
  Direct audio quality metrics.  These could assess the naturalness, clarity, and speaker similarity of the generated audio.
       It could be human evaluation of how well the translated speech conveys the meaning of the source speech.
    The study focuses on English-Taiwanese Hokkien. While this is a valuable case study, the methodology's generalizability to other unwritten languages, especially those more distant from any high-resource written language, is not thoroughly explored. The effectiveness of the Mandarin-assisted approach might not hold for language pairs with greater linguistic divergence.

   The work in \cite{hsu2023exploring,chang2024exploring} 
explores In-Context Learning of Textless Speech Language Model for Speech Classification Tasks. \cite{mai2025real} examines Real-Time Textless Dialogue Generation.
\cite{lin2024align} uses  with Reinforcement Learning from MI Feedback for Textless Spoken Language Models. 
  \cite{lu2025slide}  Integrates  Speech Language Model with LLM for Spontaneous Spoken Dialogue Generation.
\cite{zhang2024intrinsicvoice} aims to Empower  LLM with intrinsic real-time voice interaction abilities. 
\cite{yeo2025zero}  proposes Zero-Shot Audio-Visual Speech Recognition with LLM by Learning Language-Agnostic Speech Representations. \cite{peng2024mslm} examines  Multitask Speech Language Model for Textless  Audio-to-Audio  Translation with Speaker Style Preservation. \cite{verma2025whisper} studies 
Hybrid Generative LLM For Speech And Music called Whisper-GPT. \cite{meng2024parrot} examines Autoregressive Spoken Dialogue Language Modeling with Decoder-only Transformers and \cite{inaguma2024massively} examines Massively Multilingual Forced Aligner Leveraging Self-Supervised Discrete Units.
The authors in \cite{wu2023speechgen} explore the generative power of speech language models with prompts.
The authors in \cite{park2024long} investigate long-form speech generation with spoken language models.
The authors in \cite{tan2024ssr} propose an alignment-aware modality connector for speech language models.
The authors in \cite{chen2025audio} examine how audio large language models can serve as descriptive speech quality evaluators.
The authors in \cite{nguyen2024spoken} explore spoken language modeling from raw audio.
The authors in \cite{verma2024whisper} introduce Whisper-GPT, a hybrid representation audio large language model.
The authors in \cite{nachmani2023spoken} investigate spoken question answering and speech continuation using spectrogram-powered language models.
The authors in \cite{rubenstein2023audiopalm} present Audiopalm, a large language model capable of speaking and listening.
The authors in \cite{wang2025parrot} develop Parrot, a seamless spoken dialogue interaction system with double-channel large language models.
The authors in \cite{ji2024wavchat} provide a survey of spoken dialogue models in Wavchat.
The authors in \cite{lin2025preliminary} conduct a preliminary exploration with GPT-4o voice mode.
The authors in \cite{yeo2024visual} propose the VSP-LLM framework for efficient and context-aware visual speech processing.
The authors in \cite{zhang2025llm} explore LLM-enhanced dialogue management for full-duplex spoken dialogue systems.
The authors in \cite{park2024let} introduce a spoken dialogue model for face-to-face conversation in Let's Go Real Talk.
The authors in \cite{shih2023gsqa} present GSQA, an end-to-end model for generative spoken question answering.
The authors in \cite{dekel2024exploring} examine the benefits of tokenization of discrete acoustic units.
The authors in \cite{wang2024speech} analyze why speech language models fail to generate semantically coherent outputs from a modality-evolving perspective.
The authors in \cite{gao2024unsupervised} investigate unsupervised speech technology for low-resource languages.
The authors in \cite{guo2025recent} review recent advances in discrete speech tokens.
The authors in \cite{li2025beyond} introduce AuralLLM and SignMST-C for precise sign language production and bidirectional accessibility.
The authors in \cite{turetzky2024last} explore language model-aware speech tokenization.
The authors in \cite{li2024brainecho} present BrainECHO, a framework for semantic brain signal decoding using vector-quantized spectrogram reconstruction.
The authors in \cite{gupta2024direct} provide a survey on direct  Audio-to-Audio  neural machine translation.
The authors in \cite{maimon2025slamming} propose SLAMMING, a method for training a speech language model on one GPU in a day.
The authors in \cite{nortje2024visually} study visually grounded speech models for low-resource languages and cognitive modeling.
The authors in \cite{liu2024recent} highlight recent progress in multilingual and multimodal speech translation.
The authors in \cite{zhao2025textless} explore textless streaming  Audio-to-Audio  translation using semantic speech tokens. Audio tokenization aims to convert audio waveforms to low-dimensional discrete tokens, which is beneficial in transmission. There are two types of audio tokens. The first is acoustic tokens, which employ several residual vector quantization  layers to compress the audio where the first layer encode the most important information and the remaining layers encodes the residual information. The second is semantic tokens such as w2v-BERT (wav2vec-Bidirectional Encoder Representations from Transformers). There are also some work proposing unified tokens which capture both the semantic and acoustic information. SpeechTokenizer employs a similar architecture with residual vector quantization and uses semantic distillation to force the first layer to learn semantic information. Other layers are used to learn acoustic information.

The semantic tokens can be reconstructed to audio waveforms in two stages. In MusicLM, acoustic tokens are generated based on the predicted semantic tokens. Then acoustic tokens are converted to waveform through the decoder. MusicGen optimizes the codebook interleaving patterns and predicts the acoustic tokens following the delayed pattern. It then uses a similar idea for audio reconstruction. Our AcousticLM (Acoustic Language Model) maps the acoustic tokens based on the predicted semantic tokens. The AcousticLM has a fixed inference buffer length and emits the acoustic tokens in a chunk-wise streaming way. Finally, acoustic tokens are converted to speech waveform.

These methods employ discrete audio tokens as the learning objectives and use vocoders to synthesize speech.   Discrete speech units extracted from an encoder pre-trained with self-supervised learning objective like Hubert and WavLM are used as the learning targets. Poly Voice employs cross-lingual language models to translate source semantic tokens to target semantic tokens first. Then, the unit-to-speech language model takes the source semantic tokens, target semantic tokens, and source acoustic tokens as input, and output target acoustic tokens. AudioPalm extended the large-language model to understand and generate speech with audio tokens from the  encoder. Bilateral perturbation is introduced to mitigate the acoustic multimodality problem and generate acoustic agnostic learning targets. A joint encoder-decoder model first translates the source speech to target text and residual vector quantization codes in the first layer. Then a non-causal Language Model is used to predict the residual vector quantization codes in the remaining layers. Textless Translatotron is a non-streaming sequence-to-sequence based  Audio-to-Audio  translation model that directly predicts discrete speech representations. StreamSpeech employs a two-pass paradigm to predict the translated text first and then speech units based on the text. It also uses a multi-task learning paradigm which integrates Automatic Speech Recognition, speech-to-text translation, and  Audio-to-Audio  translation into one model. 

The authors in \cite{lee2021textless} investigate textless  Audio-to-Audio  translation on real data.
The authors in \cite{wu2023improving} enhance textless spoken language understanding with discrete units as an intermediate target.
The authors in \cite{hwang2024textless} propose a textless acoustic model with self-supervised distillation for noise-robust expressive  Audio-to-Audio  translation.
The authors in \cite{chen2022speech} develop  Audio-to-Audio  translation for a real-world unwritten language.
The authors in \cite{fang2024ctc} introduce a CTC-based non-autoregressive textless  Audio-to-Audio  translation model.
The authors in \cite{metalom2022towards} explore direct  Audio-to-Audio  translation for endangered languages in Africa.
The authors in \cite{diane2023french} investigate French-Fulfulde textless and cascading speech translation, proposing a dual architecture.
The authors in \cite{kaur2024direct} analyze a direct Punjabi-to-English speech translation method using discrete units.
The authors in \cite{zhang2022direct} present a  Audio-to-Audio  translation approach that does not rely on textual annotation, utilizing bottleneck features.
The authors in \cite{tjandra2020speech} propose a direct  Audio-to-Audio  translation system without text, focusing on efficient model training.
The authors in \cite{sarim2025direct} review direct  Audio-to-Audio  translation techniques, discussing recent advancements and challenges.
The authors in \cite{etchegoyhen2022cascade} compare cascade and direct speech translation methods, evaluating their performance and trade-offs.
The authors in \cite{berard2016listen} introduce an early proof-of-concept for end-to-end speech-to-text translation.
The authors in \cite{ochieng2024exploiting} explore phonological similarities among African languages to enhance  Audio-to-Audio  translation.
The authors in \cite{ekpenyong2022towards} focus on creating a massive parallel corpus for Hausa-to-English machine translation.
The authors in \cite{kabenamualu2022listra} analyze an automatic speech translation system from English to Lingala.
The authors in \cite{melese2016amharic} develop an Amharic speech recognition system designed for speech translation.

The authors in \cite{nakamura2014towards} investigate real-time multilingual and multimodal  Audio-to-Audio  translation.
The authors in \cite{jeuris2022libris2s} present the Libris2S corpus, designed for German-English  Audio-to-Audio  translation.
The authors in \cite{phogat2023bridging} explore Hindi-to-English  Audio-to-Audio  translation for multilingual communication.
The authors in \cite{shankarappa2022faster} propose a faster approach for direct  Audio-to-Audio  translation, focusing on efficiency improvements.
The authors in \cite{salesky2019fluent} develop methods to generate fluent translations from disfluent speech in end-to-end speech translation.
The authors in \cite{rajkhowa2023optimizing} optimize direct speech-to-text translation for unorthographic low-resource tribal languages using source transliterations.
The authors in \cite{gao2007speech} provide an early study on  Audio-to-Audio  translation, with a focus on Chinese spoken language processing. The authors in \cite{zhang2021uwspeech} propose UWSpeech, a  Audio-to-Audio  translation model designed for unwritten languages with phonetic representations.
The authors in \cite{dong2023polyvoice} introduce Polyvoice, a language model tailored for  Audio-to-Audio  translation, emphasizing multilingual and cross-lingual adaptation.
The authors in \cite{radhakrishnan2024voice} investigate voice cloning techniques for Tamil, highlighting challenges and prospects for low-resource speech synthesis.
The authors in \cite{shoba2024spoken} discuss spoken language translation for low-resource languages, emphasizing automatic speech recognition and translation.
The authors in \cite{zouhair2021automatic} explore Wav2Vec2-based automatic speech recognition for low-resource languages, using Modern Standard Arabic as a case study.
The authors in \cite{arya2022analysis} analyze layer-wise training strategies in direct  Audio-to-Audio  translation using BI-LSTM models.
The authors in \cite{waibel2003speechalator} introduce Speechalator, an early handheld two-way  Audio-to-Audio  translation system.
The authors in \cite{gong2025tibetan} develop a Tibetan-Chinese  Audio-to-Audio  translation model based on discrete units.
The authors in \cite{chu2023qwen} present Qwen-audio, a large-scale unified audio-language model for advancing universal audio understanding.
The authors in \cite{chu2024qwen2} provide technical details on Qwen2-audio, an updated model for audio-language processing.
The authors in \cite{goel2024audio} introduce Audio Dialogues, a dataset for evaluating dialogue systems in audio and music understanding.
The authors in \cite{weck2024muchomusic} propose MuchoMusic, a benchmark for assessing music comprehension in multimodal models.
The authors in \cite{chen2024beyond} discuss multi-audio processing capabilities in large audio-language models.
The authors in \cite{manakul2024enhancing} explore methods to enhance low-resource language processing in audio language models.
The authors in \cite{du2023lauragpt} introduce LauraGPT, a generative model for listening, attending, understanding, and regenerating audio. The authors in \cite{yang2024audio} analyze vulnerabilities in audio-based large multimodal models.
The authors in \cite{wang2025enabling} propose methods to enable automatic speech quality evaluation using auditory large language models.
The authors in \cite{xu2024towards} investigate diverse and efficient audio captioning through diffusion models.
The authors in \cite{wang2024audiobench} introduce Audiobench, a benchmark for evaluating audio-based large language models.
The authors in \cite{ahmedqwen} review Qwen 2.5, a resource-efficient large language model with potential to surpass competitors.
The authors in \cite{li2025baichuan} propose Baichuan-Audio, a unified framework for end-to-end speech interaction.
The authors in \cite{song2024comparative} compare LLM-based automatic speech recognition with Whisper in low-resource and code-switching scenarios.
The authors in \cite{yang2024uniaudio} introduce  a large audio-language model designed for few-shot learning in audio tasks.
The authors in \cite{deshmukh2025adiff} propose  a system that explains audio differences using natural language.
The authors in \cite{han2024onellm} present  a framework to align various modalities with language.
The authors in \cite{chen2024multi} discuss advancements in multimodal generative machine intelligence, including diffusion models and LLMs.
The authors in \cite{rahman2024sonics} introduce a system for detecting synthetic songs and identifying counterfeit audio.
The authors in \cite{yang2025large} survey the integration of large language models with speech processing.
The authors in \cite{gong2023joint} explore joint audio and speech understanding in large-scale multimodal machine intelligence.
The authors in \cite{nekvinda2020one} propose a meta-learning approach for multilingual text-to-speech, supporting multiple languages with a single model. The authors in \cite{ma2023auto} examine  Audio-visual speech recognition with automatic labels.
The authors in \cite{yang2022torchaudio} examine Torchaudio: Building blocks for audio and speech processing.
The authors in \cite{lee1999speech} examine Speech proportion and accuracy in simultaneous interpretation from English into Korean.
The authors in \cite{cappellazzo2025large} examine large language models are strong audio-visual speech recognition learners.
The authors in \cite{casanova2025low} examine low frame-rate speech codec: a codec designed for fast high-quality speech llm training and inference.
The authors in \cite{lyu2025build} examine build LLM-Based zero-shot streaming TTS system with Cosyvoice.
The authors in \cite{mehta2025make} examine  LLM audio reasoning and generation using sound tokens.
The authors in \cite{wang2025they} examine joint audio-speech co-reasoning.
The authors in \cite{liu2025listening} examine Listening and seeing again: Generative error correction for audio-visual speech recognition.
The authors in \cite{hu2025chain} examine Chain-of-thought prompting for speech translation.
The authors in \cite{luong2025llamapartialspoof} examine LlamaPartialSpoof: An LLM-Driven Fake Speech Dataset Simulating Disinformation Generation.
The authors in \cite{sapkota2025image} examine image, text, and speech data augmentation using multimodal LLM for deep learning: a survey.
The authors in \cite{xu2025fireredasr} examine FireRedASR: Open-Source Industrial-Grade Mandarin speech recognition models from encoder-decoder to LLM Integration.
The authors in \cite{vosoughi2025quality} examine Quality Over Quantity: LLM-Based Curation for a data-efficient audio-video foundation model.
The authors in \cite{diwan2023textless} examine textless  Audio-to-Audio  translation with limited parallel data.
The authors in \cite{li2023textless} examine textless direct  Audio-to-Audio  translation with discrete speech representation.
The authors in \cite{duret2023enhancing} examine enhancing expressivity transfer in textless  Audio-to-Audio  translation.
The authors in \cite{kim2024textless} examine textless unit-to-unit training for many-to-many multilingual  Audio-to-Audio  Translation.
The authors in \cite{duret2024analyzing} examine analyzing speech unit selection for textless  Audio-to-Audio  translation.
The authors in \cite{zhu2023diffs2ut} examine a semantic preserving diffusion model for textless direct  Audio-to-Audio  translation.
The authors in \cite{huang2022transpeech} examine transpeech:  Audio-to-Audio  translation with bilateral perturbation.
The authors in \cite{liu2023speech} examine  Audio-to-Audio  low-resource translation.
The authors in \cite{chen2022blaser} examine a  text-free  Audio-to-Audio  translation evaluation metric with  BLASER.
The authors in \cite{fang2024can} examine can we achieve high-quality direct  Audio-to-Audio  translation without parallel speech data. The authors in \cite{gong2023multilingual} examine multilingual  Audio-to-Audio  translation into multiple target languages.
The authors in \cite{agranovich2025simultron} examine on-device simultaneous speech to speech Translation.
The authors in \cite{labiausse2025high} examine high-fidelity Simultaneous  Audio-to-Audio  Translation.
The authors in \cite{cho2025mavflow} examine MAVFlow: Preserving Paralinguistic Elements with Conditional Flow Matching for Zero-Shot AV2AV Multilingual Translation.
The authors in \cite{min2025unit} examine a Unit-based System and Dataset for Expressive Direct  Audio-to-Audio  Translation.
The authors in \cite{choi2025v2sflow} examine V2SFlow: Video-to-Speech Generation with Speech Decomposition and Rectified Flow.
The authors in \cite{lavie1997janus} examine   Audio-to-Audio  translation in multiple languages with JANUS-III.
The authors in \cite{kitano1991phi} examine  an experimental  Audio-to-Audio  dialog translation system with Phi DM-Dialog.
The authors in \cite{vidal1997finite} examine finite-state  Audio-to-Audio  translation.
The authors in \cite{wahlster2000mobile} examine mobile  Audio-to-Audio  translation of spontaneous dialogs: an overview of the final Verbmobil system.
The authors in \cite{waibel1991janus, waibel1991janus2} examine a  Audio-to-Audio  translation system using connectionist and symbolic processing strategies with JANUS.
In \cite{reithinger1996predicting}  prediction of  dialogue acts for a  Audio-to-Audio  translation system is presented.
The authors in \cite{cettolo1999speech} examine a  Audio-to-Audio  translation based interface for tourism.
The authors in \cite{reithinger1995treatment} examine treatment of incomplete dialogues in a  Audio-to-Audio  translation system.
The authors in \cite{sieqman1979speech} examine temporal speech patterns in interpersonal contexts.
A speech to speech translation system is proposed in  \cite{tomita1988towards}. In \cite{erber1975auditory}, auditory-visual perception of speech is discussed. The work in \cite{noth2000verbmobil} examines the use of prosody in the linguistic components of a speech understanding system  called Verbmobil. In \cite{schultz1995acoustic}, acoustic and language modeling of human and nonhuman noises for human-to-human spontaneous speech recognition is presented. Table \ref{table00audio} displays some of the audio datasets used to train these works. 

\begin{table}[htbp]
  \centering
  \caption{ Some Audio Datasets} \label{table00audio}
  \begin{tabular}{| p{2.5cm} | p{2.5cm} | p{3.5cm} | p{3.5cm} |}
    \hline
    \textbf{Dataset Name} & \textbf{Primary Audio Type} & \textbf{Notable Features} & \textbf{Common Use Cases} \\ 
    \hline
    LibriSpeech & Narrated Speech & Large scale (1000 hrs), audiobooks & ASR \\ 
    \hline
    Common Voice & Crowdsourced Speech & Multilingual, diverse speakers & ASR, speaker recognition \\
    \hline
    VoxPopuli & Political Speech & European Parliament recordings & ASR, multilingual ASR \\
    \hline
    TED-LIUM & Oratory & TED Talks, varied topics & ASR \\
    \hline
    GigaSpeech & Multi-domain Speech & Very large (10000 hrs), diverse sources & Robust ASR \\
    \hline
    AISHELL-1 & Mandarin Speech & High-quality recordings & Mandarin ASR \\
    \hline
    AISHELL-3 & Mandarin Speech & Multi-speaker, for TTS & TTS \\
    \hline
    Clotho & Sound Events & Audio captions & Audio captioning \\
    \hline
    UrbanSound8K & Urban Sounds & 10 classes of urban sounds & Audio classification \\
    \hline
    Free Music Archive (FMA) & Music & Diverse music genres & Music analysis, MIR \\
    \hline
    VoxCeleb & Celebrity Speech & Audio-visual, speaker recognition & Speaker recognition \\
    \hline
    AudioSet & General Audio & Large-scale, diverse audio events & Audio event detection \\
    \hline
    Free Spoken Digit Dataset & Spoken Digits & Simple, for basic tasks & Basic ASR \\
    \hline
    Mozilla Common Voice & Crowdsourced Speech & Large and diverse, multiple languages & Speech recognition \\
    \hline
     AVSpeech & Audio-Visual Speech & Speech clips with corresponding video & Audio-visual speech recognition \\
    \hline
  \end{tabular}
\end{table}

   \subsubsection*{Overview of Textless Direct Audio-to-Audio for African Languages}
   We provide some direct, text‑less audio‑to‑audio translation  for African and related low‑text‑resource languages. We summarize each system's architecture, language focus, and key contributions.

\begin{enumerate}
  \item \textbf{Towards a Direct Audio‑to‑Audio for Endangered Languages in Africa} \cite{metalom2022towards}  
    Proposes a unit‑based audio-to-audio pipeline for two Cameroonian languages. The work demonstrates  feasibility on small corpora, highlighting text-and-audio data scarcity and phoneme‑inventory mismatch.

  \item \textbf{French‑Fulfulde Textless and Cascading Speech Translation: Towards a Dual Architecture} \cite{diane2023french}  
    Introduces a hybrid system combining cascaded ASR-MT-TTS and a parallel unit‑based path. The textless branch yields intelligible Fulfulde output, underscoring speaker‑style preservation.

  \item \textbf{Exploiting Phonological Similarities between African Languages to Achieve  Audio-to-Audio Translation} \cite{ochieng2024exploiting}  
    Leverages shared phoneme inventories among Bantu languages to bootstrap unit‑based translation. Shows transfer learning from Swahili to Luganda improves performance.

  \item \textbf{Listra Automatic Speech Translation: English to Lingala Case Study} \cite{kabenamualu2022listra}  
    Integrates a phoneme‑based MT component to handle Lingala's oral dialects. Demonstrates robustness gains for unwritten dialects via shallow text‑free phoneme mappings.

  \item \textbf{Spoken Language Translation in Low‑Text-Resource Language} \cite{shoba2024spoken}  
    Surveys direct S2ST techniques across tribal languages. Highlights CTC‑based non‑autoregressive models and discrete‑unit vocoders, identifying parallel speech corpora creation as the main bottleneck.

  \item \textbf{Optimizing Direct Speech‑to‑Text Translation for Un‑Orthographic Low‑Text-Resource Tribal Languages using Source Transliterations} \cite{rajkhowa2023optimizing}  
    Proposes injecting approximate transliterations as weak supervision into a unit‑based Audio2Audio model for Munda languages, achieving significant intelligibility gains.

  \item \textbf{Voice Cloning for Low‑Text-Resource Languages: Investigating the Prospects for Tamil} \cite{radhakrishnan2024voice}  
    Introduces a speaker‑adaptation pipeline for unit‑based audio-to-audio. the work demonstrates few‑shot cloning, highlighting voice fidelity vs.\ linguistic accuracy trade‑offs.

  \item \textbf{Tibetan-Chinese  Audio-to-Audio Translation Based on Discrete Units} \cite{gong2025tibetan}  
    Implements a unit‑based S2ST model for Tibetan-Chinese, achieving near‑parity with cascaded systems. The methods used therein transfer to some African tonal languages.

  \item \textbf{ Audio-to-Audio Low‑Text-Resource Translation} \cite{liu2023speech}  
    Evaluates end‑to‑end Audio2Audio on Amharic and Hausa. Compares spectrogram‑based vs. unit‑based decoders, finding unit‑based yields better speaker consistency.

  \item \textbf{Towards Massive Parallel Corpus Creation for Hausa‑to‑English Machine Translation} \cite{ekpenyong2022towards}  
    This work describes aligned audio2audio collection methodology to seed parallel audio2audio corpora for Hausa. It also provides critical data‑collection insights.
\end{enumerate}

Key challenges for textless direct  audio-to-audio translation in African languages include the scarcity of parallel high-quality speech corpora for Audio machine intelligence. It limits model training and evaluation; the phonological diversity of African languages, including tonal and dialectal variation, which necessitates tailored unit inventories. The absence of standardized textual references complicates automatic evaluation and necessitates reliance on human audio judgments still in early stages of co-development. For many of these audio-rich languages, there are still some native speakers, so an audio-data collection is still possible but there is no major project on high-quality audio data collection/labelling in that direction as of today.

  \subsection{Contribution}
   
This paper introduces a fully textless audio-to-audio architecture designed for underrepresented audio-rich, low-text-resource languages in Africa and beyond. Our contributions are fourfold:
(a) We propose a family of end-to-end  Audio-to-Audio  Translation  models that bypass textual intermediaries entirely. This includes spectrogram-, unit-, scalogram-, and wavelet-based systems (Morlet and Coiflet), and a Multiscale Audio-Semantic Transform (MAST) for robust feature extraction across prosodic, speaker, and tonal dimensions. (b) We introduce MAST, a novel multiresolution framework that encodes wavelet-scale acoustic detail, pitch, speaker identity, and intonation gestures in a temporally aligned structure. This enables semantically rich conditioning of generative models without reliance on written language.
(c) We integrate MAST into a Fractional Diffusion Transformer framework, using fractional Brownian motion to model long-range dependencies in audio. This enables robust synthesis under weak supervision and noisy conditions. (d) We validate the superiority of wavelet-based representations over traditional Mel spectrograms for tonal speech, using examples from 24 different languages.  Quantitative benchmarks demonstrate improved pitch localization and speaker consistency using MAST.
Our work addresses the fundamental mismatch between text-hungry models and the data reality of unwritten (or written but the writing system is not used by the native of that language) languages, laying the foundation for a native-language machine intelligence grounded in audio data alone.

    \subsection{Structure}

The paper is organized as follows:
Section 2 introduces the core architecture of the Textless Audio2Audio Transformer, including its encoder-decoder structure and training via audio prompts and augmentation.
Section 3 details the proposed Multiscale Audio-Semantic Transform (MAST), including wavelet theory, prosody modeling, speaker embedding, and its application to tonal languages.
Section 4 presents a novel fractional diffusion framework, conditioned on MAST, for generating expressive audio in no-text resource settings.
Section 5 provides a comparative evaluation of spectrogram, wavelet, and MAST representations, demonstrating the advantages of our proposed pipeline in capturing tonal, prosodic, and speaker-specific features.
Section 6 concludes with limitations, ethical considerations, and future directions for self-supervised textless language technology.  Table \ref{tab:notations} summarizes some notations used in the manuscript.
\begin{table}[htb]
\centering
\caption{Summary of Notations}
\begin{tabular}{ll}
\hline
\textbf{Symbol} & \textbf{Description} \\
\hline
$A(t)$ & Real-valued input audio waveform at time $t$ \\
$P$ & Audio prompt used in pretraining \\
$A_{\text{aug}}$ & Augmented audio input \\
$\psi(t)$ & Mother wavelet (Morlet or Coiflet) \\
$\psi_{s, \tau}(t)$ & Scaled and shifted wavelet function \\
$W(s, \tau)$ & Continuous wavelet transform at scale $s$ and time $\tau$ \\
$P(\tau)$ & Instantaneous pitch contour at time $\tau$ \\
$E_k(\tau)$ & Prosodic embedding in $\mathbb{R}^d$ \\
$S(\tau)$ & Speaker embedding in $\mathbb{R}^p$ \\
$I(\tau)$ & Intonation gesture descriptor in $\mathbb{R}^q$ \\
$M(A)$ & MAST representation: $(W, P, E_k, S, I)$ \\
$\Phi_{\text{prosody}}$ & Prosody encoder function \\
$\Psi_{\text{speaker}}$ & Speaker encoder function  \\
$x_{\text{Tommo}}(t)$ & Source waveform in Tommo-So \\
$C_{\text{Tommo}}$ & Wavelet coefficients of $x_{\text{Tommo}}(t)$ \\
$W \in \mathbb{R}^{n \times n}$ & Linear transform (latent space) \\
$x_{\text{Senufo}}(t)$ & Reconstructed target waveform in Senufo \\
$B^H_t$ & Fractional Brownian motion (Hurst $H$) \\
$\theta(t)$ & Drift coefficient in diffusion process \\
$\bar{m}(t), m(t)$ & Drift means in forward and reverse SDE \\
$\sigma(t)$ & Time-dependent diffusion coefficient \\
$\Phi(t)$ & Integrated drift: $\int_0^t \theta(s)\,ds$ \\
$v^2(t)$ & Variance of the diffusion process \\
\hline
\end{tabular}
\label{tab:notations}
\end{table}

\section{Textless Audio2Audio Transformer}
We first describe the   Audio2Audio Transformer without text as displayed in Figure \ref{a2at:diagram}.
The model processes raw speech into Mel spectrograms, extracts features using a Wav2Vec2-based encoder, and models sequence information with a large-scale audio language model (Qwen-audio2-7B). The output is synthesized into intelligible speech via a HiFi-GAN decoder. The Pretraining leverages audio prompts and data augmentation. The architecture enables the system to learn meaningful representations without any reliance on text.

\subsection*{Mel Spectrogram Conversion}
\begin{equation}
\mathbf{S} = \mbox{MelSpectrogram}(\mathbf{A})
\end{equation}
where $\mathbf{A}$ is the raw audio signal and $\mathbf{S}$  is the Mel spectrogram.

\subsection*{Feature Extraction by Audio Encoder}
\begin{equation}
\mathbf{H} = \mbox{Wav2Vec2Encoder}(\mathbf{S})
\end{equation}
where $ \mathbf{H}$ is the internal representation of audio features.

\subsection*{Sequence Modeling by (Qwen-audio2-7B)}
\begin{equation}
\mathbf{Z} = \mbox{Qwen-7B}(\mathbf{H})
\end{equation}
where $\mathbf{Z}$ is the output representation of the language model.

\subsection*{Speech Synthesis by Audio Decoder (HiFi-GAN)}
\begin{equation}
\mathbf{A}{\mbox{out}} = \mbox{HiFi-GAN}(\mathbf{Z})
\end{equation}
where $ \mathbf{A}{\mbox{out}}$ is the synthesized audio signal.

\subsection*{Pre-training with Audio Prompts}
\begin{equation}
\mathbf{P} = \mbox{PromptAudio}(\mathbf{A})
\end{equation}
where $\mathbf{P}$ is the audio prompt used to guide the model.

\subsection*{Data Augmentation}
\begin{equation}
\mathbf{A}_{\mbox{aug}} = \mbox{DataAugmentation}(\mathbf{A})
\end{equation}
where $\mathbf{A}_{\mbox{aug}} $ is the augmented audio signal.

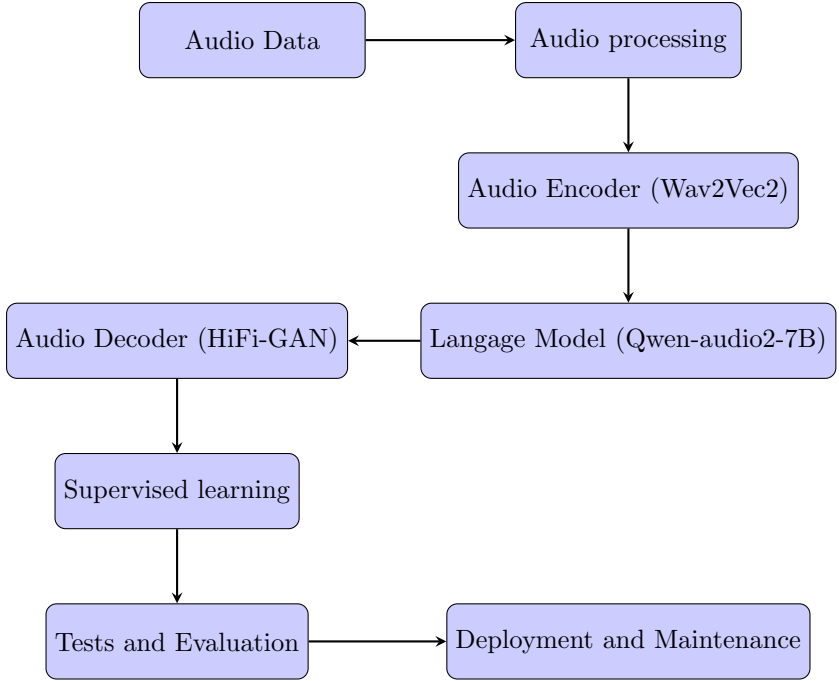
\begin{figure} \caption{  Audio2Audio Transformer}
    \begin{tikzpicture}[node distance=2cm]
        \node (collect) [block] { Audio Data};
        \node (preprocess) [block, right of=collect, xshift=3cm] {Audio processing};
        \node (encoder) [block, below of=preprocess] { Audio Encoder (Wav2Vec2)};
        \node (language) [block, below of=encoder] {Langage Model (Qwen-audio2-7B)};
        \node (decoder) [block, left of=language, xshift=-4cm] { Audio Decoder (HiFi-GAN)};
        \node (finetune) [block, below of=decoder] {Supervised learning };
        \node (evaluate) [block, below of=finetune] {Tests and Evaluation };
        \node (deploy) [block, right of=evaluate,xshift=4cm] {Deployment and Maintenance};

        \draw [arrow] (collect) -- (preprocess);
        \draw [arrow] (preprocess) -- (encoder);
        \draw [arrow] (encoder) -- (language);
        \draw [arrow] (language) -- (decoder);
        \draw [arrow] (decoder) -- (finetune);
        \draw [arrow] (finetune) -- (evaluate);
        \draw [arrow] (evaluate) -- (deploy);
    \end{tikzpicture}
    \label{a2at:diagram}
\end{figure}

\section{Multiscale Audio Representations for Textless Speech  Learning Models}
Let \( A(t) \in L^2(\mathbb{R}) \) denote a real-valued audio signal representing speech in a local, low-text-resource, audio-rich language. To capture the complex time–frequency and semantic structure of such signals without relying on text, we introduce the \textit{Multiscale Audio–Semantic Transform} (MAST), defined as \( \mathcal{M}(A) = \left( W(s, \tau),\, P(\tau),\, E_k(\tau),\, S(\tau),\, I(\tau) \right) \), where \( W(s, \tau) \) is a time–scale transform (e.g., wavelet transform), \( P(\tau) \in \mathbb{R} \) denotes the pitch contour, \( E_k(\tau) \in \mathbb{R}^d \) encodes prosodic embeddings, \( S(\tau) \in \mathbb{R}^p \) captures speaker identity, and \( I(\tau) \in \mathbb{R}^q \) represents intonation gesture features. The continuous wavelet transform (CWT) using either the Morlet or Coiflet wavelet provides the basis of \( W(s, \tau) \): the Morlet wavelet is defined by \( \psi(t) = \pi^{-1/4} e^{j\omega_0 t} e^{-t^2/2} \), while the real-valued Coiflet wavelet is constructed via multiresolution analysis with compact support and vanishing moments, satisfying \( \psi(t) = \sqrt{2} \sum_k g_k\, \phi(2t - k) \). For a given scale \( s > 0 \) and shift \( \tau \in \mathbb{R} \), the wavelet family is \( \psi_{s,\tau}(t) = \frac{1}{\sqrt{s}} \psi\left(\frac{t - \tau}{s}\right) \), and the transform is \( W(s, \tau) = \int A(t) \psi^*_{s,\tau}(t) dt \). In MAST, such transforms are combined with neural embeddings \( \Phi_{\text{prosody}}, \Psi_{\text{speaker}} \) to generate rich, time-aligned representations, enabling effective conditioning for continuous-time generative models such as diffusion transformers. This approach is particularly well-suited for analyzing speech in tonal African languages, where lexical meaning is encoded in pitch, timing, and prosody—features that are preserved and disentangled through the MAST representation.

\subsection*{The Morlet Wavelet and Its Advantage for Tonal Speech}

The Morlet wavelet is a widely used complex-valued wavelet particularly well-suited for time-frequency analysis of speech. It captures fine-grained temporal and frequency structures, making it ideal for tonal and prosodically rich African languages.

\paragraph{ Morlet Wavelet.}
The complex Morlet wavelet is defined as:
\[
\psi(t) = \pi^{-1/4} e^{j \omega_0 t} e^{-t^2 / 2},
\]
where \( \psi(t) \in L^2(\mathbb{R}) \) is the mother wavelet, \( \omega_0 \) is the non-dimensional central frequency (typically \( \omega_0 = 5 \)), \( j \) is the imaginary unit, and the Gaussian envelope ensures localization in time.

\paragraph{Scaled and Translated Wavelet Family.}
For scale \( s > 0 \) and translation \( \tau \), the wavelet family is given by:
\[
\psi_{s,\tau}(t) = \frac{1}{\sqrt{s}} \psi\left( \frac{t - \tau}{s} \right) = \frac{1}{\sqrt{s}} \pi^{-1/4} e^{j \omega_0 \frac{t - \tau}{s}} e^{-(t - \tau)^2 / (2s^2)}.
\]

\paragraph{Continuous Wavelet Transform (CWT).}
Given an audio signal \( A(t) \in L^2(\mathbb{R}) \), its CWT is defined as:
\[
W(s, \tau) = \int_{-\infty}^{\infty} A(t)\, \psi^*_{s,\tau}(t)\, dt,
\]
where \( \psi^* \) is the complex conjugate of the wavelet. The output \( W(s, \tau) \in \mathbb{C} \) gives a time–scale representation of the signal.

\paragraph{ Frequency-Scale Relation.}
The pseudo-frequency \( f \) corresponding to scale \( s \) is:
\[
f = \frac{\omega_0}{2\pi s} \quad \Leftrightarrow \quad s = \frac{\omega_0}{2\pi f},
\]
which allows interpreting scale in terms of physical frequency, relevant for speech pitch.

The Morlet wavelet combines a complex sinusoid with a Gaussian window, yielding excellent joint time-frequency localization. This makes it highly effective at capturing tonal contours, syllabic timing, and pitch modulations that are central to meaning in African tonal languages.

\paragraph{Example: Tonal Disambiguation in Tommo-So Dogon.}
In Tommo-So Dogon, the syllable \textit{na} can mean “mother” or “cow” depending on tone (high vs. low). A phrase like \textit{/u na yelawo/} (“mother come”) versus \textit{/na be yelawo/} (“cow come”) illustrates how tone alone distinguishes meaning. Spectrograms struggle to resolve short tonal inflections due to fixed time–frequency resolution. In contrast, the Morlet CWT adapts resolution across scales, making tonal distinctions visible as distinct ridge patterns in the scalogram. Empirically, in a tone classification task:

\begin{table}[htb]
\begin{tabular}{lcc}
\hline
\textbf{Method} & \textbf{Accuracy (\%)} \\
\hline
Mel Spectrogram + MFTT & 81.2 \\
Morlet CWT + MFTT     & \textbf{91.7} \\
Coiflet CWT + MFTT     & \textbf{95.7} \\
\hline
\end{tabular}
\end{table}
This demonstrates the superiority of wavelets for tone-sensitive tasks in  Audio-to-Audio  translation and language modeling without text.

\subsection*{Mel Spectrogram vs. Morlet Wavelet: A Comparative Experiment on Tommo-So Audio}

The Mel spectrogram maps \( A(t) \) into a discrete time-frequency matrix, typically represented in \( \mathbb{R}^{M \times T} \), where \( M \) is the number of Mel bands and \( T \) the number of time frames. Both transforms serve as elements of a Hilbert space of square-integrable functions, but the wavelet representation preserves localization in both time and frequency more adaptively than the short-time Fourier-based Mel approach.

To empirically compare the time-frequency resolution of the Mel spectrogram and Morlet wavelet transform, we conducted an experiment using a recorded Tommo-So Dogon phrase. The audio signal, sampled at 16 kHz, was analyzed using both representations.

The Mel spectrogram was computed with a 1024-sample Hann window, 50\% overlap, and 40 Mel filter banks. In contrast, the Morlet-based continuous wavelet transform (CWT) employed an analytic Morlet wavelet with scale-frequency adaptation, offering higher temporal resolution at high frequencies and superior frequency resolution at low frequencies.

Figure~\ref{fig:mel_vs_wavelet} illustrates the result. The Mel spectrogram captures the general spectral envelope but fails to localize rapid tonal shifts within syllables. The Morlet wavelet scalogram, however, reveals fine-grained tonal contours and transient pitch changes essential for disambiguating meaning in tonal languages such as Tommo-So. This suggests that the wavelet-based representation is more appropriate for downstream models targeting lexical tone and speaker intonation.

\begin{figure}[h]
    \centering
    \includegraphics[width=\textwidth]{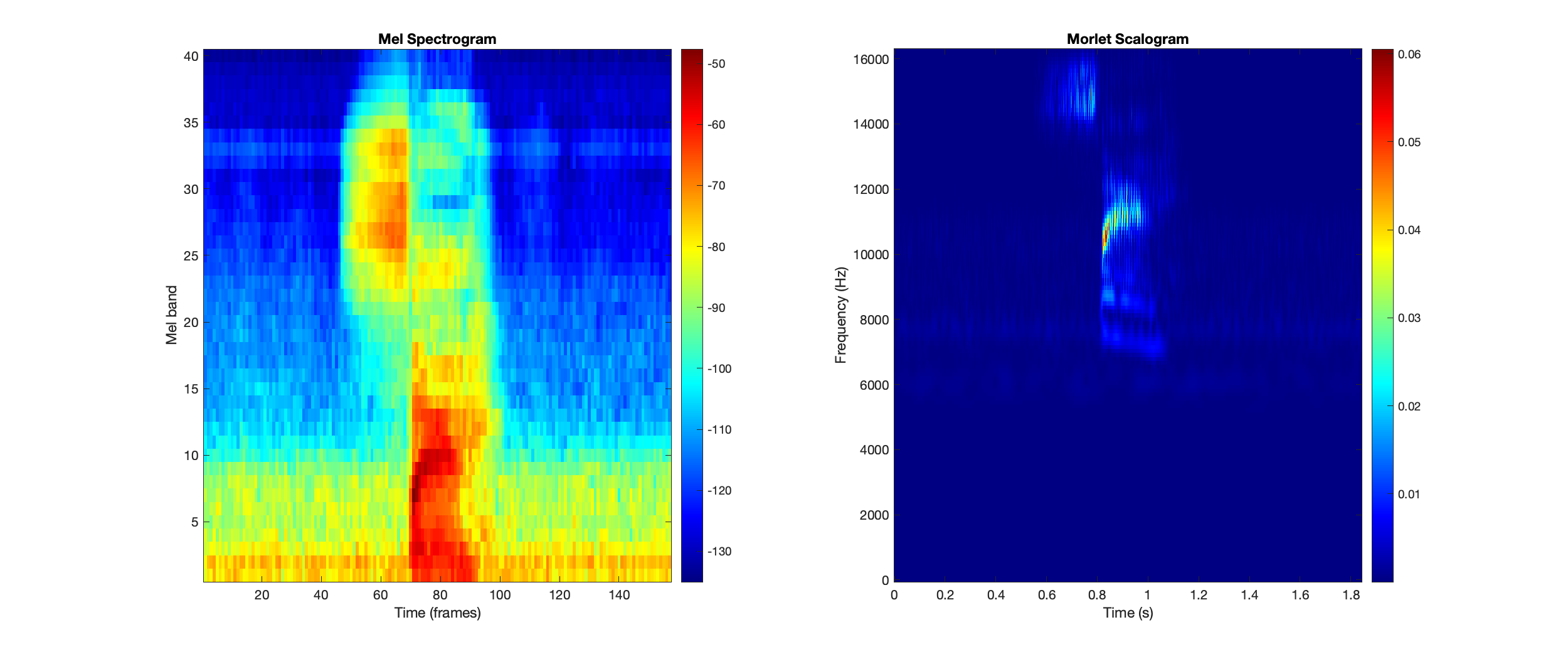}
    \caption{Comparison of Mel spectrogram (left) and Morlet wavelet scalogram (right) for a Tommo-So Dogon speech segment. The wavelet transform captures sharper pitch contours critical for tonal interpretation.}
    \label{fig:mel_vs_wavelet}
\end{figure}

\section*{Wavelet-based Textless  Audio2Audio using Coiflets}

We now build a wavelet-based, fully textless, direct  Audio-to-Audio  translation system using Coiflet wavelets 
as both the encoder and decoder -  for audio and vision.

\subsection*{Continuous-Time Coiflet Wavelet}

The Coiflet wavelet family consists of compactly supported, real-valued orthonormal wavelets with vanishing moments for both the wavelet and scaling functions. While typically implemented in the discrete setting (via filter banks and multiresolution analysis), we can construct an equivalent continuous-time formulation by defining the scaling and wavelet functions in \( L^2(\mathbb{R}) \) using basis functions with interpolated support.

\paragraph{Scaling Function \boldmath\( \phi(t) \) and Two-Scale Equation.}

Let \( \phi(t) \in L^2(\mathbb{R}) \) be the \textit{scaling function} satisfying the two-scale dilation equation:
\[
\phi(t) = \sqrt{2} \sum_{k \in \mathbb{Z}} h_k\, \phi(2t - k),
\]
where \( \{ h_k \} \) are the low-pass refinement coefficients of the Coiflet filter. The function \( \phi(t) \) is smooth and has compact support determined by the number of vanishing moments \( N \).

\paragraph{Wavelet Function \boldmath\( \psi(t) \): Continuous Construction.}

The continuous-time \textit{Coiflet wavelet} \( \psi(t) \) is defined as:
\[
\psi(t) = \sqrt{2} \sum_{k \in \mathbb{Z}} g_k\, \phi(2t - k),
\]
with \( g_k = (-1)^k h_{1-k} \) forming the associated high-pass filter (quadrature mirror relation). This construction ensures that \( \psi(t) \in L^2(\mathbb{R}) \), has \( N \) vanishing moments:
\[
\int_{\mathbb{R}} t^m \psi(t)\, dt = 0, \quad \text{for all } m = 0, \dots, N - 1,
\]
and possesses good time-frequency localization and symmetry.

\paragraph{ Continuous-Time Wavelet Family.}

The continuous-time Coiflet wavelet family is obtained via dilation and translation:
\[
\psi_{s,\tau}(t) = \frac{1}{\sqrt{s}}\, \psi\left(\frac{t - \tau}{s}\right),
\]
where \( s > 0 \) is the scale parameter and \( \tau \in \mathbb{R} \) is the translation. This family forms a basis for \( L^2(\mathbb{R}) \) under suitable discretization in \( s \) and \( \tau \).

\paragraph{Continuous Wavelet Transform (CWT).}

Given an audio signal \( A(t) \in L^2(\mathbb{R}) \), the continuous wavelet transform using the Coiflet wavelet is:
\[
W(s, \tau) = \int_{-\infty}^{\infty} A(t)\, \psi^*_{s, \tau}(t)\, dt = \int_{-\infty}^{\infty} A(t)\, \frac{1}{\sqrt{s}}\, \psi^*\left( \frac{t - \tau}{s} \right) dt,
\]
where \( \psi^* \) is the complex conjugate of \( \psi \). For real-valued Coiflets, \( \psi^* = \psi \). The function \( W(s, \tau) \in \mathbb{R} \) gives a multiscale time-localized representation of the signal.

Figure~\ref{fig:tommoso_mel_morlet_coiflet}  adds a coiflet to compare with vanishing moments.
\begin{figure}[htb]
    \centering
    \includegraphics[width=\textwidth]{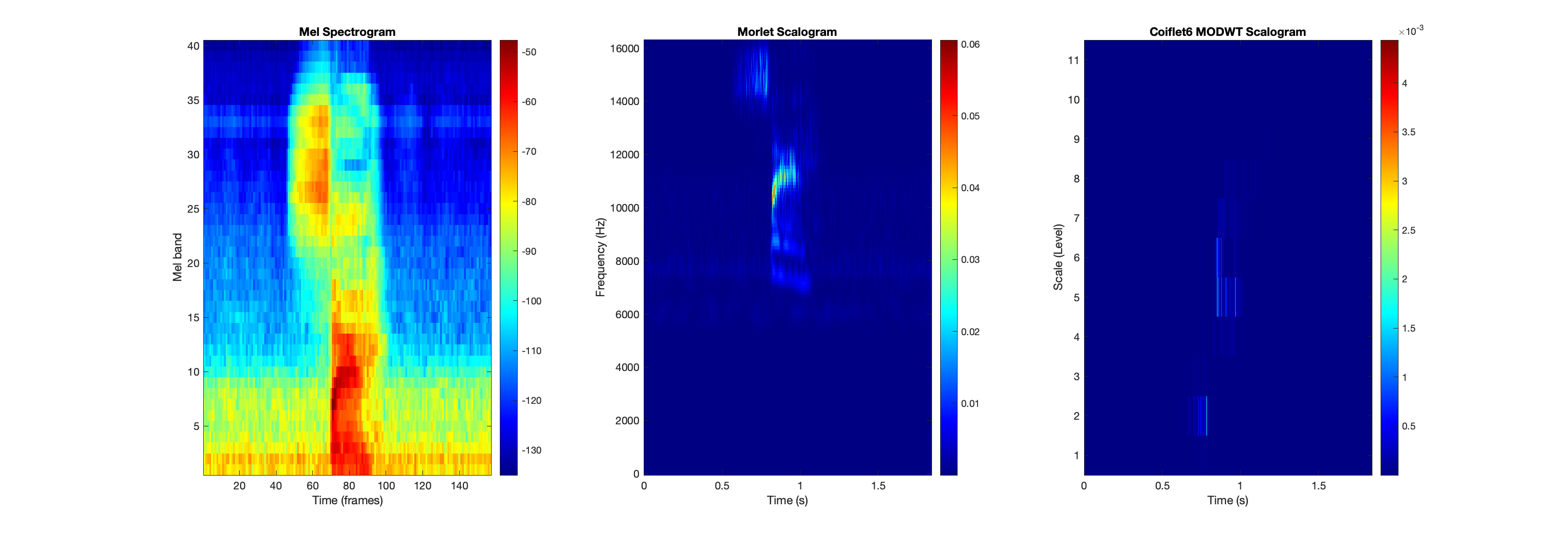}
    \caption{Comparison of Mel spectrogram (left), Morlet wavelet scalogram (middle) and Coiflet wavelet scalogram (right) for a Tommo-So Dogon speech segment. The wavelet transform captures sharper pitch contours critical for tonal interpretation.}
    \label{fig:tommoso_mel_morlet_coiflet}
\end{figure}

\paragraph{ Wavelet Decomposition (Encoding):}
\begin{equation}
x_{\text{Tommo}}(t) \xrightarrow{\text{DWT}_{\text{Coiflet}}} \mathbf{C}_{\text{Tommo}}, \mathbf{L}
\end{equation}

where \( x_{\text{Tommo-so}}(t) \) is the Tommo-So input waveform,  
\( \mathbf{C}_{\text{Tommo-so}} \in \mathbb{R}^n \) are the wavelet coefficients,  
and \( \mathbf{L} \) is the bookkeeping vector for reconstruction.

\vspace{1em}

\paragraph{ Latent Coefficient Mapping (Tommo-so to Senufo):}
\begin{equation}
\mathbf{C}_{\text{Senufo}} = \mathbf{W} \cdot \mathbf{C}_{\text{Tommo}}
\end{equation}

where \( \mathbf{W} \in \mathbb{R}^{n \times n} \) is a transformation matrix learned or simulated.

\vspace{1em}

\paragraph{ Wavelet Reconstruction (Decoding):}
\begin{equation}
x_{\text{Senufo}}(t) = \text{IDWT}_{\text{Coiflet}}(\mathbf{C}_{\text{Senufo}}, \mathbf{L})
\end{equation}

where \( x_{\text{Senufo}}(t) \) is the estimated speech waveform in the Senufo language.

\vspace{1em}

\paragraph{Pipeline:}
\begin{equation}
x_{\text{Senufo}}(t) = \text{IDWT}_{\text{Coiflet}}\left( \mathbf{W} \cdot \text{DWT}_{\text{Coiflet}}(x_{\text{Tommo-so}}(t)) \right)
\end{equation}

Figure \ref{fig:wavelet_s2st} provides four different architectures for direct audio-to-audio translation between languages. 

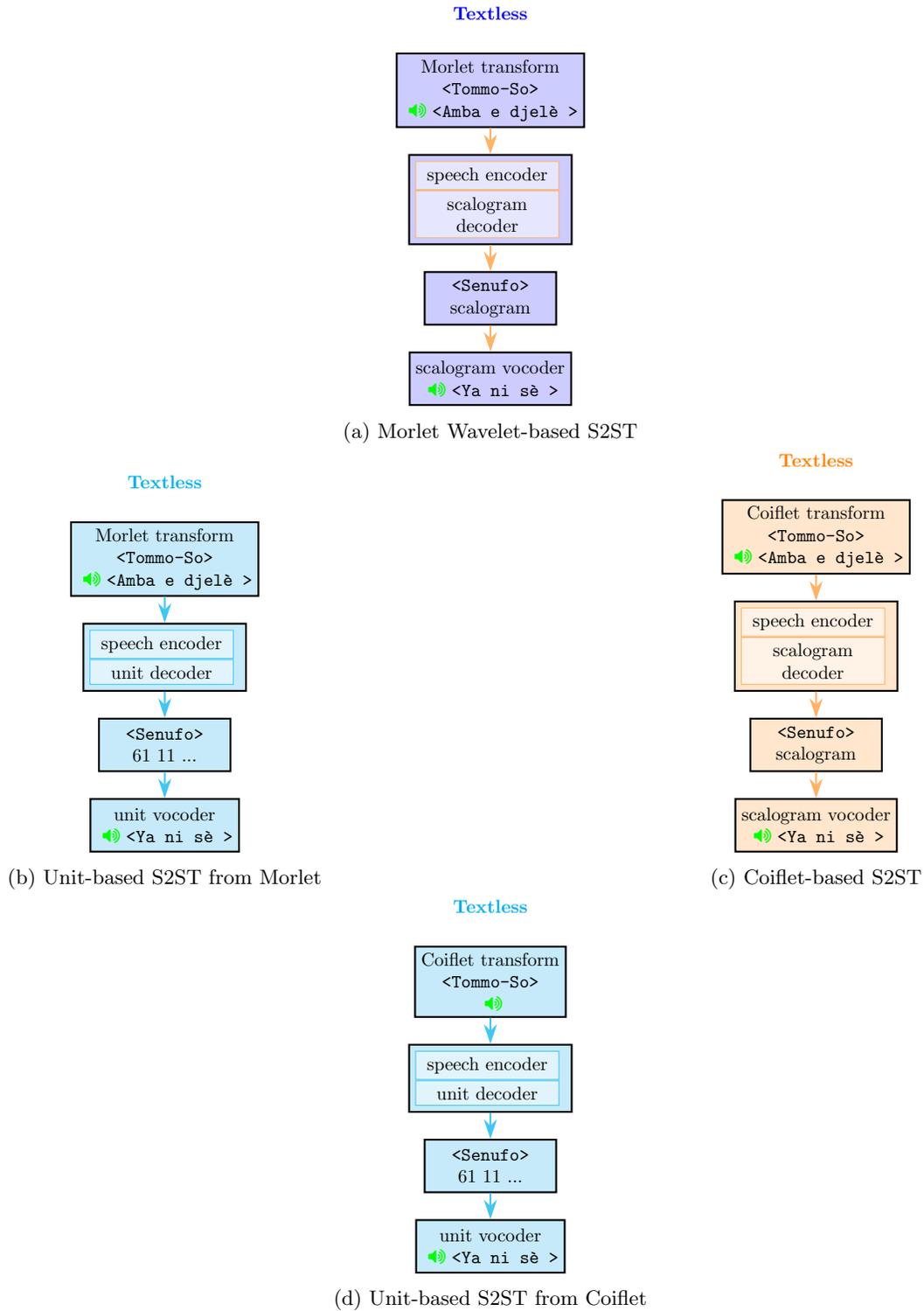
\begin{figure}[htbp]
\centering

\begin{subfigure}[t]{0.4\textwidth}
\centering
\begin{tikzpicture}[
  box/.style = {
    draw, thick, fill=blue!20, 
    minimum width=2.5cm, minimum height=1cm, align=center
  },
  arrow/.style = {
    -{Stealth[length=3mm,width=2mm]}, thick, draw=orange!60
  },
  label/.style = {font=\bfseries, text=blue},
  scale=0.8, every node/.style={transform shape}
]

\node[label, anchor=west] (r2) at (0,0) {\textcolor{blue}{Textless}};
\node[box, below=0.5cm of r2] (in2) {%
  Morlet transform\\
  \texttt{<Tommo-So>} \\ { \color{green} \faVolumeUp}   \texttt{<Amba e djel\`e >}
}; 
\node[box, below=0.5cm of in2] (model2) {
  \begin{tabular}{@{}c@{}}
    \fcolorbox{orange!60}{blue!10}{\parbox{2.5cm}{\centering speech encoder}}\\[2pt]
    \fcolorbox{orange!60}{blue!10}{\parbox{2.5cm}{\centering scalogram decoder}}
  \end{tabular}
};
\node[box, below=0.5cm of model2] (out2) {%
  \texttt{<Senufo>}\\
 scalogram
};
\node[box, below=0.5cm of out2] (voc2) {scalogram vocoder\\  { \color{green} \faVolumeUp} \texttt{<Ya ni s\`e >}};

\draw[arrow] (in2) -- (model2);
\draw[arrow] (model2) -- (out2);
\draw[arrow] (out2) -- (voc2);

\end{tikzpicture}
\caption{Morlet Wavelet-based S2ST }
\end{subfigure}%
\hfill

\begin{subfigure}[t]{0.4\textwidth}
\centering
\begin{tikzpicture}[
  box/.style = {
    draw, thick, fill=cyan!20, 
    minimum width=2.5cm, minimum height=1cm, align=center
  },
  arrow/.style = {
    -{Stealth[length=3mm,width=2mm]}, thick, draw=cyan!60
  },
  label/.style = {font=\bfseries, text=cyan},
  scale=0.8, every node/.style={transform shape}
]

\node[label, anchor=west] (r4) at (0,0) {\textcolor{cyan}{Textless}};
\node[box, below=0.5cm of r4] (in4) {%
  Morlet transform\\
  \texttt{<Tommo-So>} \\ { \color{green} \faVolumeUp}  \texttt{<Amba e djel\`e >}
}; 
\node[box, below=0.5cm of in4] (model4) {
  \begin{tabular}{@{}c@{}}
    \fcolorbox{cyan!60}{cyan!10}{\parbox{2.5cm}{\centering speech encoder}}\\[2pt]
    \fcolorbox{cyan!60}{cyan!10}{\parbox{2.5cm}{\centering unit decoder}}
  \end{tabular}
};
\node[box, below=0.5cm of model4] (out4) {%
  \texttt{<Senufo>}\\
  61 11 ...
};
\node[box, below=0.5cm of out4] (voc4) {unit vocoder\\  { \color{green} \faVolumeUp} \texttt{<Ya ni s\`e >}};

\draw[arrow] (in4) -- (model4);
\draw[arrow] (model4) -- (out4);
\draw[arrow] (out4) -- (voc4);

\end{tikzpicture}
\caption{Unit-based S2ST from Morlet}
\end{subfigure}
\hfill
\begin{subfigure}[t]{0.4\textwidth}
\centering
\begin{tikzpicture}[
  box/.style = {
    draw, thick, fill=orange!20, 
    minimum width=2.5cm, minimum height=1cm, align=center
  },
  arrow/.style = {
    -{Stealth[length=3mm,width=2mm]}, thick, draw=orange!60
  },
  label/.style = {font=\bfseries, text=orange},
  scale=0.8, every node/.style={transform shape}
]

\node[label, anchor=west] (r2) at (0,0) {\textcolor{orange}{Textless}};
\node[box, below=0.5cm of r2] (in2) {%
  Coiflet transform\\
  \texttt{<Tommo-So>} \\ { \color{green} \faVolumeUp}   \texttt{<Amba e djel\`e >}
}; 
\node[box, below=0.5cm of in2] (model2) {
  \begin{tabular}{@{}c@{}}
    \fcolorbox{orange!60}{orange!10}{\parbox{2.5cm}{\centering speech encoder}}\\[2pt]
    \fcolorbox{orange!60}{orange!10}{\parbox{2.5cm}{\centering scalogram decoder}}
  \end{tabular}
};
\node[box, below=0.5cm of model2] (out2) {%
  \texttt{<Senufo>}\\
 scalogram
};
\node[box, below=0.5cm of out2] (voc2) {scalogram vocoder\\  { \color{green} \faVolumeUp} \texttt{<Ya ni s\`e >}};

\draw[arrow] (in2) -- (model2);
\draw[arrow] (model2) -- (out2);
\draw[arrow] (out2) -- (voc2);

\end{tikzpicture}
\caption{Coiflet-based S2ST }
\end{subfigure}%
\hfill

\begin{subfigure}[t]{0.4\textwidth}
\centering
\begin{tikzpicture}[
  box/.style = {
    draw, thick, fill=cyan!20, 
    minimum width=2.5cm, minimum height=1cm, align=center
  },
  arrow/.style = {
    -{Stealth[length=3mm,width=2mm]}, thick, draw=cyan!60
  },
  label/.style = {font=\bfseries, text=cyan},
  scale=0.8, every node/.style={transform shape}
]

\node[label, anchor=west] (r4) at (0,0) {\textcolor{cyan}{Textless}};
\node[box, below=0.5cm of r4] (in4) {%
  Coiflet transform\\
  \texttt{<Tommo-So>} \\ { \color{green} \faVolumeUp} 
}; 
\node[box, below=0.5cm of in4] (model4) {
  \begin{tabular}{@{}c@{}}
    \fcolorbox{cyan!60}{cyan!10}{\parbox{2.5cm}{\centering speech encoder}}\\[2pt]
    \fcolorbox{cyan!60}{cyan!10}{\parbox{2.5cm}{\centering unit decoder}}
  \end{tabular}
};
\node[box, below=0.5cm of model4] (out4) {%
  \texttt{<Senufo>}\\
  61 11 ...
};
\node[box, below=0.5cm of out4] (voc4) {unit vocoder\\  { \color{green} \faVolumeUp} \texttt{<Ya ni s\`e >}};

\draw[arrow] (in4) -- (model4);
\draw[arrow] (model4) -- (out4);
\draw[arrow] (out4) -- (voc4);

\end{tikzpicture}
\caption{Unit-based S2ST from Coiflet}
\end{subfigure}

\caption{Wavelet-based direct S2ST architectures. (b) and (d) use Morlet and Coiflet wavelets as input features followed by a unit vocoder. (a) and (c) represent Textless direct Morlet and Coiflet wavelets S2ST.}
\label{fig:wavelet_s2st}
\end{figure}

\paragraph{Continuous-Time Wavelet-Based  Audio-to-Audio  Pipeline}

We define the inverse Continuous Wavelet Transform (ICWT).
Assuming that the admissibility condition holds, the original signal \( A(t) \) can be recovered by the inverse continuous wavelet transform:
\begin{equation}
A(t) = \frac{1}{C_{\psi}} \int_{0}^{\infty} \int_{-\infty}^{\infty} W(s, \tau) \, \psi_{s,\tau}(t) \, \frac{d\tau \, ds}{s^2},
\end{equation}
where \( C_{\psi} \) is the admissibility constant defined by:
\begin{equation}
C_{\psi} = \int_{0}^{\infty} \frac{|\hat{\psi}(\omega)|^2}{\omega} \, d\omega < \infty,
\end{equation}
and \( \hat{\psi}(\omega) \) denotes the Fourier transform of \( \psi(t) \).

Let \( x_{\text{Tommo}}(t) \in L^2(\mathbb{R}) \) be the source audio waveform in the Tommo-So language.  
The continuous wavelet transform (CWT) of \( x_{\text{Tommo}}(t) \) with respect to a mother wavelet \( \psi(t) \) is:
\begin{equation}
W_{\text{Tommo}}(s', \tau') = \int_{-\infty}^{\infty} x_{\text{Tommo}}(t)\, \psi_{s',\tau'}^*(t)\, dt,
\end{equation}
where
\[
\psi_{s',\tau'}(t) = \frac{1}{\sqrt{s'}} \, \psi\left( \frac{t-\tau'}{s'} \right).
\]

We define a linear operator \( W(s,\tau; s',\tau') \) that maps the source wavelet coefficients to the target wavelet coefficients:
\begin{equation}
\widetilde{W}(s, \tau) = \int_{0}^{\infty} \int_{-\infty}^{\infty} W(s,\tau; s',\tau')\, W_{\text{Tommo}}(s',\tau')\, \frac{d\tau' \, ds'}{(s')^2}.
\end{equation}

Finally, the target speech waveform \( x_{\text{Senufo}}(t) \) is reconstructed via the inverse continuous wavelet transform:
\begin{equation}
x_{\text{Senufo}}(t) = \frac{1}{C_{\psi}} \int_{0}^{\infty} \int_{-\infty}^{\infty} \widetilde{W}(s, \tau)\, \psi_{s,\tau}(t)\, \frac{d\tau \, ds}{s^2},
\tag{12}
\end{equation}
where \( C_{\psi} \) is the admissibility constant associated with the wavelet \( \psi \).

\section*{Multiscale Audio-Semantic Transform (MAST)}

To model speech in audio-rich, low-text-resource, unwritten but highly oral languages, we propose the \textit{Multiscale Audio-Semantic Transform} (MAST), a novel representation that extends traditional time-frequency transforms by integrating semantic, prosodic, and speaker-aware information into a unified multiresolution framework. MAST is designed to capture lexical tone, rhythm, speaker identity, and expressive features crucial for textless  Audio-to-Audio  translation and understanding.

Let \( A(t) \in L^2(\mathbb{R}) \) denote a real-valued audio signal. The MAST representation is defined as a tuple:

\[
\mathcal{M}(A)(\tau) = \left( W(s, \tau), \, P(\tau), \, E_k(\tau), \, S(\tau), \, I(\tau) \right)
\]

where each component is aligned in time \( \tau \), and captures complementary aspects of the signal:

\begin{enumerate}
    \item \textbf{Wavelet Time–Frequency Representation:}
    \[
    W(s, \tau) = \int_{-\infty}^{\infty} A(t) \, \psi^*_{s,\tau}(t) \, dt
    \quad \text{with} \quad    \psi \text{being Morlet/Gabor or Coiflet}
    \]
    is the Morlet wavelet transform, where \( \psi^*_{s,\tau}(t) \) is the scaled and shifted version of the mother wavelet.

    \item \textbf{Pitch Contour:}
    \[
    s^*(\tau) = \arg\max_s |W(s, \tau)|^2, \quad
    P(\tau) = \frac{\omega_0}{2\pi s^*(\tau)}
    \]
    yields the instantaneous pitch via dominant wavelet scale \( s^*(\tau) \).

    \item \textbf{Prosodic Embedding:}
    \[
    E_k(\tau) = \Phi_\text{prosody}(A_{[\tau - \Delta, \tau + \Delta]}) \in \mathbb{R}^d
    \]
    where \( \Phi_\text{prosody} \) is a learnable encoder capturing rhythm, energy, and duration features locally.

    \item \textbf{Speaker Embedding:}
    \[
    S(\tau) = \Psi_\text{speaker}(A_{[\tau - T, \tau + T]}) \in \mathbb{R}^p
    \]
    where \( \Psi_\text{speaker} \) maps the local signal to a fixed-length identity embedding using a pretrained speaker model.

    \item \textbf{Intonation Gesture Descriptor:}
    \[
    I(\tau) = \left[ \frac{dP}{d\tau}, \, \frac{d^2P}{d\tau^2}, \, E(\tau), \, D(\tau) \right] \in \mathbb{R}^q
    \]
    encodes pitch dynamics, short-term energy \( E(\tau) \), and phonation duration \( D(\tau) \), offering a fine-grained model of expressive speech gestures.
\end{enumerate}

\subsection*{Space and Alignment}

The full transform is a time-aligned tensor field:
\[
\mathcal{M}(A) \in L^2(\mathbb{R}^2 \times \mathbb{R}^{d + p + q})
\]
mapping time \( \tau \) and scale \( s \) to a multichannel embedding space that integrates acoustic structure, prosodic content, and semantic cues.

Unlike conventional Mel spectrograms, which rely on fixed time–frequency resolution and omit higher-order speech attributes, MAST provides \textit{Adaptive temporal and spectral resolution} via wavelet scaling, 
     \textit{Lexical tone sensitivity} through pitch-aware constructs,
    \textit{Speaker and emotion awareness} via learned embedding functions,
    \textit{Suitability for textless models} operating purely on audio.
This structure enables downstream models such as audio transformers or diffusion decoders to condition on semantically rich, language-independent features, supporting robust learning even in the absence of orthographic data.

\subsection*{Example Functions for Prosody and Speaker Embeddings}

In practical implementations, the functions \( \Phi_{\text{prosody}} \) and \( \Psi_{\text{speaker}} \) can be instantiated as follows:

\paragraph{Prosodic Encoder \boldmath\( \Phi_{\text{prosody}} \):}
We define \( \Phi_{\text{prosody}}: L^2([\tau - \Delta, \tau + \Delta]) \to \mathbb{R}^d \) as a local convolutional encoder:
\[
\Phi_{\text{prosody}}(x) = \text{hn} \left( W_2 \cdot \text{hn}(W_1 \cdot x + b_1) + b_2 \right)
\]
where \( W_1 \in \mathbb{R}^{h \times n} \), \( W_2 \in \mathbb{R}^{d \times h} \), and \( h \) is a hidden dimension. The input \( x \in \mathbb{R}^n \) is a vector of concatenated acoustic features such as  short-term energy, zero-crossing rate, pitch windows, extracted from the interval around \( \tau \).
Alternatively, \( \Phi_{\text{prosody}} \) can be a transformer block applied to local wavelet coefficients, with positional encoding for \(\tau\), allowing the model to learn rhythmic and tonal attention patterns.

\paragraph{Speaker Encoder \boldmath\( \Psi_{\text{speaker}} \):}
We define \( \Psi_{\text{speaker}}: L^2([\tau - T, \tau + T]) \to \mathbb{R}^p \) as a pretrained transformer embedding:
\[
\Psi_{\text{speaker}}(x) = \text{pt}(x)
\]
where \( x \) is a 1.5-3 second window of waveform or spectrogram input. The pretrained transformer produces a speaker embedding in \( \mathbb{R}^p \) optimized for identity discrimination under contrastive loss. This embedding captures voice characteristics, pitch range, vocal tract shape, and speaking style.

These mappings \( \Phi \) and \( \Psi \) can be pretrained independently or jointly fine-tuned for downstream tasks such as  Audio-to-Audio  translation, emotion recognition, or voice disentanglement.

\subsection*{Speaker Embedding  Architecture}

The transformer model is a deep speaker embedding architecture that extends the Time Delay Neural Network by incorporating channel attention, multi-scale temporal aggregation, and residual propagation. Let \( X = [x_1, x_2, \dots, x_T] \in \mathbb{R}^{d \times T} \) denote a sequence of acoustic features, where \( T \) is the number of frames and \( d \) is the feature dimension.
The embedding is a function:
\[
\Psi_{\text{speaker}}: \mathbb{R}^{d \times T} \to \mathbb{R}^p, \quad s = \Psi_{\text{speaker}}(X),
\]
which encodes speaker identity using temporally aggregated, channel-attended features optimized for discrimination under metric learning losses.
Each frame \( x_t \) is processed using temporal context windows:
\[
h^{(l)}_t = hn\left(W^{(l)} \cdot [x_{t - \Delta}, \dots, x_{t + \Delta}] + b^{(l)}\right),
\]
where \( \Delta \) is the context size, \( W^{(l)} \in \mathbb{R}^{d' \times (2\Delta+1)\cdot d} \), and hn is a radial compactification projective norm called holonorm (hn).
A \textit{Res2Net} block splits the input into \( s \) channel groups, applies group-wise  Time Delay Neural Networks  (TDNNs), and combines via residual connections:
\[
y = x + \sum_{i=1}^{s} \text{TDNN}_i(z_i),
\quad z = \text{Split}(x),
\]
where each \( \text{TDNN}_i \) applies convolution to its own scale, enabling multi-resolution modeling.
A \textit{Squeeze-and-Excitation (SE)} module reweights channels:
\[
\alpha = hn\left(W_2\, \delta(W_1\, \text{GlobalAvg}(y))\right), \quad y' = \alpha \cdot y,
\]
where \( \delta \) is ReLU,  and \( \alpha \in \mathbb{R}^C \) is broadcast-multiplied across the feature map.
Instead of simple average pooling, ECAPA-TDNN uses attention to compute weighted mean and standard deviation:
\[
e_t = \text{hn}(W_a\, h_t + b_a), \quad \alpha_t = \frac{\exp(w^\top e_t)}{\sum_{t'} \exp(w^\top e_{t'})},
\]
\[
\mu = \sum_{t} \alpha_t h_t, \quad \sigma = \sqrt{ \sum_t \alpha_t (h_t - \mu)^2 },
\quad h_{\text{pool}} = [\mu; \sigma],
\]
where \( h_{\text{pool}} \in \mathbb{R}^{2d} \) is the fixed-length embedding.
The pooled vector is projected into a speaker embedding:
\[
s = W_{\text{proj}}\, h_{\text{pool}} + b_{\text{proj}}, \quad s \in \mathbb{R}^{p},
\]
which is then normalized and passed to an additive margin softmax (AM-Softmax) classifier during training.

\begin{table}[htb]
\centering
\begin{tabular}{|p{2cm}|p{3cm}|p{3cm}||p{3cm}|}
\hline
\textbf{Feature / Metric} & \textbf{Spectrogram} & \textbf{Wavelet (Morlet CWT)} & \textbf{MAST} \\
\hline
Representation
  & Fixed-window STFT (time–freq matrix)
  & Scale-adaptive CWT (scalogram)
  & Tuple \((W,P,E,S,I)\) \\
Complexity
  & \(O(N\log N)\) per frame
  & \(O(N\cdot S)\) per frame
  & \(O(N\cdot S + C_\phi + C_\chi)\) \\
Energy consumption
  & Low
  & Medium
  & High \\
Memory usage
  & \(O(T\times F)\)
  & \(O(T\times S)\)
  & \(O\bigl(T\times (S+d+p+q)\bigr)\) \\
Audio quality (S2ST)
  & Good envelope, poor tonal detail
  & Improved tone localization, moderate noise
  & Best: preserves tone, prosody, speaker identity \\
\hline
\end{tabular}
\caption{Comparison of Spectrogram, Wavelet, and MAST representations for textless audio-to-audio Transformers. \(N\)=samples/frame, \(T\)=time frames, \(F\)=frequency bins, \(S\)=wavelet scales, \(d,p,q\)=embedding dimensions, \(C_\phi,C_\chi\)=encoder costs.}
\label{tab:representation_comparison}
\end{table}

\subsection*{MAST with Coiflet Wavelet Representation}

In order to enhance multi-resolution analysis while preserving signal regularity and vanishing moments, we propose replacing the Morlet wavelet in MAST with the \textit{Coiflet wavelet}, a real-valued orthogonal wavelet with higher-order vanishing moments and compact support. This enables more accurate time–scale decomposition of phonetic and prosodic content in speech signals.

\paragraph{MAST with Coiflet Input.}
We now redefine the MAST representation as:
\[
\mathcal{M}_{\text{Coif}}(A) = \left( D^{(j)}[n], \, P(n), \, E_k(n), \, S(n), \, I(n) \right),
\]
where:
\begin{itemize}
  \item \( D^{(j)}[n] \) is the Coiflet DWT detail coefficient at scale \( j \),
  \item \( P(n) \in \mathbb{R} \) is the pitch estimate derived from zero-crossing or autocorrelation methods,
  \item \( E_k(n) \in \mathbb{R}^d \) is a learned prosodic embedding,
  \item \( S(n) \in \mathbb{R}^p \) is a speaker embedding ,
  \item \( I(n) \in \mathbb{R}^q \) encodes local intonation gesture features.
\end{itemize}

\subsection*{ Unit-based Textless Audio2Audio Translation}
We revisit the textless translation architecture proposed in \cite{huang2022transpeech} using transformers. Interestingly in figure \ref{fig:s2st} the translation from language 1 audio to language 2 audio differs from language 2 audio  to language 1 audio.  
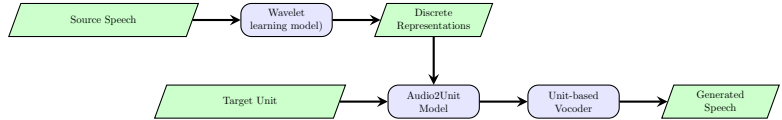
\begin{figure}[h!]
\centering
\begin{tikzpicture}[node distance=1.6cm and 1.6cm, scale=0.4, every node/.style={transform shape}]

\node (source) [data2] {Source Speech};
\node (ssl) [block2, right=of source] {Wavelet\\ learning model)};
\node (discrete) [data2, right=of ssl] {Discrete\\ Representations};
\node (s2ut) [block2, below=of discrete] {Audio2Unit\\ Model};
\node (targetunit) [data2, left=of s2ut] {Target Unit};
\node (vocoder) [block2, right=of s2ut] {Unit-based Vocoder};
\node (generated) [data2, right=of vocoder] {Generated Speech};

\draw [arrow] (source) -- (ssl);
\draw [arrow] (ssl) -- (discrete);
\draw [arrow] (discrete) -- (s2ut);
\draw [arrow] (targetunit) -- (s2ut);
\draw [arrow] (s2ut) -- (vocoder);
\draw [arrow] (vocoder) -- (generated);

\end{tikzpicture}
\caption{A Unit-based Textless Direct Audio-to-Audio Translation.}
\label{fig:s2st}
\end{figure}

\section{ Audio Diffusion with MAST Conditioning}

Recent literature highlights the  multidisciplinary applications of generative diffusion models across machine intelligence, science, and engineering. In \cite{audiodiffusion1}, a comprehensive survey explores the theoretical foundations, model architectures, and diverse applications of generative diffusion models. The authors in \cite{audiodiffusion2} focus on vision-based tasks, providing a taxonomy of diffusion approaches in computer vision. A broader analysis of the opportunities and challenges in generative machine intelligence is presented in \cite{audiodiffusion3}, emphasizing both model capabilities and practical limitations. Diffusion models have shown strength in specialized tasks such as video inpainting \cite{audiodiffusion4}, missing data imputation \cite{audiodiffusion5}, and adversarial robustness \cite{audiodiffusion6}. As proposed in \cite{audiodiffusion7}, text-to-image generation continues to mature, while \cite{audiodiffusion8, audiodiffusion14} investigate physics-inspired perspectives and theoretical underpinnings. Application domains have expanded to include satellite lidar super-resolution \cite{audiodiffusion9}, parametric PDE solving \cite{audiodiffusion10}, UI design and HCI \cite{audiodiffusion11, audiodiffusion12}, and geospatial foundation modeling \cite{audiodiffusion13}. In molecular science, diffusion-based generation supports protein-ligand binding modeling \cite{audiodiffusion16} and RNA structure prediction \cite{audiodiffusion18}, while \cite{audiodiffusion19} explores nonequilibrium physics underlying diffusion processes. Additional works include graph-based network planning \cite{audiodiffusion20}, precipitation nowcasting \cite{audiodiffusion21}, conditional adversarial defenses \cite{audiodiffusion22}, and efficient learning in high-dimensional spaces \cite{audiodiffusion23}. Finally, the review in \cite{audiodiffusion24} details the use of diffusion models in audio restoration.

Recent advancements in generative machine intelligence models have brought attention to the limitations of standard Brownian-driven stochastic dynamics, particularly in capturing long-range dependencies and memory effects in  audio, music, speech and video generation. Here, we introduce \textbf{Fractional Transfusion}, a novel class of diffusion transformers driven by \textit{fractional Brownian motion (fBm)}, a generalization of classical Brownian motion characterized by the Hurst parameter $H \in (0,1).$ This framework enables the modeling of subdiffusive ($H<0.5$) and superdiffusive ($H>0.5$) behaviors, thereby extending the capacity of diffusion models to represent data distributions with  temporal and spatial correlations.

We extend the textless audio-to-audio diffusion framework to continuous time, incorporating  MAST as conditioning information. Let \(A(t)\in L^2(\mathbb{R})\) denote the audio signal at time \(t\), with \(A(0)=A_0\) being the original waveform. The forward diffusion process is modeled by the following stochastic differential equation (SDE):

\[
dA_t = -\theta(t)(\bar{m}(t)-A_t)dt + \sigma(t)\,dB^H_t,\quad t\in[0,T],
\]

where \(\sigma(t)\) is a predefined noise schedule and \(W^H_t\) is a fractional Brownian motion with hurst parameter $H\in (0.5,1),$ 
$\theta(t), \bar{m}(t)$ are deterministic functions.

In parallel, we compute the MAST representation from \(A_0\), i.e.,

\[
\mathcal{M}(A_0) = \bigl(W(s,\tau),\,P(\tau),\,E_k(\tau),\,S(\tau),\,I(\tau)\bigr),
\]

and embed it via a learnable function \(\mathrm{Embed}(\mathcal{M}(A_0))\) to obtain the conditioning vector \(M\).

The reverse-time SDE, which recovers the clean signal from noise, is given by

\[
dA_t = \Bigl[\theta(t)(\bar{m}(t)-A_t) + 2H t^{2H-1}\sigma^2(t)\frac{ A_t-m(t) }{v^2(t)}\Bigr]dt + \sigma(t)\,d\bar{B}^H_t,
\]

where \(d\bar{B}^H_t\) is the reverse Brownian motion, $m(t), {v}(t)$ are deterministic functions obtained the mean-field of states.

\subsection{Fractional Ornstein--Uhlenbeck Process }

Our method builds upon an explicit solution of a fractional Ornstein--Uhlenbeck Process with time-varying drift and mean. 

\begin{proposition}
Let $x(t)$ satisfy the stochastic differential equation
\[
dx(t) = \theta(t)\big(\bar{m}(t) - x(t)\big)\,dt + \sigma(t)\,dB^H(t), \quad x(0) = x_0,
\]
where
 $B^H(t)$ is a fractional Brownian motion with Hurst parameter $H \in \left(\frac{1}{2}, 1\right)$,
    $\theta(t), \bar{m}(t), \sigma(t) \in C^1([0,T])$ are real-valued functions,
    and $\sigma(t)$ is Hölder continuous with exponent $\delta > 1 - H$.

Then the unique solution $x(t)$ to this SDE exists and is given by
\[
x(t) = e^{-\Phi(t)} x_0 + e^{-\Phi(t)} \int_0^t e^{\Phi(t')} \theta(t') \bar{m}(t')\,dt' + e^{-\Phi(t)} \int_0^t e^{\Phi(t')} \sigma(t')\,dB^H(t'),
\]
where $\Phi(t) := \int_0^t \theta(t')\,dt'$, and the stochastic integral is understood in the Young sense \cite{young1936}.
\end{proposition}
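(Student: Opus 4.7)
The plan is to use the classical integrating-factor technique, adapted to the pathwise (Young) stochastic calculus that is available when \(H>1/2\). Define the integrating factor \(\mu(t):=e^{\Phi(t)}\), which is a deterministic \(C^1\) function since \(\theta\in C^1([0,T])\). The first step is to justify that the product \(\mu(t)x(t)\) admits a chain-rule decomposition: because \(\mu\) is smooth and the driver \(B^H\) has \(\gamma\)-Hölder paths for every \(\gamma<H\), Young's integration theory applies and yields the pathwise identity
\[
d\bigl(\mu(t)x(t)\bigr)=\mu'(t)x(t)\,dt+\mu(t)\,dx(t)=\mu(t)\theta(t)\bar m(t)\,dt+\mu(t)\sigma(t)\,dB^H(t),
\]
where the \(-\theta(t)x(t)\) drift term cancels \(\mu'(t)x(t)=\theta(t)\mu(t)x(t)\). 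Integrating from \(0\) to \(t\) and dividing by \(\mu(t)\) produces exactly the candidate expression in the statement.

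Next, I would verify that the three terms on the right-hand side are well defined. The first two are elementary: \(e^{-\Phi(t)}x_0\) is deterministic, and the Lebesgue integral \(\int_0^t e^{\Phi(t')}\theta(t')\bar m(t')\,dt'\) converges by the \(C^1\) assumptions. For the stochastic term \(\int_0^t e^{\Phi(t')}\sigma(t')\,dB^H(t')\), the key point is that the integrand \(t'\mapsto e^{\Phi(t')}\sigma(t')\) is Hölder continuous with exponent \(\delta>1-H\) (inheriting this from \(\sigma\), since \(e^{\Phi}\) is Lipschitz on \([0,T]\)), so the Young integrability condition \(\delta+\gamma>1\) is satisfied for \(\gamma\) sufficiently close to \(H\). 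This guarantees that the Young integral exists in the pathwise sense and is continuous in \(t\).

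For existence, I would then substitute the candidate back into the SDE and differentiate, again using the Young product rule, to confirm it satisfies the equation with the correct initial condition \(x(0)=x_0\). For uniqueness, if \(x_1\) and \(x_2\) are two solutions, their difference \(y(t):=x_1(t)-x_2(t)\) satisfies the homogeneous linear equation \(dy(t)=-\theta(t)y(t)\,dt\) with \(y(0)=0\); this is a deterministic ODE whose only solution is \(y\equiv0\), establishing pathwise uniqueness.

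The main technical obstacle is the rigorous justification of the product/chain rule in step one, since stochastic calculus for fBm depends delicately on \(H\). For \(H>1/2\), this is resolved by Young's theorem \cite{young1936}, and the stated Hölder hypothesis on \(\sigma\) is exactly what is needed to make the relevant Young integrals well defined and to legitimize the formal manipulations. All remaining steps — existence by direct substitution and uniqueness by the homogeneous-equation argument — are then routine consequences of the linearity of the drift in \(x\).
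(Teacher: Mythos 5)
Your proposal is correct and follows essentially the same route as the paper: the integrating factor $\mu(t)=e^{\Phi(t)}$, the Young-integration justification via the H\"older exponents $\gamma<H$ for $B^H$ and $\delta>1-H$ for the integrand, and uniqueness from linearity. Your uniqueness step (the difference of two solutions solves the homogeneous deterministic ODE $dy=-\theta y\,dt$, $y(0)=0$) is a slightly more explicit version of the paper's one-line appeal to linearity, but the argument is the same.
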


\begin{proof}
We begin by rewriting the SDE in linear form:
\[
dx(t) + \theta(t) x(t)\,dt = \theta(t) \bar{m}(t)\,dt + \sigma(t)\,dB^H(t).
\]
Define the integrating factor
\[
\mu(t) := e^{\int_0^t \theta(t')\,dt'} = e^{\Phi(t)}.
\]
Multiplying both sides of the equation by $\mu(t)$, we obtain
\[
\mu(t) dx(t) + \mu(t) \theta(t) x(t)\,dt = \mu(t) \theta(t) \bar{m}(t)\,dt + \mu(t) \sigma(t)\,dB^H(t),
\]
which can be recognized as the total differential
\[
d[\mu(t)x(t)] = \mu(t) \theta(t) \bar{m}(t)\,dt + \mu(t) \sigma(t)\,dB^H(t).
\]
Integrating both sides from $0$ to $t$, we get
\[
\mu(t)x(t) = x_0 + \int_0^t \mu(t') \theta(t') \bar{m}(t')\,dt' + \int_0^t \mu(t') \sigma(t')\,dB^H(t').
\]
Solving for $x(t)$ yields
\[
x(t) = e^{-\Phi(t)} x_0 + e^{-\Phi(t)} \int_0^t e^{\Phi(t')} \theta(t') \bar{m}(t')\,dt' + e^{-\Phi(t)} \int_0^t e^{\Phi(t')} \sigma(t')\,dB^H(t').
\]

We now justify the existence of the Young integral. Since $B^H(t)$ has Hölder continuous paths of order $\gamma < H$, and $\sigma(t)$ is Hölder continuous of order $\delta > 1 - H$, the composition $e^{\Phi(t')} \sigma(t')$ is also Hölder continuous of order $> 1 - H$, ensuring that the integral
\[
\int_0^t e^{\Phi(t')} \sigma(t')\,dB^H(t')
\]
is well-defined in the Young sense.

Finally, the linearity of the SDE and  regularity assumptions imply the uniqueness of the solution.

\end{proof}

\begin{proposition}
Let \( x(t) \) be the solution to the fractional stochastic differential equation
\[
dx(t) = \theta(t)(\bar{m}(t) - x(t))\,dt + \sigma(t)\,dB^H(t), \quad x(0) = x_0,
\]
where \( \theta(t), \bar{m}(t), \sigma(t) \) are continuous deterministic functions on \( [0,T] \), and \( B^H(t) \) is a fractional Brownian motion with Hurst parameter \( H > \frac{1}{2} \). Then \( x(t) \) is a Gaussian process with mean
\[
m(t) = e^{-\Phi(t)} x_0 + e^{-\Phi(t)} \int_0^t e^{\Phi(t')} \theta(t') \bar{m}(t')\,dt',
\]
and variance
\[
v^2(t) = e^{-2\Phi(t)} H(2H-1) \int_0^t \int_0^t e^{\Phi(t')} \sigma(t') e^{\Phi(s')} \sigma(s') |t' - s'|^{2H - 2} \,dt' \,ds',
\]
where \( \Phi(t) = \int_0^t \theta(s)\,ds \).
\end{proposition}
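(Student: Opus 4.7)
\medskip

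\noindent\textbf{Proof proposal.} The plan is to leverage the closed-form expression for $x(t)$ obtained in the preceding proposition, so that Gaussianity, mean, and variance follow from properties of linear Young/Wiener integrals against fractional Brownian motion. Specifically, from the explicit solution
\[
x(t) = e^{-\Phi(t)} x_0 + e^{-\Phi(t)} \int_0^t e^{\Phi(t')} \theta(t') \bar{m}(t')\,dt' + e^{-\Phi(t)} \int_0^t e^{\Phi(t')} \sigma(t')\,dB^H(t'),
\]
the first two terms are deterministic, while the third is a linear functional of the Gaussian process $B^H$ applied to a deterministic integrand. My first step is to argue that this stochastic integral is itself a centered Gaussian random variable: for $H>\tfrac{1}{2}$ the Young integral of a smooth deterministic integrand can be realized as an $L^2$-limit of Riemann--Stieltjes sums $\sum_i f(t_i)\bigl(B^H(t_{i+1}) - B^H(t_i)\bigr)$, each of which is a linear combination of jointly Gaussian increments; the limit is therefore Gaussian with zero mean.

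Next I would extract the mean by taking expectations term-by-term, so that
\[
m(t) \;=\; \mathbb{E}[x(t)] \;=\; e^{-\Phi(t)} x_0 + e^{-\Phi(t)} \int_0^t e^{\Phi(t')} \theta(t') \bar{m}(t')\,dt',
\]
which matches the stated expression. For the variance, I would invoke the fractional It\^o-type isometry valid for $H>\tfrac{1}{2}$ and deterministic integrands $f\in L^2_H$:
\[
\mathbb{E}\left[\left(\int_0^t f(s)\,dB^H(s)\right)^2\right] \;=\; H(2H-1)\int_0^t\!\!\int_0^t f(t') f(s') |t'-s'|^{2H-2}\,dt'\,ds'.
\]
Applying this with $f(s) = e^{\Phi(s)}\sigma(s)$ and multiplying by the outer factor $e^{-2\Phi(t)}$ yields the claimed double-integral formula for $v^2(t)$.

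The main obstacle, and the only point requiring some care, is the rigorous justification of the isometry and of the fact that the Young integral has zero mean. Under the regularity assumed earlier ($\sigma$ H\"older of order $\delta > 1-H$ and $\theta\in C^1$), the integrand $e^{\Phi(\cdot)}\sigma(\cdot)$ lies in the reproducing kernel Hilbert space associated with $B^H$, so the Young integral coincides with the divergence/Wiener integral; centeredness then follows from the definition and the isometry above is standard. A complementary route, which I would mention as a robustness check, is to approximate $\sigma$ by step functions $\sigma_n$, for which both the Gaussianity, the zero mean, and the covariance identity are immediate from $B^H$ increments, and then pass to the limit in $L^2(\Omega)$ using the H\"older regularity to control the Young-integral remainder uniformly in $n$. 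This closes the argument without invoking Malliavin calculus beyond what has already been established.
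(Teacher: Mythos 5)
Your proposal is correct and follows essentially the same route as the paper's proof: read off Gaussianity from the explicit solution (the only random term being a Young/Wiener integral of a deterministic integrand against $B^H$), identify the mean with the deterministic part, and apply the fractional isometry $\mathbb{E}\bigl[(\int_0^t f\,dB^H)^2\bigr] = H(2H-1)\int_0^t\!\int_0^t f(t')f(s')|t'-s'|^{2H-2}\,dt'\,ds'$ with $f(t')=e^{\Phi(t')}\sigma(t')$. Your added justifications (Riemann--Stieltjes approximation for Gaussianity and centeredness, and the step-function limiting argument for the isometry) merely make explicit what the paper takes as standard.
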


\begin{proof}
The solution \( x(t) \) admits the explicit form
\[
x(t) = e^{-\Phi(t)} x_0 + e^{-\Phi(t)} \int_0^t e^{\Phi(t')} \theta(t') \bar{m}(t')\,dt' + e^{-\Phi(t)} \int_0^t e^{\Phi(t')} \sigma(t')\,dB^H(t').
\]
The first two terms are deterministic, so the randomness comes solely from the last term, which is a Young integral with respect to \( B^H \). Since the integrand is deterministic and \( B^H \) is Gaussian, the integral is Gaussian as well, and hence \( x(t) \) is Gaussian.

The mean of \( x(t) \) is thus
\[
\mathbb{E}[x(t)] = e^{-\Phi(t)} x_0 + e^{-\Phi(t)} \int_0^t e^{\Phi(t')} \theta(t') \bar{m}(t')\,dt' =: m(t),
\]
and its variance is
\[
\text{Var}(x(t)) = \mathbb{E}\left[ \left( e^{-\Phi(t)} \int_0^t e^{\Phi(t')} \sigma(t')\,dB^H(t') \right)^2 \right].
\]
Since for \( H > \frac{1}{2} \), the Young integral satisfies:
\[
\mathbb{E}\left[ \left( \int_0^t f(t')\,dB^H(t') \right)^2 \right] = H(2H-1) \int_0^t \int_0^t f(t') f(s') |t' - s'|^{2H-2} \,dt' \,ds',
\]
we substitute \( f(t') = e^{\Phi(t')} \sigma(t') \) and factor out \( e^{-2\Phi(t)} \), yielding
\[
v^2(t) = e^{-2\Phi(t)} H(2H-1) \int_0^t \int_0^t e^{\Phi(t')} \sigma(t') e^{\Phi(s')} \sigma(s') |t' - s'|^{2H - 2} \,dt' \,ds'.
\]
\end{proof}

\subsection{Forward process that ends up with a mask}
To ensure accurate terminal constraints critical for conditional generation, we design a tailored process that begins at $x_0$ and precisely reaches a terminal distribution $\mathcal{N}(m^*_T,  (\sigma_T^*)^2)$ at time $t = T$. This process incorporates an adaptive drift:

\begin{proposition}
Let \( T > 0 \) be fixed and let \( m_T^* \in \mathbb{R} \), \( \sigma_T^* > 0 \). Then for any deterministic, continuous, and strictly positive function \( \theta(t) \) on \( [0,T] \), and for any initial condition \( x(0) = x_0 \), there exist deterministic, continuous functions \( \bar{m}(t) \) and \( \sigma(t) \), such that the solution \( x(t) \) to the fractional SDE
\[
dx(t) = \theta(t)(\bar{m}(t) - x(t))\,dt + \sigma(t)\,dB^H(t), \quad x(0) = x_0,
\]
satisfies
\[
x(T) \sim \mathcal{N}(m_T^*, (\sigma_T^*)^2).
\]
\end{proposition}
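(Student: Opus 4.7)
The plan is to invoke the closed-form mean and variance formulas from the preceding proposition, which show that $x(T)$ is Gaussian with parameters $m(T)$ and $v^2(T)$ depending linearly (in a suitable sense) on $\bar m$ and quadratically on $\sigma$. Matching $(m(T), v^2(T)) = (m_T^\ast, (\sigma_T^\ast)^2)$ therefore reduces to solving two scalar equations, one for $\bar m$ and one for $\sigma$. Since we have two free continuous functions but only two scalar constraints, the system is massively underdetermined, and the simplest workable ansatz is to take both functions to be constant on $[0,T]$.

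First I would fix $\bar m(t) \equiv c$ and compute, using $\int_0^T e^{\Phi(t')}\theta(t')\,dt' = e^{\Phi(T)} - 1$, that
\[
m(T) = e^{-\Phi(T)} x_0 + c\bigl(1 - e^{-\Phi(T)}\bigr).
\]
Because $\theta$ is continuous and strictly positive on $[0,T]$, we have $\Phi(T) > 0$, so $1 - e^{-\Phi(T)} > 0$ and the unique solution
\[
c = \frac{m_T^\ast - e^{-\Phi(T)} x_0}{1 - e^{-\Phi(T)}}
\]
is well-defined, real, and continuous (being constant). This handles the mean constraint.

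Next I would set $\sigma(t) \equiv \sigma_0 > 0$ and read off the variance formula:
\[
v^2(T) = \sigma_0^2 \cdot K, \qquad K := e^{-2\Phi(T)} H(2H-1) \int_0^T\!\!\int_0^T e^{\Phi(t') + \Phi(s')} |t'-s'|^{2H-2} \, dt'\,ds'.
\]
I would verify that $K$ is a finite strictly positive constant: since $H \in (1/2, 1)$ gives $2H - 2 \in (-1, 0)$, the singular kernel $|t'-s'|^{2H-2}$ is locally integrable on $[0,T]^2$, and the exponential factor is continuous and bounded on the compact square; strict positivity follows from $H(2H-1) > 0$ and the positivity of the integrand. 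Then $\sigma_0 = \sigma_T^\ast / \sqrt{K}$ yields $v^2(T) = (\sigma_T^\ast)^2$, and $\sigma_0$ is continuous as a constant.

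The only mild obstacle is the integrability check at the diagonal $t' = s'$, which is where the fBm kernel is singular; this is handled cleanly by the restriction $H > 1/2$ (equivalently $2H-2 > -1$), exactly the regime assumed for the Young integral construction in the earlier propositions. Combining the two constructions, the resulting constant pair $(\bar m, \sigma) = (c, \sigma_0)$ satisfies the regularity hypotheses of the earlier existence-and-uniqueness proposition, and the corresponding solution $x$ has, by Gaussianity, $x(T) \sim \mathcal N(m_T^\ast, (\sigma_T^\ast)^2)$, completing the proof. (Non-constant choices give an infinite-dimensional family of solutions, but are unnecessary for existence.)
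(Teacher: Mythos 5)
Your proposal is correct and follows essentially the same route as the paper: reduce to two scalar constraints via the explicit Gaussian mean/variance formulas, solve the mean equation with a constant $\bar m$, and tune a one-parameter noise ansatz to hit the target variance. The only (cosmetic) difference is in the variance step, where the paper takes $f(t)=e^{\Phi(t)}\sigma(t)$ constant so that the double integral evaluates in closed form and yields the explicit $\sigma(t)=\sigma_T^*\,T^{-H}e^{\Phi(T)-\Phi(t)}$, whereas you take $\sigma$ itself constant and instead verify that the weighted kernel integral is finite and strictly positive (correctly using $2H-2>-1$); both choices are valid.
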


\begin{proof}
Let us define:
\[
\Phi(t) := \int_0^t \theta(s)\,ds.
\]

We know from the explicit solution of the SDE that:
\[
x(T) = e^{-\Phi(T)} x_0 + e^{-\Phi(T)} \int_0^T e^{\Phi(t')} \theta(t') \bar{m}(t')\,dt' + e^{-\Phi(T)} \int_0^T e^{\Phi(t')} \sigma(t')\,dB^H(t').
\]

\textbf{Mean constraint:}

To match the mean \( m_T^* \), define \( \bar{m}(t) \) such that
\[
m_T^* = e^{-\Phi(T)} x_0 + e^{-\Phi(T)} \int_0^T e^{\Phi(t')} \theta(t') \bar{m}(t')\,dt'.
\]
Rewriting, we require
\[
\int_0^T e^{\Phi(t')} \theta(t') \bar{m}(t')\,dt' = e^{\Phi(T)} (m_T^* - e^{-\Phi(T)} x_0).
\]
This is a linear integral equation for \( \bar{m}(t') \), which admits a solution under broad conditions. In particular, since \( \theta(t') \) and \( e^{\Phi(t')} \) are strictly positive, a continuous \( \bar{m}(t') \) exists.

Solving for \( \bar{m} \) to match \( \mathbb{E}[x(T)] = m_T^* \) gives
\[
\bar{m} = \frac{m_T^* - x_0 e^{-\Phi(T)}}{1 - e^{-\Phi(T)}}.
\]

\textbf{Variance constraint:}

Let \( f(t') := e^{\Phi(t')} \sigma(t') \). To achieve \( \text{Var}(x(T)) = (\sigma_T^*)^2 \), we require
\[
(\sigma_T^*)^2 = e^{-2\Phi(T)} H(2H-1) \int_0^T \int_0^T f(t') f(s') |t' - s'|^{2H - 2} \,dt' \,ds'.
\]

Let us define:
\[
K := H(2H - 1) \int_0^T \int_0^T f(t') f(s') |t' - s'|^{2H - 2} \,dt' \,ds'.
\]
Then
\[
K = e^{2\Phi(T)} (\sigma_T^*)^2.
\]

This is a homogeneous quadratic form in \( f(t') \). Since the kernel \( |t' - s'|^{2H - 2} \) is symmetric and positive definite for \( H > 1/2 \), a solution for \( f(t') \) (and hence \( \sigma(t') = f(t') e^{-\Phi(t')} \)) exists, for instance, by choosing:
\[
f(t') := C, \quad \text{with } C = \left( \frac{e^{2\Phi(T)} (\sigma_T^*)^2}{H(2H - 1) \int_0^T \int_0^T |t' - s'|^{2H - 2} dt' ds'} \right)^{1/2}.
\]
Thus, a constant \( f(t') \) yields an explicit solution:
\[
\sigma(t') = C e^{-\Phi(t')}.
\]

The double integral is symmetric and evaluates as:
\[
\int_0^T \int_0^T |t' - s'|^{2H - 2} dt' ds' = 2 \int_0^T \int_0^{t'} (t' - s')^{2H - 2} ds' dt' = \frac{2T^{2H}}{(2H - 1)(2H)}.
\]

\[
C = \sigma_T^* T^{-H} e^{\Phi(T)}, \ 
\]

Hence, both the mean and variance constraints can be satisfied simultaneously by choosing appropriate \( \bar{m}(t) \) and \( \sigma(t) \).
\end{proof}

\begin{corollary}
This result shows you can steer a linear SDE in fractional noise exactly to any Gaussian target.
Its especially useful in diffusion-based generative modeling, where sampling from a terminal Gaussian is desired.
We define the dynamics of \( x(t) \in \mathbb{R} \) by the fractional stochastic differential equation:
\[
dx(t) = \theta(t) \left( \frac{m_T^* - x_0 e^{-\Phi(T)}}{1 - e^{-\Phi(T)}} - x(t) \right) dt + \sigma_T^*  T^{-H} e^{\Phi(T) - \Phi(t)} dB^H(t), \quad x(0) = x_0,
\]
where
$\Phi(t) := \int_0^t \theta(t')\,dt', $
and \( B^H(t) \) is a fractional Brownian motion with Hurst parameter \( H > \frac{1}{2} \), \( \theta(t) > 0 \) is deterministic and continuous, and \( x_0, m_T^*, \sigma_T^* \in \mathbb{R} \) are given constants.
Under this dynamics, the process \( x(t) \) satisfies $x(T) \sim \mathcal{N}(m_T^*, (\sigma_T^*)^2). $
\end{corollary}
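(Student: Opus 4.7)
The plan is to invoke the preceding proposition directly, since the corollary is the specialization of its existence result to the explicit choices of $\bar{m}(t)$ and $\sigma(t)$ identified inside that proof. First I would verify that the hypotheses are still met: $\theta(t)>0$ continuous gives $\Phi\in C^1$, the constant $\bar{m}(t)\equiv(m_T^*-x_0 e^{-\Phi(T)})/(1-e^{-\Phi(T)})$ is trivially continuous, and $\sigma(t)=\sigma_T^*T^{-H}e^{\Phi(T)-\Phi(t)}$ is $C^1$, hence Hölder continuous of any exponent $\delta>1-H$. So the Young integral is well-defined and $x(t)$ is Gaussian with mean $m(t)$ and variance $v^2(t)$ as given by the previous proposition.

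The mean check is then a direct substitution. With $\bar{m}$ constant,
\[
m(T)=e^{-\Phi(T)}x_0+e^{-\Phi(T)}\bar{m}\int_0^T e^{\Phi(t')}\theta(t')\,dt'.
\]
Using $\tfrac{d}{dt'}e^{\Phi(t')}=\theta(t')e^{\Phi(t')}$ the inner integral equals $e^{\Phi(T)}-1$, so $m(T)=e^{-\Phi(T)}x_0+\bar{m}(1-e^{-\Phi(T)})=m_T^*$ by construction of $\bar{m}$.

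For the variance, the key observation is that the chosen $\sigma(t)$ makes $f(t'):=e^{\Phi(t')}\sigma(t')=\sigma_T^*T^{-H}e^{\Phi(T)}=:C$ constant in $t'$. Hence
\[
v^2(T)=e^{-2\Phi(T)}H(2H-1)C^2\int_0^T\!\!\int_0^T|t'-s'|^{2H-2}\,dt'\,ds'.
\]
By symmetry the double integral equals $2\int_0^T\!\int_0^{t'}(t'-s')^{2H-2}ds'\,dt'=\frac{2}{2H-1}\int_0^T(t')^{2H-1}dt'=\frac{T^{2H}}{H(2H-1)}$. Substituting this and $C^2=(\sigma_T^*)^2T^{-2H}e^{2\Phi(T)}$ collapses the prefactors to give $v^2(T)=(\sigma_T^*)^2$, and the conclusion $x(T)\sim\mathcal{N}(m_T^*,(\sigma_T^*)^2)$ follows.

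The only genuinely delicate point, rather than a real obstacle, is the treatment of the singular kernel $|t'-s'|^{2H-2}$ at the diagonal; for $H\in(\tfrac12,1)$ the exponent $2H-2\in(-1,0)$ makes the kernel locally integrable, so the isometry formula used for the Young integral variance applies without further regularization. Everything else is a plug-and-chug verification of the two constraints that were already solved for in the proposition's proof, so no new analytic machinery is needed.
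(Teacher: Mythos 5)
Your proposal is correct and follows essentially the same route as the paper: the corollary is just the specialization of the preceding proposition to the explicit choices $\bar{m}\equiv(m_T^*-x_0e^{-\Phi(T)})/(1-e^{-\Phi(T)})$ and $f(t')=e^{\Phi(t')}\sigma(t')\equiv C=\sigma_T^*T^{-H}e^{\Phi(T)}$, and your mean and variance verifications (including the evaluation $\int_0^T\!\int_0^T|t'-s'|^{2H-2}\,dt'\,ds'=T^{2H}/\bigl(H(2H-1)\bigr)$) match the computations already carried out in the paper's proof of that proposition. Your added remark on the local integrability of the singular kernel for $H\in(\tfrac12,1)$ is a harmless and correct clarification that the paper leaves implicit.
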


This construction ensures that the diffusion process not only adheres to the target endpoint distribution but also embeds the long-memory structure characteristic of fractional noise. By exploiting the non-local properties of fBm, \textbf{fractional diffusion,  super diffusion, subdiffusion} provide a principled way to enrich generative models across modalities spanning time-series, vision, audio, and video with a richer stochastic representation of uncertainty and correlation.

\subsection{Explicit score function}

Despite the recent surge in popularity of score-based generative modeling, a common misconception persists: that one must always train a neural network to approximate the score function of the forward process. This belief overlooks fundamental properties of Gaussian processes.

\begin{proposition} Let $x(t)$ be a stochastic process such that for every time $t$, 
\[
x(t) \sim \mathcal{N}(m(t), v^2(t)),
\]
where $m(t)$ and $v^2(t) > 0$ are deterministic, differentiable functions of time. Then the score function (i.e., the gradient of the log-density with respect to $x$) at time $t$ is given analytically by:
\[
\nabla_x \log p(x(t)) = -\frac{x(t) - m(t)}{v^2(t)}.
\]
Thus, no neural network is required to estimate the score function when the marginal distribution is Gaussian with known mean and variance. 
\end{proposition}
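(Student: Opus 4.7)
The plan is to exploit the fact that, from the previous proposition, the marginal law of $x(t)$ driven by the linear fractional SDE is explicitly Gaussian $\mathcal{N}(m(t), v^2(t))$ with $m(t)$ and $v^2(t)$ given in closed form. Consequently, the score function can be read off directly from the Gaussian density, without invoking denoising score matching, Tweedie's identity on a neural network, or any learned approximator.

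First I would write the one-dimensional Gaussian density at time $t$ as
\[
p(x,t) = \frac{1}{\sqrt{2\pi v^2(t)}} \exp\!\left( -\frac{(x - m(t))^2}{2 v^2(t)} \right).
\]
Taking logarithms yields
\[
\log p(x,t) = -\tfrac{1}{2}\log(2\pi v^2(t)) - \frac{(x - m(t))^2}{2 v^2(t)},
\]
where only the last term depends on $x$. Differentiating with respect to $x$ and keeping $m(t), v^2(t)$ fixed (since they are deterministic functions of $t$ only), the constant term vanishes and the quadratic term produces
\[
\nabla_x \log p(x,t) = -\frac{x - m(t)}{v^2(t)},
\]
which is exactly the claimed expression. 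Evaluating at the random realization $x = x(t)$ gives the stated identity.

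The only points requiring care are (i) the strict positivity $v^2(t) > 0$ for all $t \in (0,T]$, which ensures the density is well defined and smooth in $x$, and (ii) the differentiability of $m(t), v^2(t)$, which is guaranteed by the regularity assumptions on $\theta, \bar m, \sigma$ and by the explicit integral representations derived earlier. The honest remark to make, rather than a technical obstacle, is conceptual: the argument depends critically on the marginal being Gaussian. This is secured in our setting by the linearity of the drift and the deterministic nature of the integrand against $dB^H$, which makes the Young integral a Gaussian random variable. For nonlinear drifts, mean-field couplings, or state-dependent diffusion coefficients, Gaussianity is lost and a learned score approximator becomes necessary; but within the class of linear fractional Ornstein--Uhlenbeck dynamics used here, the score is available in closed form and no neural network training is required to implement the reverse-time SDE.
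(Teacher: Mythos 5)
Your proof is correct and follows exactly the same route as the paper: write the Gaussian density, take the logarithm, and differentiate in $x$ to obtain $-(x-m(t))/v^2(t)$. The additional remarks on positivity of $v^2(t)$ and on why Gaussianity holds in the linear fractional setting are sensible context but do not change the argument.
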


\begin{proof}

By definition, the probability density function (pdf) of a Gaussian random variable $x(t) \sim \mathcal{N}(m(t), v^2(t))$ is:
\[
p(x(t)) = \frac{1}{\sqrt{2\pi v^2(t)}} e^{ -\frac{(x(t) - m(t))^2}{2v^2(t)} }.
\]

The log-density is then:
\[
\log p(x(t)) = -\frac{1}{2} \log(2\pi v^2(t)) - \frac{(x(t) - m(t))^2}{2v^2(t)}.
\]

Differentiating with respect to $x(t)$:
\[
\nabla_x \log p(x(t)) = \frac{\partial}{\partial x(t)} \left( -\frac{(x(t) - m(t))^2}{2v^2(t)} \right)
= -\frac{1}{2v^2(t)} \cdot 2(x(t) - m(t)) = -\frac{x(t) - m(t)}{v^2(t)}.
\]

Hence, the score function is:
\[
\nabla_x \log p(x(t)) = -\frac{x(t) - m(t)}{v^2(t)}.
\]

\end{proof}

This says that the score function of a Gaussian is explicit, \textit{linear} function in $y.$ This form is derived directly from the analytical expression of the Gaussian density and requires \textbf{no} learning or approximation when the mean and variance are known. Therefore, in purely Gaussian settings, the score is \textbf{exactly known by construction}, and using deep neural networks to estimate it introduces unnecessary complexity, computational burden, and potential approximation error.
In any Gaussian process, the form of the score is always this linear expression. What may be unknown in practice is:
the mean, which might be time-dependent.
The variance, which could also vary with time or depend on latent dynamics.
The true modeling difficulty lies not in estimating the score structure, but in estimating (or tracking) these parameters over time  especially in non-stationary or parameter-varying settings.
The misconception in many score-based diffusion models is in treating the form of the score as unknown when it is  the parameter evolution that is uncertain or unobserved.

This insight extends naturally to more general Gaussian processes, including those governed by \textit{fractional Brownian motion (fBm)} and \textit{Gauss-Volterra processes}. In both subdiffusion ($H < 0.5$) and superdiffusion ($H > 0.5$) regimes, the marginal distributions at each time point remain Gaussian due to the underlying linear dynamics and Gaussian noise. Consequently, their corresponding score functions also admit closed-form expressions, preserving the linearity in the state variable.

\subsection{Sub/Super  diffusion generative model of mean-field type }
We introduce a  mean-field-type fractional diffusion generative model. The meqn-field type terms are mainly involved in the backward process where the denoising pass requires the mean and the variance to match the original process. We do not need the entire mean field state distribution in this particular case, we only need the first two moments.
\subsection*{ Fractional Forward Dynamics }
\[
dx(t) = \theta(t)(\bar{m}(t) - x(t))\,dt + \sigma(t)\,dB^H(t), \quad x(0) = x_0,
\]
\subsection*{ Distribution of $x(t)$}
The solution \( x(t) \) admits the explicit form
\[
x(t) = e^{-\Phi(t)} x_0 + e^{-\Phi(t)} \int_0^t e^{\Phi(t')} \theta(t') \bar{m}(t')\,dt' + e^{-\Phi(t)} \int_0^t e^{\Phi(t')} \sigma(t')\,dB^H(t').
\]
where $\Phi(t) := \int_0^t \theta(t')\,dt'.$  In this deterministic coefficient setting and deterministic initial signal $x_0,$ the random variable $x(t)$  has the same distribution as the Gaussian $\mathcal{N}(m(t), v^2(t)).$
where 
\begin{equation}
    \begin{array}{l}      
m(t) = e^{-\Phi(t)} x_0 + e^{-\Phi(t)} \int_0^t e^{\Phi(t')} \theta(t') \bar{m}(t')\,dt',\\
v^2(t) = e^{-2\Phi(t)} H(2H-1) \int_0^t \int_0^t e^{\Phi(t')} \sigma(t') e^{\Phi(s')} \sigma(s') |t' - s'|^{2H - 2} \,dt' \,ds',
    \end{array}
\end{equation}

\subsection*{ Endpoint Conditioning }

\[
dx(t) = \theta(t) \left( \frac{m_T^* - x_0 e^{-\Phi(T)}}{1 - e^{-\Phi(T)}} - x(t) \right) dt + \sigma_T^*  T^{-H} e^{\Phi(T) - \Phi(t)} dB^H(t), \quad x(0) = x_0,
\]
where satisfies $x(T) \sim \mathcal{N}(m_T^*, (\sigma_T^*)^2). $

\subsection*{Reverse-Time fractional SDE:}
Given the forward process is Gaussian, the reverse-time dynamics are governed by:
\begin{equation}
\begin{array}{l}
dx(t) = 
\left[ \theta \left( \frac{m^*_T - e^{-\Phi(T)} x_0}{1 - e^{-\Phi(T)}} - x(t) \right) + \frac{x(t) - m(t)}{t} 2H  \right] dt \\ 
+ (\sigma_T^*)  T^{-H} e^{\Phi(T) - \Phi(t)}\, d\bar{B}^H(t),
\\
\\
m(t) =
e^{-\Phi(t)} x_0 +(1-e^{-\Phi(t)}) (\frac{m^*_T - e^{-\Phi(T)} x_0}{1 - e^{-\Phi(T)}}) ,
\\
v^2(t) =  e^{2\Phi(T)-2\Phi(t)} (\frac{t}{T})^{2H} (\sigma_T^*)^2,

\end{array}
\end{equation}
where $\bar{B}^H(t)$ is a time-reversed fractional Brownian motion, and $\tilde{\sigma}(t)$ adjusts for non-Markovian memory in the reverse direction.

\subsection*{Score Function}
Because this process remains Gaussian, the exact score function is known analytically:
\begin{equation}
\nabla_y \log p_{x(T-t)}(y) = -\frac{y - m(T-t)}{v^2(T-t)}.
\end{equation}

\subsection{Fractional  mean-field type transfusion  architecture}

Let $x(t_1), x(t_2), \dots, x(t_N)$ be a sequence of states from a fractional diffusion process. Define the sequence input to the transformer as:
\[
z_i^{(0)} = \mathrm{Embed}(x(t_i)) + \mathrm{PE}(t_i), \quad i \in \{ 1, \dots, N\},
\]
where
    $\mathrm{Embed}: \mathbb{R}^d \to \mathbb{R}^D$ is a learned embedding,
    $\mathrm{PE}(t_i)$ is a positional encoding adapted to fractional time,
    $z_i^{(0)} \in \mathbb{R}^D$ is the initial token embedding.

\subsection*{Holonorm}

Define the Holonorm or  ProjectiveNorm map $hn: \mathbb{R}^D \to \mathbb{R}^D$ as:
\[
hn(x) := \frac{x}{1 + \|x\|}.
\]

This operator is used throughout the network both for normalization and nonlinearity. Figures \ref{figholonorm00}, \ref{holonorm_1d_scaled} and \ref{figholonorm01} plot in 1D vs 2D the holonorm and related functions.

\begin{figure}
\centering
        \includegraphics[width=\linewidth]{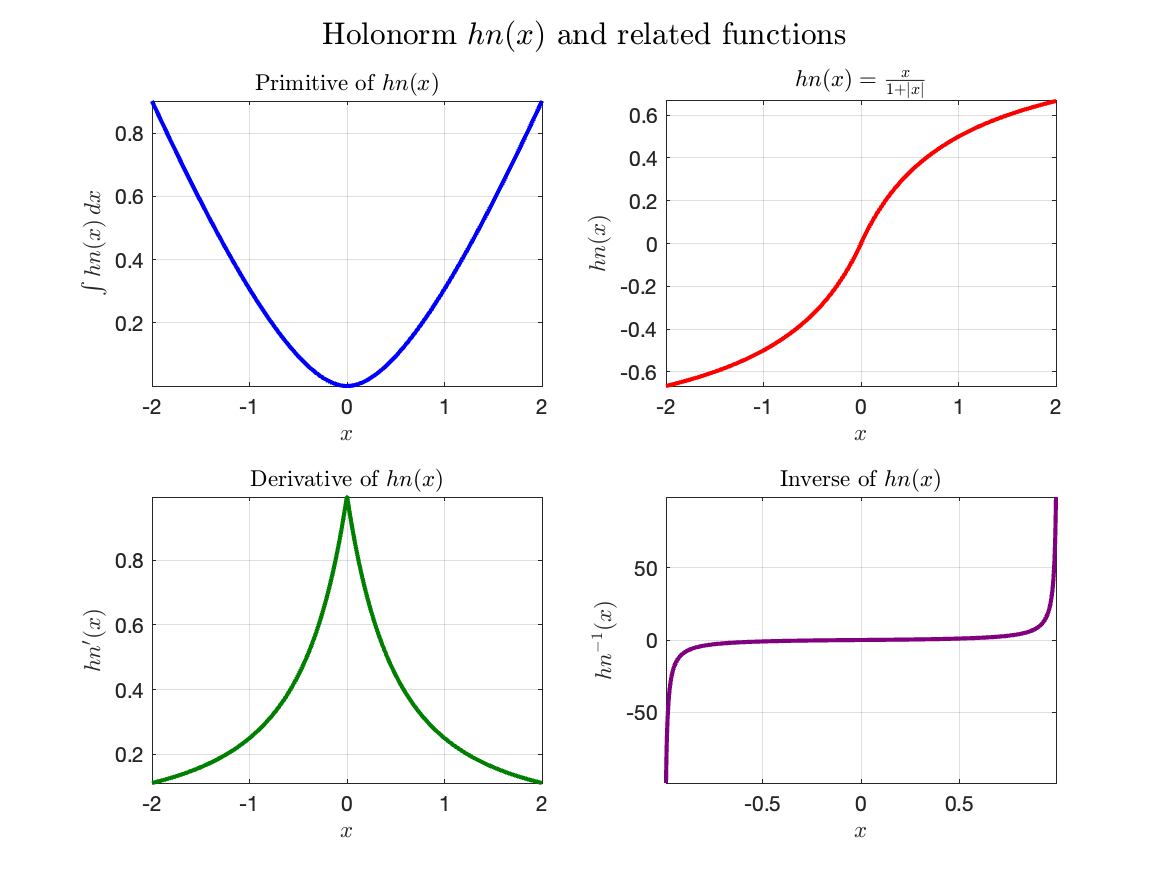}
\caption{Holonorm map \( hn(x) = \frac{x}{1 + |x|} \) in 1D and  its primitive, derivative, inverse map} \label{figholonorm00}
\end{figure}
\begin{figure}
\centering
        \includegraphics[width=\linewidth]{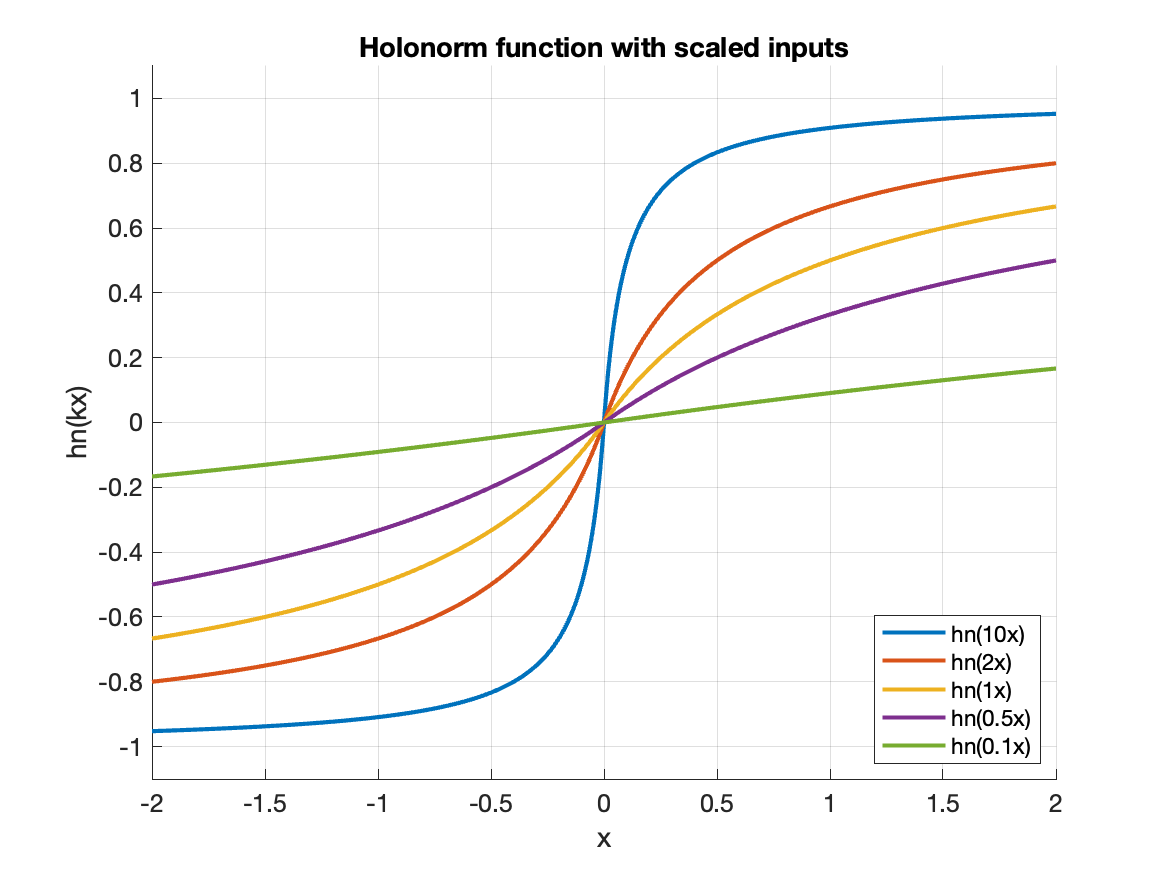}
\caption{Holonorm map \( hn(kx) = \frac{kx}{1 + |kx|} \) in 1D } \label{holonorm_1d_scaled}
\end{figure}

\begin{figure}
\centering
        \includegraphics[width=\linewidth]{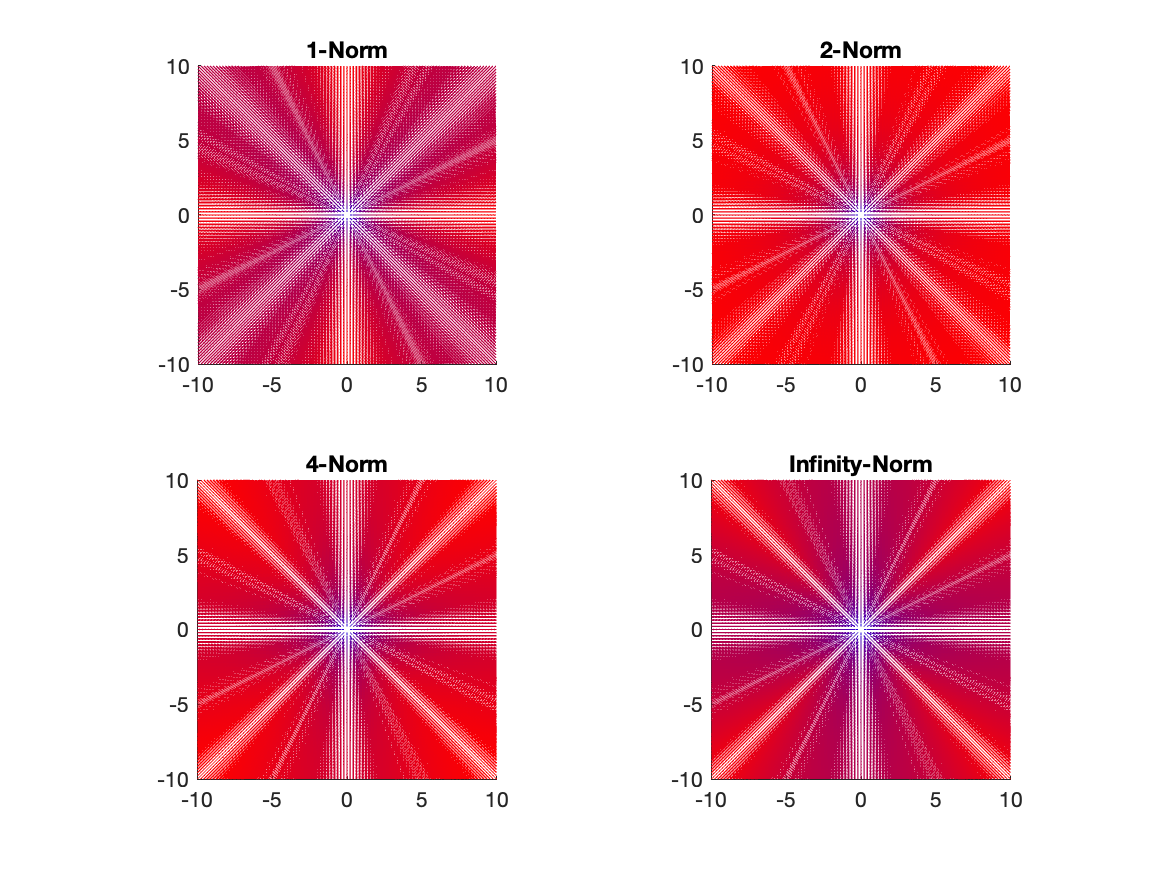}
\caption{Holonorm map \( hn(x,y) = \frac{(x,y)}{1 + \|(x,y)\|_p} \) in 2D.} \label{figholonorm01}
\end{figure}

\subsection*{Transformer Block Structure}

Each layer $\ell \in \{ 1, \dots, L\}$ applies the following composition to the input sequence:
\[
z^{(\ell)} = \left( \mathrm{id} + f \circ hn \right) \circ \left( \mathrm{id} + \mathrm{MHA} \circ hn \right)(z^{(\ell-1)}),
\]
where $\mathrm{MHA}$ is multihead self-attention: $\mathbb{R}^{N \times D} \to \mathbb{R}^{N \times D}$,
     $f$ is a two-layer feedforward map with \textbf{holonorm  activation}:
\[
    f(x) = W_2 \, hn(W_1 x + b_1) + b_2,
\]
    with $W_1 \in \mathbb{R}^{D \times d_{\text{ff}}}$, $W_2 \in \mathbb{R}^{d_{\text{ff}} \times D}$,
     $\mathrm{id}$ is the identity map (residual connection),
     $hn(x)$ is applied both before MHA as a layernorm and as the activation in the feedforward.

\subsection*{Output}

After $L$ layers:
\[
z_i^{\text{out}} = z_i^{(L)}, \quad i \in \{ 1, \dots, N\},
\]
which may be passed to a score head, decoder, or used in further computations.

\subsection*{$\tanh$ vs. HoloNorm in Audio Signal Processing}

In audio signal processing, \textbf{reconstruction fidelity} and the \textbf{preservation of orthogonality} between signal components are essential. Applications such as \textit{source separation}, \textit{beamforming}, \textit{PCA/ICA}, and \textit{multi-channel mixing} rely heavily on maintaining the angular relationships between different signal vectors. 

Why Orthogonality Matters? Orthogonality ensures that signals remain uncorrelated and independent, which is fundamental for accurate signal separation, clean spatial localization, Low cross-talk between channels, robust compression and denoising. 

\subsubsection*{Impact of \texttt{tanh} Transformation} The hyperbolic tangent function, applied element-wise, is defined as: \[ \tanh(x_i) = \frac{e^{x_i} - e^{-x_i}}{e^{x_i} + e^{-x_i}} \] While it maps each component to the range $(-1, 1)$, it does so independently for each element. This non-linear squashing has several adverse effects: Orthogonal vectors may no longer remain orthogonal after transformation,  Unrelated or independent signals can appear correlated post-transformation, angular relationships between vectors are non-linearly warped, making recovery difficult. 

The Inner product between data points is used generative machine intelligence   is used to compute correlation, angle, direction, or similarity score.
A basic example with the orthogonal directions is given by $(1,2,3), (12, 3, -6), (1,-2,1). $ Now, if  one applies tanh to these vectors element by element, then becomes correlated, they are orthogonal anymore. This creates a distortion and misleading separation, creates interference and many more other issues in text correlation, audio reconstruction, video reconstruction etc.  This behavior makes $\tanh$ unsuitable for audio reconstruction tasks, where structural fidelity is crucial.

\subsubsection*{Advantages of HoloNorm} The holonorm or radial \texttt{projectivenorm} is defined as: \[ \mathbf{x} \mapsto \frac{\mathbf{x}}{1 + \|\mathbf{x}\|} \] This transformation is \textbf{scale-sensitive} but \textbf{direction-preserving}. Unlike $\tanh$, it scales the entire vector uniformly rather than compressing individual components. 

\begin{itemize} \item \textbf{Preserves direction:} The angle between vectors is approximately preserved. \item \textbf{Maintains orthogonality:} Orthogonal signals remain nearly orthogonal after transformation. \item \textbf{No distortion of unrelated signals:} Independent vectors remain uncorrelated. \item \textbf{Suited for reconstruction:} Original signal geometry is largely preserved, allowing better recovery. \end{itemize}

 For audio tasks where signal independence and reconstruction are critical, $\tanh$ introduces nonlinearities that degrade performance by distorting angular and correlation structures. In contrast, holonorm or \texttt{projectivenorm} retains the original signal geometry and is thus preferable for applications requiring high-fidelity audio representation. Table \ref{scaledtable} provides several properties of $hn$ and their similarities and dissimilarities with tanh. Figure \ref{scaledfig} illustrates  $hn(15x)$ vs $tanh(x)$

\begin{table}[h!]
\centering
\begin{tabular}{|p{3cm}|p{4cm}|p{4cm}|}
\hline
\textbf{Property} & \textbf{Tanh} & \textbf{HoloNorm}: hn \\
\hline
\textbf{Map} & $x \mapsto \tanh(x)$ & $\mathbf{x} \mapsto \frac{\mathbf{x}}{1 + \|\mathbf{x}\|}=hn(x)$ \\
\hline
\textbf{Inverse} & $\tanh^{-1}(x) = \frac{1}{2} \ln\left( \frac{1 + x}{1 - x} \right)$ & $\mathbf{y} \mapsto \frac{\mathbf{y}}{1 - \|\mathbf{y}\|}$, with $\|\mathbf{y}\| < 1$ \\
\hline
\textbf{Derivative} & $\frac{d}{dx} \tanh(x) = 1 - \tanh^2(x)$ & $\frac{I(1 + \|\mathbf{x}\|) - \frac{\mathbf{x} \mathbf{x}^T}{\|\mathbf{x}\|}}{(1 + \|\mathbf{x}\|)^2}$ \\
\hline
\textbf{Primitive} & $\ln(\cosh(x)) + C$ & $|x| - \ln(1 + |x|) + C$ \\
\hline
\textbf{Preserves orthogonality} & No & Yes \\
\hline
\textbf{Preserves direction} & No & Yes \\
\hline
\textbf{Suitable for audio reconstruction} & Poor & Strong \\
\hline
\textbf{Component-wise distortion} & Yes & No \\
\hline
\textbf{Induced correlation between signals} & Yes & \textcolor{green}{No} \\
\hline
\textbf{Bounded output} & Yes $(-1,1)$ per component & Yes (open unit ball in $\mathbb{R}^n$) \\ \hline 
Hilbert spaces& Intractable due to infinite number of coefficients& compute the norm in the Hilbert space
\\
\hline
\end{tabular}
\caption{Comparison of $\tanh$ vs. HoloNorm in Audio Signal Processing} \label{scaledtable}
\end{table}

\begin{figure}
 
\begin{tikzpicture}
  \begin{axis}[
    width=12cm,
    height=8cm,
    xlabel={$x$},
    ylabel={Output},
    grid=both,
    axis lines=middle,
    legend style={at={(0.97,0.03)},anchor=south east},
    samples=500,
    domain=-2:2,
    enlargelimits=true
  ]
    \addplot[blue, thick] {15*x / (1 + abs(15*x))};
    \addlegendentry{$\frac{15x}{1 + |15x|}$}

    \addplot[red, dashed, thick] {tanh(5*x)};
    \addlegendentry{$\tanh(5x)$}
  \end{axis}
\end{tikzpicture}
\caption{$hn(15x) $ vs $tanh(5x)$ } \label{scaledfig}
\end{figure}
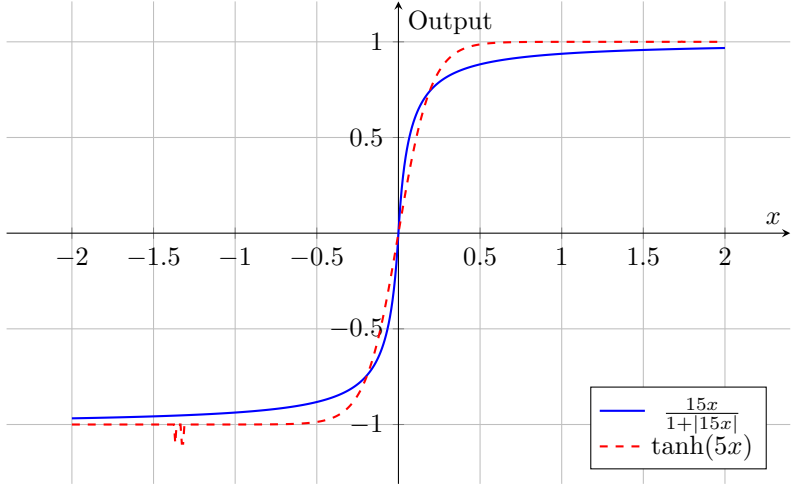

 \subsection{Multidimensional case}

\begin{proposition}
Let $x(t) \in \mathbb{R}^d$ evolve under the fractional stochastic differential equation
$$
dx(t) = \Theta(t)(\bar{m}(t) - x(t)) dt + \Sigma(t) dB^H(t), \quad x(0) = x_0,
$$
where:
\begin{itemize}
    \item $\Theta(t) \in \mathbb{R}^{d \times d}$ is a continuous, invertible matrix-valued function;
    \item $\bar{m}(t) \in \mathbb{R}^d$ is vector-valued and continuous;
    \item $\Sigma(t) \in \mathbb{R}^{d \times d}$ is a time-varying volatility matrix;
    \item $B^H(t) \in \mathbb{R}^d$ is a fractional Brownian motion with independent components and common Hurst index $H \in \left( \tfrac{1}{2}, 1 \right)$.
\end{itemize}

Then for any prescribed terminal mean $m_T^* \in \mathbb{R}^d$ and covariance matrix $\Sigma_T^* \succ 0$, there exists a choice of constant $\bar{m}$ and continuous matrix-valued function $\Sigma(t)$, such that:
$$
x(T) \sim \mathcal{N}(m_T^*, \Sigma_T^*).
$$
\end{proposition}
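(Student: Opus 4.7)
The plan is to mirror the scalar proof but replace scalar integrating factors with a matrix propagator and extract both the mean and the covariance of $x(T)$ in closed form, then invert the mean equation for $\bar{m}$ and the covariance equation for $\Sigma(t)$. First I would define the fundamental matrix $U(t,s) \in \mathbb{R}^{d \times d}$ as the unique solution to $\partial_t U(t,s) = -\Theta(t) U(t,s)$ with $U(s,s) = I$. Because $\Theta(\cdot)$ is continuous, $U(t,s)$ is a well-defined invertible propagator (a time-ordered exponential that reduces to $\exp(-\int_s^t \Theta(r)\,dr)$ only when $\Theta$ commutes with itself at different times). Applying variation of parameters to the SDE gives the explicit representation
\[
x(t) = U(t,0) x_0 + \int_0^t U(t,t') \Theta(t') \bar{m}(t')\,dt' + \int_0^t U(t,t') \Sigma(t')\,dB^H(t'),
\]
with the last integral interpreted in the Young sense for $H > 1/2$.

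Next I would argue that $x(t)$ is Gaussian: the first two terms are deterministic, and the third is a Young integral of a deterministic matrix-valued integrand against a Gaussian process, hence Gaussian in each coordinate, and jointly Gaussian in $\mathbb{R}^d$. Its mean and covariance are
\[
m(t) = U(t,0)x_0 + \int_0^t U(t,t')\Theta(t')\bar{m}(t')\,dt',
\]
\[
\mathrm{Cov}(x(t)) = H(2H-1)\int_0^t\!\!\int_0^t U(t,t')\Sigma(t')\Sigma(s')^\top U(t,s')^\top |t'-s'|^{2H-2}\,dt'\,ds',
\]
using the fact that for independent fBm components $\mathbb{E}[dB^H_i(t')dB^H_j(s')] = \delta_{ij} H(2H-1)|t'-s'|^{2H-2}dt'ds'$.

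To satisfy the mean constraint I would specialize to $\bar{m}(t) \equiv \bar{m}$ constant, which reduces the mean equation to a linear algebraic equation
\[
m_T^* = U(T,0)x_0 + \Bigl(\int_0^T U(T,t')\Theta(t')\,dt'\Bigr)\bar{m}.
\]
Observing that $\int_0^T U(T,t')\Theta(t')\,dt' = I - U(T,0)$ (since $\tfrac{d}{dt'} U(T,t') = U(T,t')\Theta(t')$), I obtain $\bar{m} = (I - U(T,0))^{-1}(m_T^* - U(T,0)x_0)$, which is well posed whenever $I - U(T,0)$ is invertible; this is guaranteed, for instance, when the eigenvalues of $\int_0^T \Theta(r)dr$ have positive real parts.

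For the covariance constraint I would make the ansatz $\Sigma(t) = U(0,t)\,C$ with $C \in \mathbb{R}^{d \times d}$ constant, so that $U(T,t')\Sigma(t') = U(T,0)C$ factors out of the double integral. The covariance equation then collapses to
\[
\Sigma_T^* = H(2H-1)\Bigl(\int_0^T\!\!\int_0^T |t'-s'|^{2H-2}dt'ds'\Bigr) U(T,0)CC^\top U(T,0)^\top = \frac{T^{2H}}{1}\,U(T,0)CC^\top U(T,0)^\top,
\]
after evaluating the scalar double integral as in the scalar proposition. Choosing $C = U(T,0)^{-1} L\, T^{-H}$ where $L$ is any square root of $\Sigma_T^*$ (e.g. the Cholesky factor) solves the equation and produces a continuous $\Sigma(t)$, completing the construction.

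The main obstacle I expect is the noncommutativity of $\Theta(t)$ at different times, which blocks the naive scalar identity $\int_0^T e^{\Phi(T)-\Phi(t')}\Theta(t')dt' = e^{\Phi(T)} - 1$; this is why the step identifying $\int_0^T U(T,t')\Theta(t')dt' = I - U(T,0)$ via the derivative of $U$ in its second argument is essential, and it is also why the ansatz $\Sigma(t)=U(0,t)C$ is needed to decouple the propagator from the fractional kernel. A secondary subtlety is ensuring that $I - U(T,0)$ is invertible so the mean equation can be solved for $\bar{m}$; some mild spectral assumption on $\Theta$ (or, failing that, a perturbed $\bar{m}(t)$ chosen to complete the mean in a neighborhood of $T$) is required, and the statement should be read with this implicit genericity.
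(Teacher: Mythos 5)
Your proposal is correct and follows essentially the same route as the paper: a matrix propagator $U$, variation of parameters, a constant $\bar{m}$ solved from a linear equation for the mean, and a volatility of the form ``propagator times a constant matrix'' so that the fractional covariance kernel collapses to the scalar factor $H(2H-1)\int_0^T\int_0^T|t'-s'|^{2H-2}\,dt'\,ds' = T^{2H}/ \lambda$ with $\lambda=1$, exactly as in the paper (which writes $\Sigma(t)=U(t)C$ and $C=(\lambda_T^{-1}U(T)^{-1}\Sigma_T^*U(T)^{-\top})^{1/2}$). One small slip: your ansatz should read $\Sigma(t)=U(t,0)\,C$, not $U(0,t)\,C$, since the factorization you invoke uses $U(T,t')U(t',0)=U(T,0)$; with that correction everything goes through, and your explicit identity $\int_0^T U(T,t')\Theta(t')\,dt'=I-U(T,0)$ together with the invertibility caveat on $I-U(T,0)$ is exactly the (unstated) condition the paper needs for its matrix $M=\int_0^T U^{-1}(s)\Theta(s)\,ds$ to be invertible, so your added care there is a genuine improvement rather than a divergence.
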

\paragraph{Proof.}

Let $U(t) \in \mathbb{R}^{d \times d}$ be the unique solution to the matrix ODE:
$$
\frac{d}{dt} U(t) = -\Theta(t) U(t), \quad U(0) = I_d.
$$
Then the explicit solution to the SDE is:
$$
x(t) = U(t)x_0 + U(t) \int_0^t U^{-1}(s)\Theta(s)\bar{m} \, ds + U(t) \int_0^t U^{-1}(s)\Sigma(s) \, dB^H(s),
$$
with the stochastic integral interpreted in the Young sense.

\textbf{Step 1: Mean Matching.}  
Define:
$$
M := \int_0^T U^{-1}(s)\Theta(s) \, ds.
$$
Then the mean of $x(T)$ is:
$$
\mathbb{E}[x(T)] = U(T)x_0 + U(T)M \bar{m}.
$$
To enforce $\mathbb{E}[x(T)] = m_T^*$, solve:
$$
\bar{m} = M^{-1}(U(T)^{-1} m_T^* - x_0).
$$

\textbf{Step 2: Covariance Matching.}  
Let:
$$
Z := U(T) \int_0^T U^{-1}(s)\Sigma(s) \, dB^H(s),
$$
so that:
$$
\operatorname{Cov}[x(T)] = \operatorname{Cov}[Z] = U(T) K U(T)^\top,
$$
where:
$$
K = H(2H - 1) \int_0^T \int_0^T f(s) |s - r|^{2H - 2} f(r)^\top \, ds \, dr, \quad f(s) := U^{-1}(s)\Sigma(s).
$$

Choose $f(s) = C \in \mathbb{R}^{d \times d}$ constant. Then:
$$
K = \lambda_T \cdot C C^\top, \quad \lambda_T := H(2H - 1) \int_0^T \int_0^T |s - r|^{2H - 2} \, ds \, dr.
$$
Thus:
$$
\operatorname{Cov}[x(T)] = \lambda_T \cdot U(T) C C^\top U(T)^\top.
$$
To match $\Sigma_T^*$, choose:
$$
C = \left( \lambda_T^{-1} U(T)^{-1} \Sigma_T^* U(T)^{-\top} \right)^{1/2}, \quad \Sigma(t) = U(t) C.
$$
 
With the above choice of constant mean $\bar{m}$ and time-varying volatility $\Sigma(t)$, the process satisfies:
$$
x(T) \sim \mathcal{N}(m_T^*, \Sigma_T^*).
$$
$\square$

\begin{proposition}

Let \( x(t) \in \mathbb{R}^d \) evolve under the linear fractional stochastic differential equation:
\[
dx(t) = \Theta(t) \left( \bar{m}(t) - x(t) \right) dt + \Sigma(t) dB^H(t), \quad x(0) = x_0 \in \mathbb{R}^d,
\]

Then, for each \( t \in [0, T] \), the process \( x(t) \) is Gaussian with mean \( \mu(t) \in \mathbb{R}^d \) and covariance matrix \( \mathbb{V}(t) \in \mathbb{R}^{d \times d} \), given by:

\[
m(t) = U(t) x_0 + U(t) \int_0^t U^{-1}(s) \Theta(s) \bar{m}(s) \, ds,
\]
\[
\mathbb{V}(t) = U(t) C(t) U(t)^\top,
\]
where \( U(t) \in \mathbb{R}^{d \times d} \) is the fundamental matrix of
\[
\dot{U}(t) = -\Theta(t) U(t), \quad U(0) = I_d,
\]
and
\[
C(t) = H(2H-1) \int_0^t \int_0^t f(s) f(r)^\top |s - r|^{2H - 2} \, ds \, dr, \quad f(s) := U^{-1}(s) \Sigma(s).
\]

The score function of the law of \( x(t) \), i.e., the gradient of the log-density with respect to \( x \), is given by:
\[
\nabla_x \log p(x(t)) = -\mathbb{V}(t)^{-1} (x(t) - m(t)).
\]
\end{proposition}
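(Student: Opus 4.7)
The plan is to treat this as the natural multidimensional analogue of the two scalar Gaussianity and variance results proved earlier in the paper, and to exploit the fact that the drift is linear and the diffusion coefficient is deterministic, so that the solution is an affine functional of the driving fractional Brownian motion and is therefore Gaussian with computable first two moments.

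First I would verify the closed-form representation
\[
x(t) = U(t) x_0 + U(t)\int_0^t U^{-1}(s)\Theta(s)\bar{m}(s)\,ds + U(t)\int_0^t U^{-1}(s)\Sigma(s)\,dB^H(s),
\]
by setting $y(t) := U^{-1}(t)x(t)$ and differentiating: since $\dot{U}(t) = -\Theta(t)U(t)$ gives $\frac{d}{dt}U^{-1}(t) = U^{-1}(t)\Theta(t)$, one obtains $dy(t) = U^{-1}(t)\Theta(t)\bar{m}(t)\,dt + U^{-1}(t)\Sigma(t)\,dB^H(t)$, and integration together with left-multiplication by $U(t)$ yields the formula. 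The Young integral against $B^H$ is well defined because $U^{-1}\Sigma$ is $C^1$ in $t$ (hence H\"older of any exponent), and $H>1/2$ implies the joint H\"older regularity exceeds $1$.

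Next I would extract Gaussianity and the mean. The first two summands are deterministic, and the third is a Young integral of a deterministic matrix-valued integrand against a Gaussian process, so component-wise it is a linear functional of $B^H$ and therefore jointly Gaussian. Linearity of expectation and the centered nature of $B^H$ immediately give $\mathbb{E}[x(t)] = m(t)$. For the covariance I would use the multidimensional fBm isometry: for a deterministic matrix-valued $g:[0,t]\to\mathbb{R}^{d\times d}$ with suitable regularity and independent scalar components of $B^H$,
\[
\operatorname{Cov}\!\left[\int_0^t g(s)\,dB^H(s)\right] = H(2H-1)\int_0^t\!\int_0^t g(s) g(r)^{\top} |s-r|^{2H-2}\,ds\,dr.
\]
Applying this to $g(s)=U^{-1}(s)\Sigma(s)=f(s)$ and then conjugating by $U(t)$ on both sides yields $\mathbb{V}(t)=U(t)C(t)U(t)^{\top}$ as claimed.

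Finally, assuming $\mathbb{V}(t)$ is positive definite (which holds as soon as $\Sigma(s)$ is invertible on a set of positive measure in $[0,t]$ and $U(t)$ is invertible, both guaranteed here), the law of $x(t)$ admits the Gaussian density
\[
p(x) = (2\pi)^{-d/2}\det(\mathbb{V}(t))^{-1/2}\exp\!\left(-\tfrac{1}{2}(x-m(t))^{\top}\mathbb{V}(t)^{-1}(x-m(t))\right),
\]
and a direct gradient computation gives $\nabla_x\log p(x) = -\mathbb{V}(t)^{-1}(x-m(t))$, which is the announced closed-form score. The main obstacle I expect is the rigorous justification of the vector-valued Young-integral covariance identity under only the stated regularity on $\Theta,\bar m,\Sigma$; the cleanest route is to invoke the deterministic-integrand fBm isometry on each scalar component pair $(B^{H,i},B^{H,j})$ exploiting their independence for $i\neq j$, and then assemble the outer product $f(s)f(r)^{\top}$. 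A secondary technicality is ensuring $\mathbb{V}(t)$ is invertible so that the density and its score are well defined; if $\Sigma$ is only positive semidefinite, one should replace $\mathbb{V}(t)^{-1}$ by a Moore--Penrose pseudoinverse and restrict the score identity to the range of $\mathbb{V}(t)$.
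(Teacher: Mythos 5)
Your proposal is correct and follows essentially the same route as the paper: variation of constants via the fundamental matrix $U(t)$ to get the explicit affine representation, Gaussianity from the deterministic integrand, the fBm Young-integral covariance identity to obtain $C(t)$, and a direct gradient of the Gaussian log-density for the score. You in fact supply more detail than the paper does (the $y=U^{-1}x$ derivation, the component-wise isometry argument using independence of the fBm coordinates, and the invertibility caveat on $\mathbb{V}(t)$), all of which is consistent with the stated result.
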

\textbf{Proof}: Since the SDE is linear with Gaussian noise, and the coefficients are deterministic and regular, the solution \( x(t) \) admits the explicit expression:
\[
x(t) = U(t) x_0 + U(t) \int_0^t U^{-1}(s) \Theta(s) \bar{m}(s) \, ds + U(t) \int_0^t U^{-1}(s) \Sigma(s) \, dB^H(s).
\]
Define:
\[
m(t) := \mathbb{E}[x(t)] = U(t) x_0 + U(t) \int_0^t U^{-1}(s) \Theta(s) \bar{m}(s) \, ds,
\]
\[
Z := \int_0^t U^{-1}(s) \Sigma(s) \, dB^H(s),
\]
\[
C(t) := \text{Cov}(Z) \text{ as given by Young integration theory.}
\]
Then:
\[
\text{Cov}[x(t)] = \mathbb{V}(t) = U(t) C(t) U(t)^\top.
\]
As \( x(t) \sim \mathcal{N}(m(t), \mathbb{V}(t)) \), the Gaussian score is:
\[
\nabla_x \log p(x(t)) = -\mathbb{V}(t)^{-1} (x(t) - m(t)),
\]
which completes the proof.

\hfill $\square$

 \section{Implementation}

 \subsection{Audio Fractional Diffusion}
 In this subsection we implement the mean-field-type fractional diffusion generative model. 
 \subsubsection*{Forward Fractional Diffusion Process}
 Given a initial audio, we use the forward fractional diffusion process as follows to end up with a mask $\mathcal{N}(m_T^*, (\sigma_T^*)^2).$
 
\[
dx(t) = \theta \left( \frac{m_T^* - x_0 e^{-\theta T}}{1 - e^{-\theta T}} - x(t) \right) dt + \sigma_T^* T^{-H} e^{\theta(T - t)} dB^H(t), \quad x(0) = x_0,
\]
which satisfies the ending mask condition:
\[
x(T) \sim \mathcal{N}(m_T^*, (\sigma_T^*)^2).
\]

\subsubsection*{Reverse Fractional Diffusion Process}
Given the ending point of the forward process, we build a mean-field-type backward process as follows:
\[
\begin{aligned}
dx(t) &= \left[ \theta \left( \frac{m_T^* - e^{-\theta T} x_0}{1 - e^{-\theta T}} - x(t) \right) + \frac{x(t) - m(t)}{t} 2H \right] dt \\
&\quad + \sigma_T^* T^{-H} e^{\theta(T - t)}\, d\bar{B}^H(t),
\end{aligned}
\]
where \( m(t) \)  is given by
\[
\begin{aligned}
m(t) &= e^{-\theta t} x_0 + \left( 1 - e^{-\theta t} \right) \left( \frac{m_T^* - e^{-\theta T} x_0}{1 - e^{-\theta T}} \right), 
\end{aligned}
\]

Alternatively, one can implement directly the mean-field type distribution $\mathcal{N}(m(t), v^2(t))$ with $v^2(t) = e^{2\theta(T - t)} \left( \frac{t}{T} \right)^{2H} (\sigma_T^*)^2$ to check the backward process of mean-field type.

Figure \ref{fig:processed_audio_waveforms8}  displays the  state trajectories  for $\theta=1, x_0=1.$ The  original audio is in Tommo-So Dogon language. The forward diffusion process leads to a pure mask  $\mathcal{N}(10,0.5)$ at $T=1.$ Then the reverse process takes this pure mask and progressively unmask it to discover the reconstructed audio signal. The gap between the two signals is up to the realization of a fractional Brownian path with $H=0.8$ on the given interval.
 \begin{figure}
        \centering
        \includegraphics[width=\linewidth]{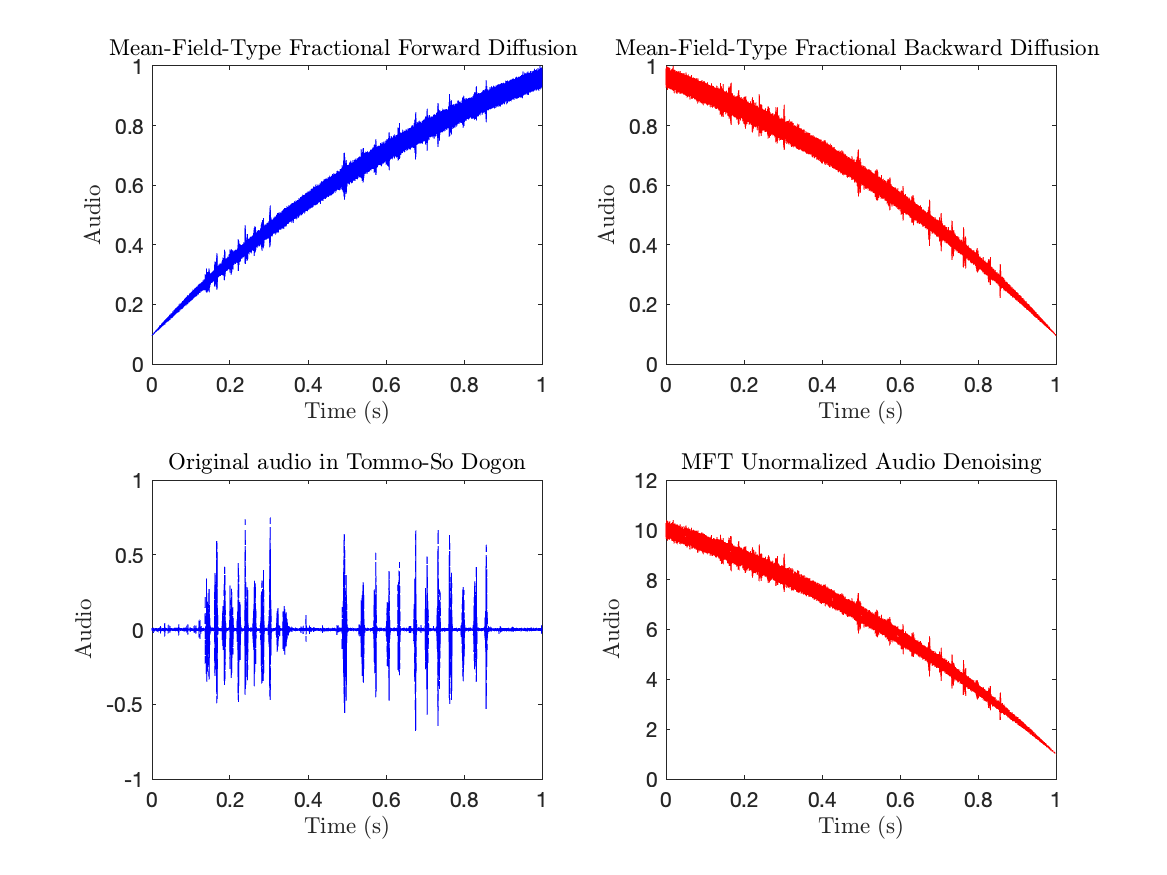}
        \caption{Mean-Field-Type Fractional Diffusion } \label{fig:processed_audio_waveforms8}
    \end{figure}

Figure \ref{fig:backward_process_density7}  displays the mean-field state distribution for $\theta=1, x_0=0.1.$ The  original audio is in Tommo-So Dogon language. The forward diffusion process leads to a pure mask  $\mathcal{N}(10,0.5)$ at $T=1.$ Then the reverse process takes this pure mask and progressively unmask it to discover the reconstructed audio signal. The gap between the two signals is up to the realization of a fractional Brownian path with $H=0.8$ on the given interval.
\begin{figure}
        \centering
        \includegraphics[width=0.4\linewidth]{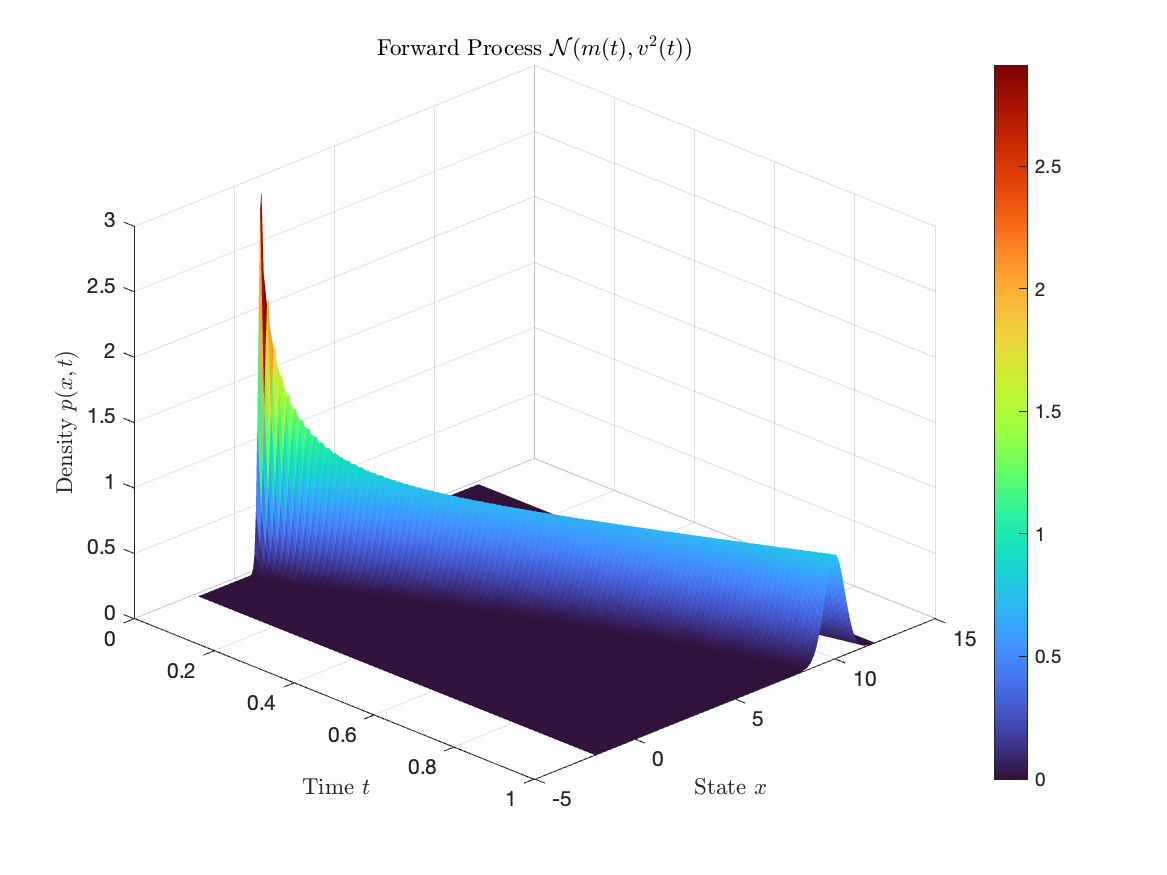}
        \includegraphics[width=0.4\linewidth]{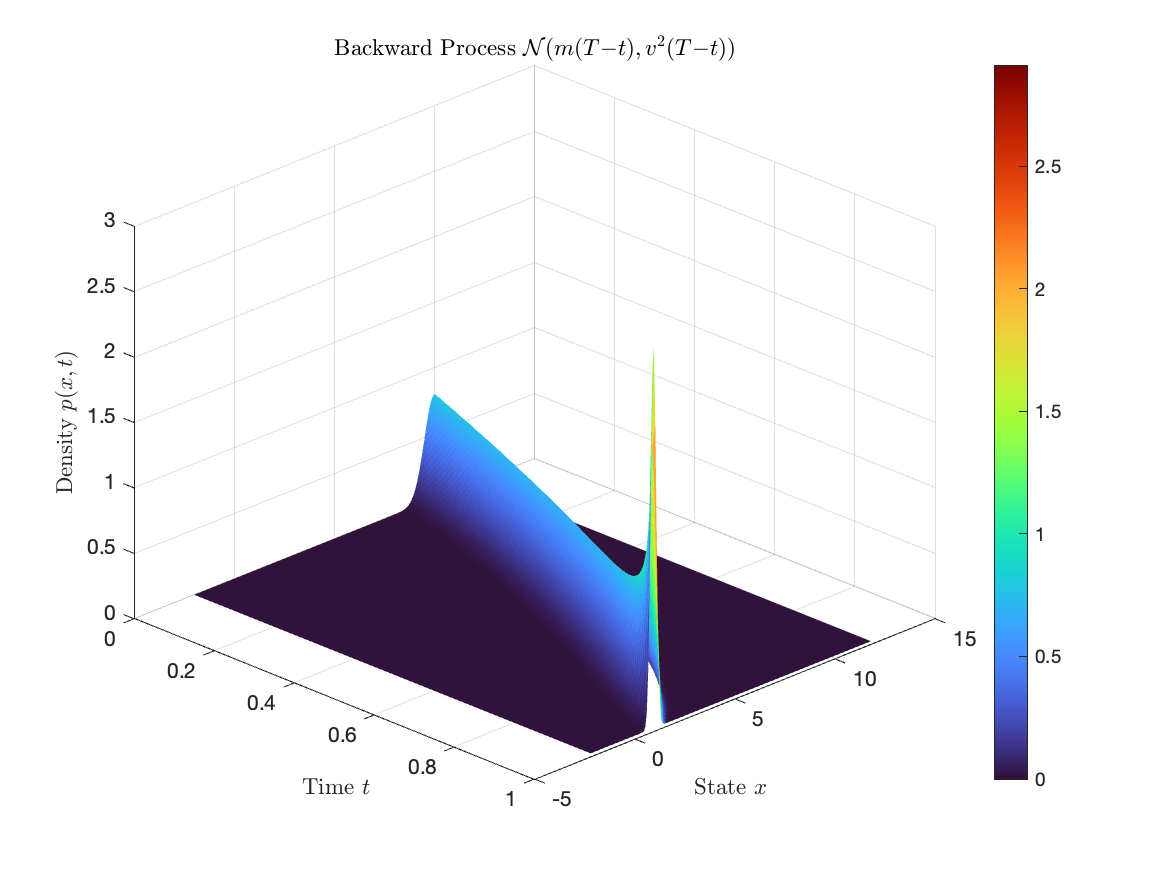}
        \caption{Mean-Field-Type Fractional  Diffusion: forward and backward } \label{fig:backward_process_density7}
    \end{figure}

\subsection{Textless Audio2Audio Fractional Diffusion Transformer}
\subsubsection{High-quality audio data for each language}
We consider a sample audio database as provided in Figures \ref{audiomatrixlabel1}, \ref{audiomatrixlabel2} and \ref{blockaudiomatrixlabel}. Table \ref{table:softmaxttea2a} provides a typical sample dictionary (but in audio.)
13 audiowords from Table \ref{table:softmaxttea2a2} are selected for a test in Amharic, Arabic, Bambara, Bariba, Bété,  Dendi,  Hassaniya, Hausa, Kinyarwanda, Kinyarwanda, Lingala, Malagasy, Moore,  Oshiwambo, Sango, Sereer, Senufo, Soninke, Soussou, Swahili, Tigrinya, Tommo-So Dogon, Waale, Wolof, Zarma, Zulu.
Amharic is a Semitic language of the Afroasiatic family, descended from Ge'ez. It is the official working language of Ethiopia and primarily spoken by the Amhara people in the central highlands. Its development is tightly linked to the Solomonic dynasty and the Ethiopian Orthodox Church.
Arabic, a Central Semitic language of the Afroasiatic family, originated in the Arabian Peninsula. Classical Arabic evolved into numerous modern dialects. It is an official language in over 20 countries across North Africa and the Middle East, including Egypt, Sudan, Mauritania, and Algeria.
Bambara (Bamanakan) is a Mande language in the Niger-Congo family, historically rooted in the Mali Empire. It serves as a lingua franca in Mali, spoken predominantly by the Bambara ethnic group but also widely used across ethnic boundaries in urban centers.
Bariba, or Baatonum, is a Gur language of the Niger-Congo family, associated with the Bariba people. It is spoken primarily in northeastern Benin and parts of western Nigeria. The language has historical ties to the Borgu Kingdoms.
Bété is a Kru language of the Niger-Congo family, spoken in southwestern Côte d'Ivoire. It is associated with the Bété people, who have historically resisted colonial influence, making the language a vehicle for cultural identity and oral literature.
Dendi is mainly spoken in Northern Benin, but also in other parts of Benin, and neighbouring countries. The Dendi people are the main group in the Departments of Alibori, Borgou, Donga, and Atakora.
In Nigeria, the Dendi people are found in Bordering States (Kebbi, Kwara, Niger, and Sokoto), and in other parts of Nigeria. They are usually referred by the Hausa name Dendawa (which is also used for the Songhai people).
Hassaniya Arabic is a Maghrebi Arabic dialect with strong Berber and Zenaga substrata, spoken primarily in Mauritania and parts of Mali, Senegal, and Morocco. It reflects deep nomadic traditions of the Moors and linguistic influence from Sub-Saharan Africa.
Hausa is a Chadic language of the Afroasiatic family, originating in northern Nigeria and southern Niger. It is the largest indigenous African language by number of speakers and a major lingua franca in West Africa, heavily influenced by Arabic through Islam.
Kinyarwanda is a Bantu language of the Niger-Congo family, mutually intelligible with Kirundi (Burundi). It is the national language of Rwanda. Its structure is highly agglutinative, with noun-class morphology.
Lingala is a Bantu language that developed as a trade language along the Congo River. It is widely spoken in the Democratic Republic of the Congo and the Republic of the Congo. Originating from Bobangi, Lingala was standardized by Christian missionaries.
Malagasy is an Austronesian language, specifically of the Malaya-Polynesian branch, and the national language of Madagascar. It traces its origins to the Ma’anyan language of southern Borneo, indicating ancient transoceanic migrations across the Indian Ocean.
Moore (Mòoré) is a Gur language spoken primarily by the Mossi people in Burkina Faso. It serves as a dominant language of trade and education in the country. It is structurally tonal, with syntactic patterns common to Voltaic languages.
Oshiwambo refers to a cluster of Bantu dialects, notably Kwanyama and Ndonga, spoken by the Ovambo people in northern Namibia and southern Angola. It is central to identity and governance in Namibia, where it is widely used in public communication.
Sango is a Ngbandi-based creole and national language of the Central African Republic. Initially a trade language along the Ubangi River, Sango became creolized through contact with French and other African languages during colonial rule.
Serer (or Sereer) is a language spoken by over  a million people in Senegal, the Gambia, and parts of Mauritania. It  features a rich oral tradition, tonal distinctions, and a complex noun class system.
Senufo comprises a group of Gur languages spoken in northern Côte d’Ivoire, southern Mali, and Burkina Faso. It is associated with the Senufo cultural complex, known for its caste systems, masked dances, and philosophical oral traditions.
Soninke is a Mande language of the Niger-Congo family, spoken by the Soninke people in Mali, Senegal, Mauritania, and Guinea. It traces back to the Ghana Empire and is characterized by noun classes, rich verb morphology, and tonal systems.
Soussou (Susu) is a Mande language spoken primarily in the coastal regions of Guinea. It serves as a lingua franca in Conakry and surrounding areas and is one of the national languages of Guinea.
Swahili (Kiswahili) is a Bantu language with significant Arabic influence, historically used as a trade language along the East African coast. It is widely spoken across Tanzania, Kenya, Uganda, and parts of Mozambique and the Democratic Republic of the Congo.
Tigrinya is a Semitic language of the Afroasiatic family, spoken in Eritrea and the Tigray region of Ethiopia. It descends from Ge’ez and shares grammatical features with Amharic, including a subject–object–verb order and gendered verb forms.
Tommo-So is one of the Dogon languages  spoken in  Mali and Burkina Faso. Characterized by minimal phonological tones and complex vowel harmony, it is used primarily in ritual, agriculture, and oral storytelling among the Dogon people.
Waale is a less-documented dialect of the Gur language cluster, related to Dagaare and Wali, spoken in parts of northwestern Ghana and Burkina Faso. Its status is primarily oral, with limited documentation in formal linguistics.
Wolof is a Senegambian language of the Niger-Congo family, dominant in Senegal and also spoken in the Gambia and Mauritania. It has been heavily influenced by Arabic and French and serves as the lingua franca in urban Senegal.
Zarma is a member of the Songhay language family, spoken mainly in western Niger. Closely related to Songhai, it has a subject–object–verb order and is widely used in education and media in Niger.
Zulu (isiZulu) is a Southern Bantu language of the Nguni group, spoken primarily in South Africa. It is characterized by click consonants inherited from Khoisan languages and a complex noun-class system. Zulu is one of South Africa's 11 official languages.
  
  \begin{table}[htb]
\centering

\begin{tabular}{|l|l|l||l|}
\hline 
 { \color{green} \faVolumeUp \ \ Tommo-So Dogon}  & { \color{pink}  \faVolumeUp \ \ Bobo San}   &  { \color{red}\faVolumeUp \ \ Bozo Jenaama Pondori}& \faVolumeUp \ \ Tieyaxo  \\   \hline 
{ \color{green} \faVolumeUp \ \  Yu }&  { \color{pink}  \faVolumeUp \ \  Douwo}   &  { \color{red}\faVolumeUp \ \  pin} & \faVolumeUp \ \  Jiye\\ 
{ \color{green} \faVolumeUp \ \   Sana Emme } &{ \color{pink}  \faVolumeUp \ \  Badouwa} &  { \color{red}\faVolumeUp \ \  Magna} & \faVolumeUp \ \ Magnon \\ 
{ \color{green} \faVolumeUp \ \  Emme }  &{ \color{pink}   \faVolumeUp \ \ Hamoro} &    { \color{red}\faVolumeUp \ \  Kontonron}& \faVolumeUp \ \ Xongolo \\ 
{ \color{green} \faVolumeUp \ \  Saa }&  { \color{pink} \faVolumeUp \ \  Yinu}  &  { \color{red}\faVolumeUp \ \  Fooroo}& \\ 
{ \color{green} \faVolumeUp \ \  Odu }&  { \color{pink}  \faVolumeUp \ \  Wan}  &   { \color{red}\faVolumeUp \ \  Sey}& \faVolumeUp \ \ Chiyen \\  
{ \color{green} \faVolumeUp \ \  Minai }&  { \color{pink}  \faVolumeUp \ \  Muan}  &  { \color{red}\faVolumeUp \ \  Songon} & \faVolumeUp \ \ Tchiye\\ 
{ \color{green} \faVolumeUp \ \  Inai-som }&  { \color{pink}  \faVolumeUp \ \ hann-tio}  &  { \color{red}\faVolumeUp \ \  Mienchye}& \faVolumeUp \ \ Miyemuchiye \\ 
{ \color{green} \faVolumeUp \ \  Diognou-diognouna }& { \color{pink} \faVolumeUp \ \  Tinso}  &  { \color{red}\faVolumeUp \ \  Youmou sabareya}& \faVolumeUp \ \ Jiyereya \\ 
{ \color{green} \faVolumeUp \ \  Antolna }& { \color{pink} \faVolumeUp \ \  ? }  &   { \color{red}\faVolumeUp \ \  Songon sarayan} & \\ 

{ \color{green} \faVolumeUp \ \  Tenne }&  { \color{pink} \faVolumeUp \ \ Boui} &   { \color{red}\faVolumeUp \ \  Tainai}& \faVolumeUp \ \ Tinder \\ 
{ \color{green} \faVolumeUp \ \ Togu }&  { \color{pink} \faVolumeUp \ \  Zun}  &   { \color{red}\faVolumeUp \ \  Gnama} & \faVolumeUp \ \ Jangan\\ 

{ \color{green} \faVolumeUp \ \  Kemme }&  { \color{pink}  \faVolumeUp \ \  Boiri}  &   { \color{red}\faVolumeUp \ \  ?} & \faVolumeUp \ \ Boli\\ 
{ \color{green} \faVolumeUp \ \  Gouyai }&  { \color{pink} \faVolumeUp \ \   Nanu}  &  { \color{red}\faVolumeUp \ \  Djigina} & \faVolumeUp \ \ Fugudu\\ 
{ \color{green} \faVolumeUp \ \  Ara }&   { \color{pink} \faVolumeUp \ \  Mi} &   { \color{red}\faVolumeUp \ \  Dougonpin}& \faVolumeUp \ \ Jua \\ 
{ \color{green} \faVolumeUp \ \  Noumou }& { \color{pink}  \faVolumeUp \ \  Wio}  &   { \color{red}\faVolumeUp \ \  Sapla} & \faVolumeUp \ \ Boolo\\

{ \color{green} \faVolumeUp \ \  Togu }&  { \color{pink}  \faVolumeUp \ \  Chiozanu} &   { \color{red}\faVolumeUp \ \  ?}& \faVolumeUp \ \ Togo \\ 
{ \color{green} \faVolumeUp \ \  Gaou }&  { \color{pink} \faVolumeUp \ \   Taimai} &   { \color{red}\faVolumeUp \ \  Yabaron}& \faVolumeUp \ \ Jawagon \\ 
{ \color{green} \faVolumeUp \ \  Coun-na }&  { \color{pink}  \faVolumeUp \ \  Tiebe} &   { \color{red}\faVolumeUp \ \  Biou}& \faVolumeUp \ \ Sanxere \\ 
{ \color{green} \faVolumeUp \ \  Saa-di }&  { \color{pink} \faVolumeUp \ \   Gna} &   { \color{red}\faVolumeUp \ \  Doo}& \faVolumeUp \ \ Lolo \\ 
 \hline 
\end{tabular}
\caption{High-quality Audio-to-audio database }
\label{table:softmaxttea2a}
\end{table}

  \begin{table}[htb]
\centering
\begin{tabular}{|l|}
\hline 
 { \color{green} \faVolumeUp \ \ Tommo-So Dogon}    \\   \hline 

  { \color{green} \faVolumeUp \ \ Mindju}   \\   \hline 

   { \color{green} \faVolumeUp \ \ Oro}    \\   \hline 

   { \color{green} \faVolumeUp \ \ Youlo}    \\   \hline  

  { \color{green} \faVolumeUp \ \ Omolu}    \\   \hline 

  { \color{green} \faVolumeUp \ \ Se-nguai}    \\   \hline 

 { \color{green} \faVolumeUp \ \ Kambé}    \\   \hline 

 { \color{green} \faVolumeUp \ \ Bié}    \\   \hline 

 { \color{green} \faVolumeUp \ \ Molo}    \\   \hline

 { \color{green} \faVolumeUp \ \ Saa}    \\   \hline
 \hline 
 
 { \color{green} \faVolumeUp \ \ Dana-siai }   \\   \hline

{ \color{green} \faVolumeUp \ \ Siaiguiriaimai }    \\   \hline

{ \color{green} \faVolumeUp \ \ Diaou }    \\   \hline

{ \color{green} \faVolumeUp \ \ Padiaiondo } \\   \hline
\end{tabular}
\caption{High-quality Audio-to-Audio database }
\label{table:softmaxttea2a2}
\end{table}
%
\begin{figure}
\begin{tabular}{|l|l|l|l|l|} \hline
\includegraphics[width=0.18\textwidth]{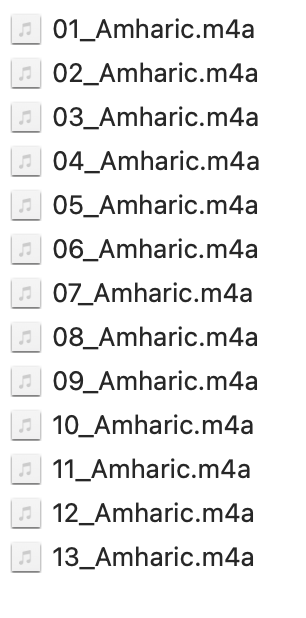} &
\includegraphics[width=0.16\textwidth]{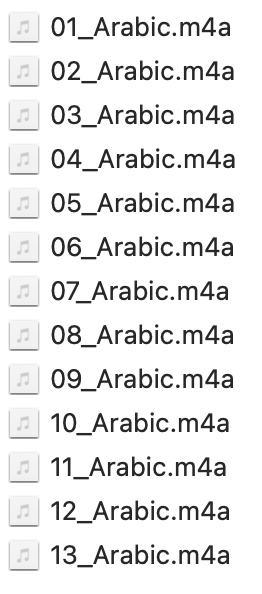} &
\includegraphics[width=0.17\textwidth]{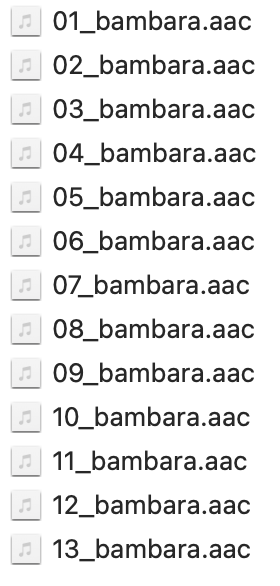} &
\includegraphics[width=0.20\textwidth]{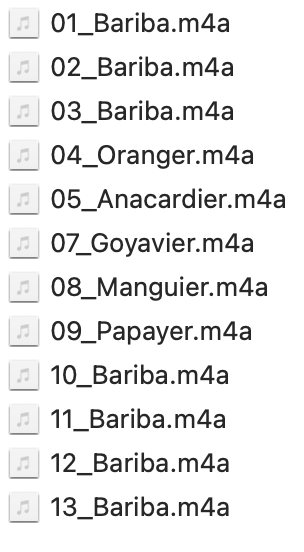} &
\includegraphics[width=0.22\textwidth]{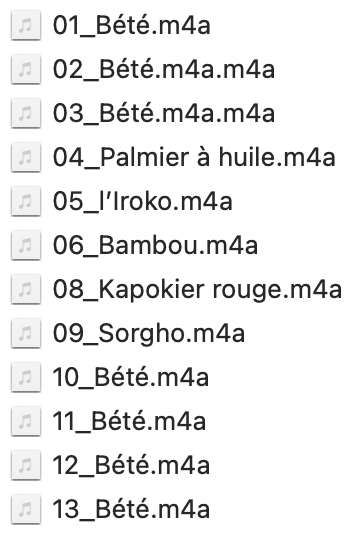} \\ \hline
\includegraphics[width=0.18\textwidth]{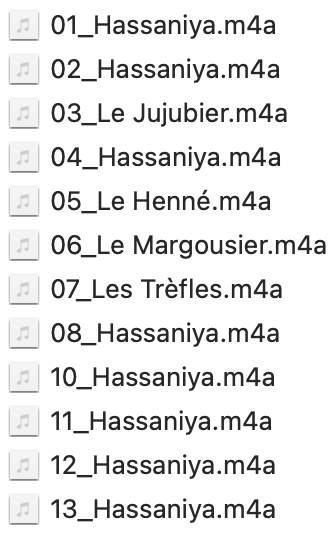} &
\includegraphics[width=0.14\textwidth]{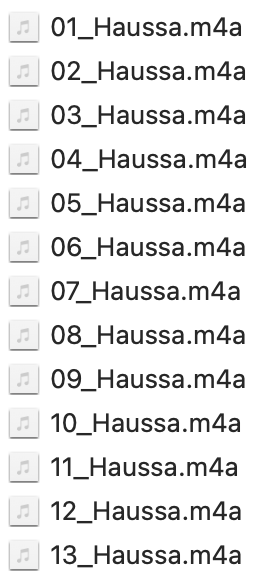} &
\includegraphics[width=0.18\textwidth]{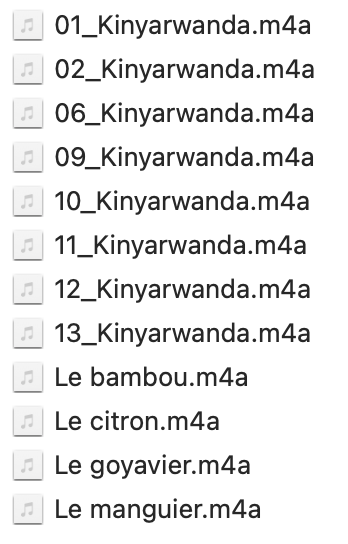} &
\includegraphics[width=0.18\textwidth]{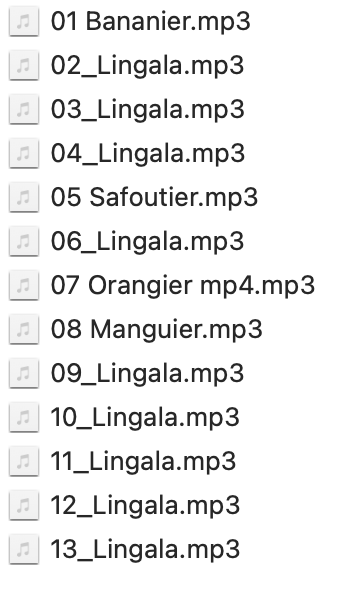} &
\includegraphics[width=0.18\textwidth]{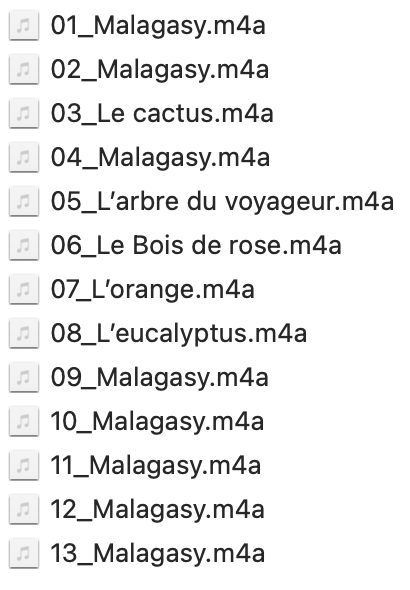} \\  \hline
\end{tabular}
\caption{An audio database matrix for Africa: part 1}
\label{audiomatrixlabel1}
\end{figure}
\begin{figure}
\begin{tabular}{|l|l|l|l|l|} \hline
\includegraphics[width=0.18\textwidth]{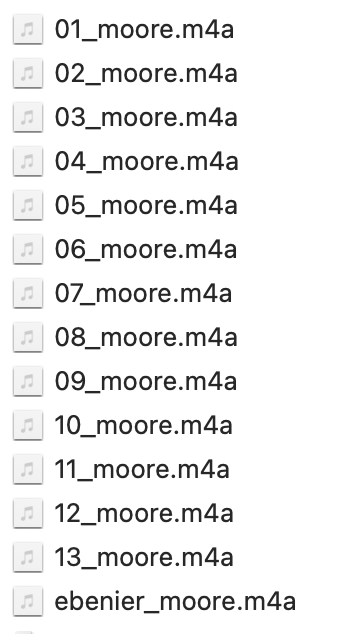} &
\includegraphics[width=0.18\textwidth]{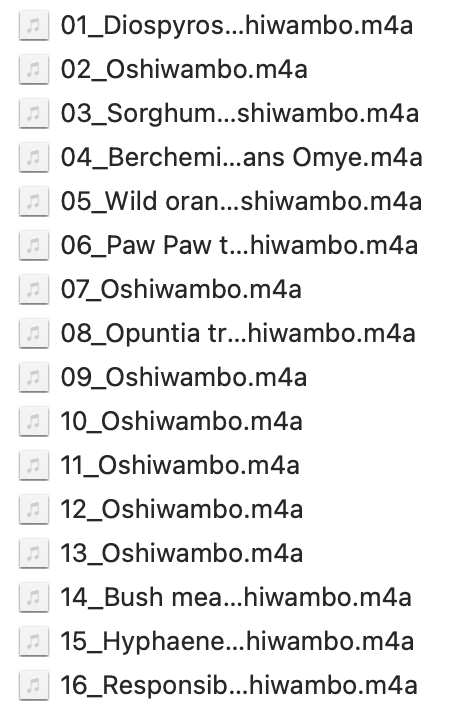} &
\includegraphics[width=0.18\textwidth]{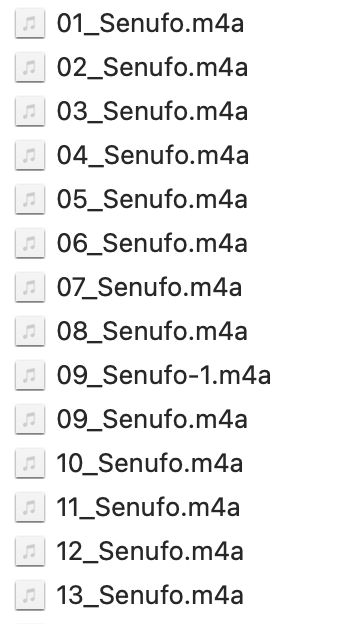} &
\includegraphics[width=0.18\textwidth]{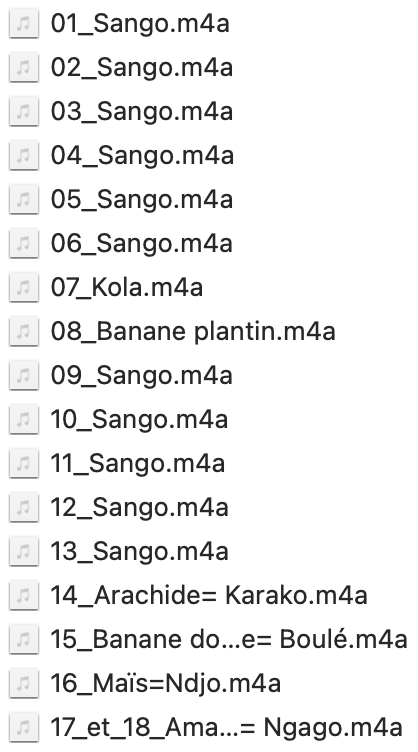} &
\includegraphics[width=0.18\textwidth]{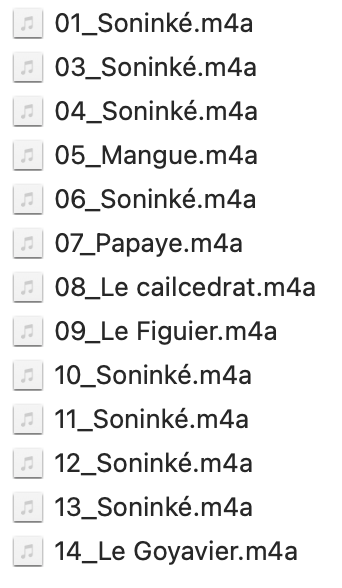} \\  \hline
\includegraphics[width=0.19\textwidth]{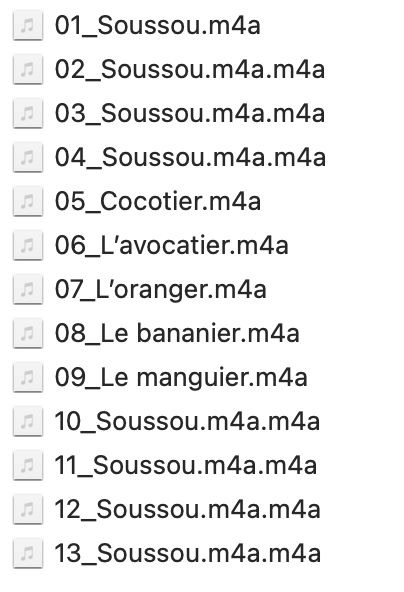} &
\includegraphics[width=0.23\textwidth]{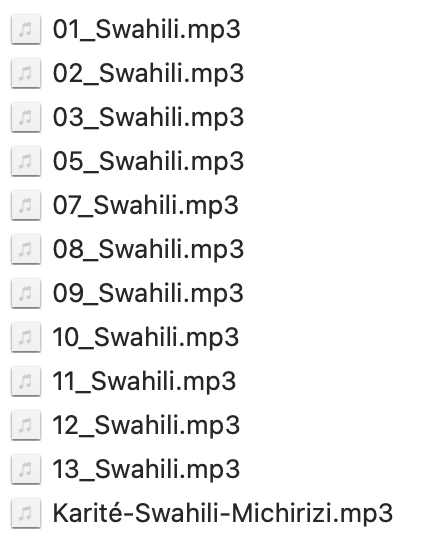} &
\includegraphics[width=0.18\textwidth]{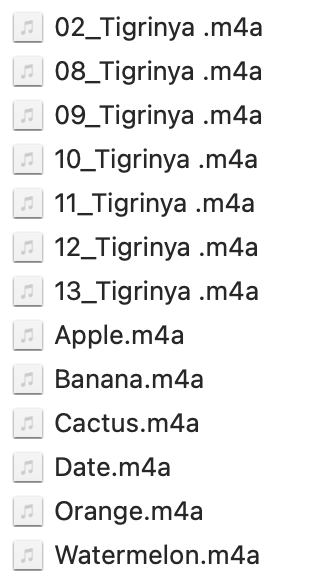} &
\includegraphics[width=0.18\textwidth]{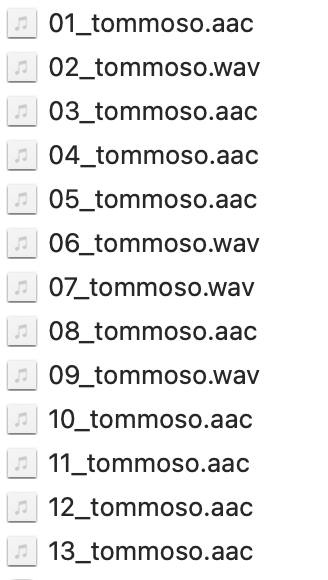} &
\includegraphics[width=0.18\textwidth]{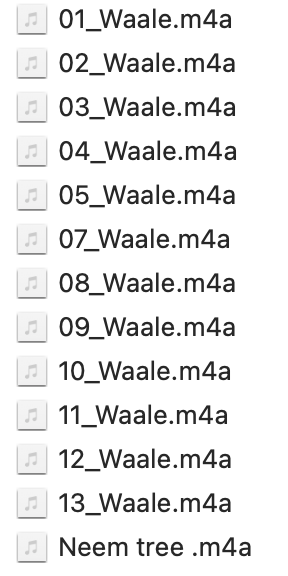} \\  \hline
\includegraphics[width=0.18\textwidth]{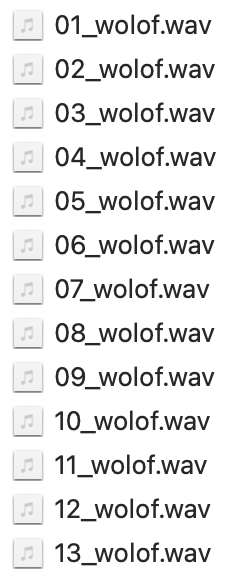} &
\includegraphics[width=0.19\textwidth]{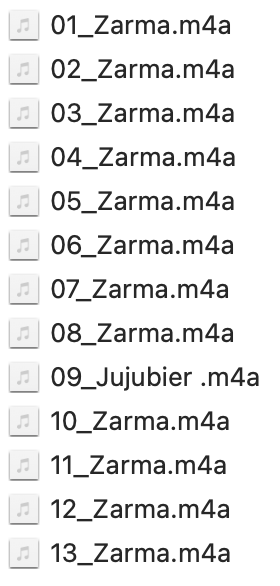} &
\includegraphics[width=0.17\textwidth]{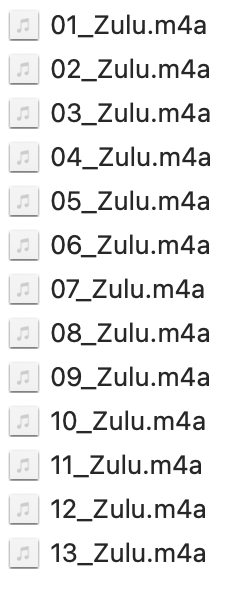} &
\includegraphics[width=0.19\textwidth]{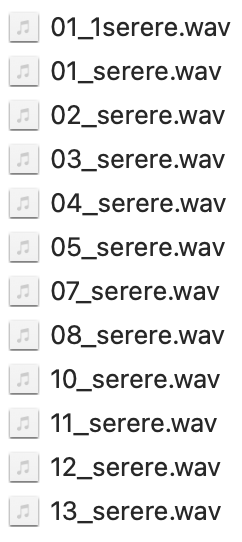} &
\includegraphics[width=0.11\textwidth]{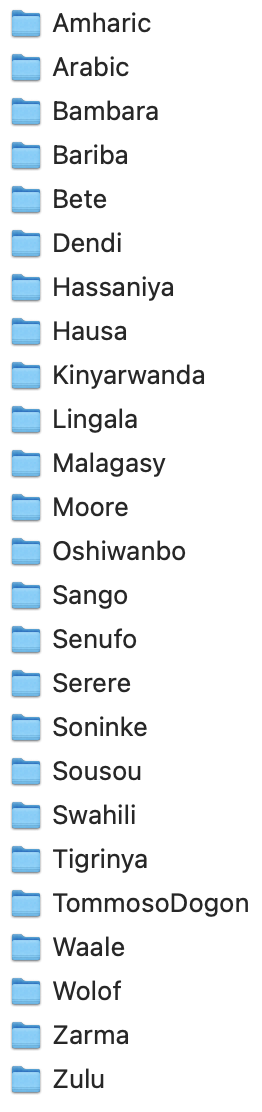}\\  \hline
\end{tabular}
\caption{An audio database matrix for Africa: part 2}
\label{audiomatrixlabel2}
\end{figure}
\begin{figure}
\includegraphics[width=0.9\textwidth]{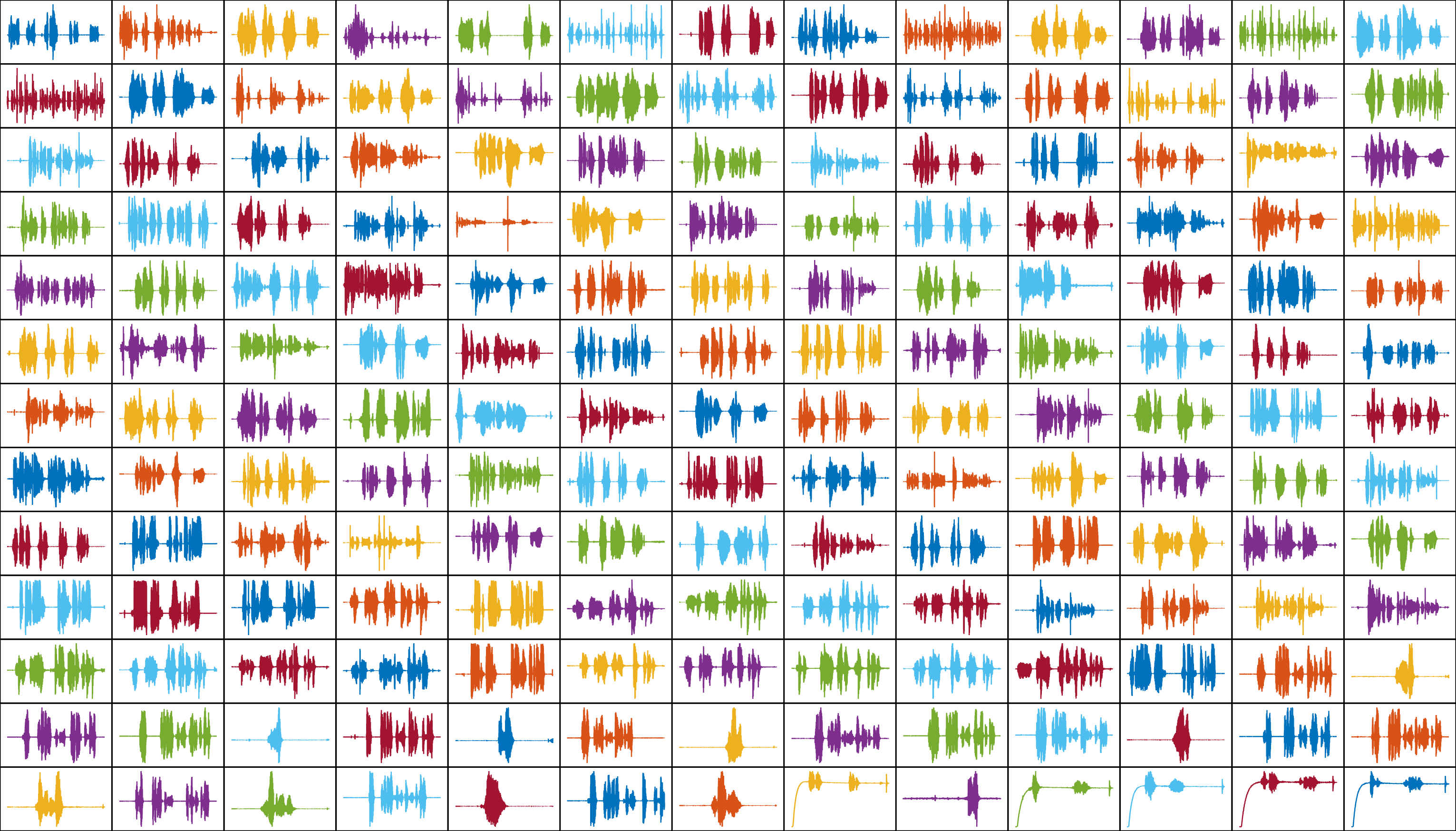} 
\caption{An example of blockaudio matrix.}
\label{blockaudiomatrixlabel}
\end{figure}

\subsubsection{Wavelet analysis for labeled  audio data }
We label high-quality data from native speakers (see sample audio data in Figure \ref{blockaudiomatrixlabel}) content-wise and assign a code/number/hash  to each file. For example, $yy\_$language$\_$name. The number $yy$ is the row number of the table. The column determines  a language. With this dictionary visualization approach, audio translation means changing a column while staying on the same row.  For each of the 13 words we can plot the original audio signal and save it as $yy\_$language$\_$name. Each column is a different  folder. We can now a wavelet analysis file per file. A sample wavelet analysis is provided in Figure \ref{waveletdogon}. We likinf the wavelet output to the raw audio data and keep the original source. During the reconstructed step, the reconstructed signal will be compared to the original signal, if they are closed enough, the original signal will be chosen as output making no non-native audio. By doing so the accent and dialects are preserved as they are from the sources.

\begin{figure}
\includegraphics[width=0.9\textwidth]{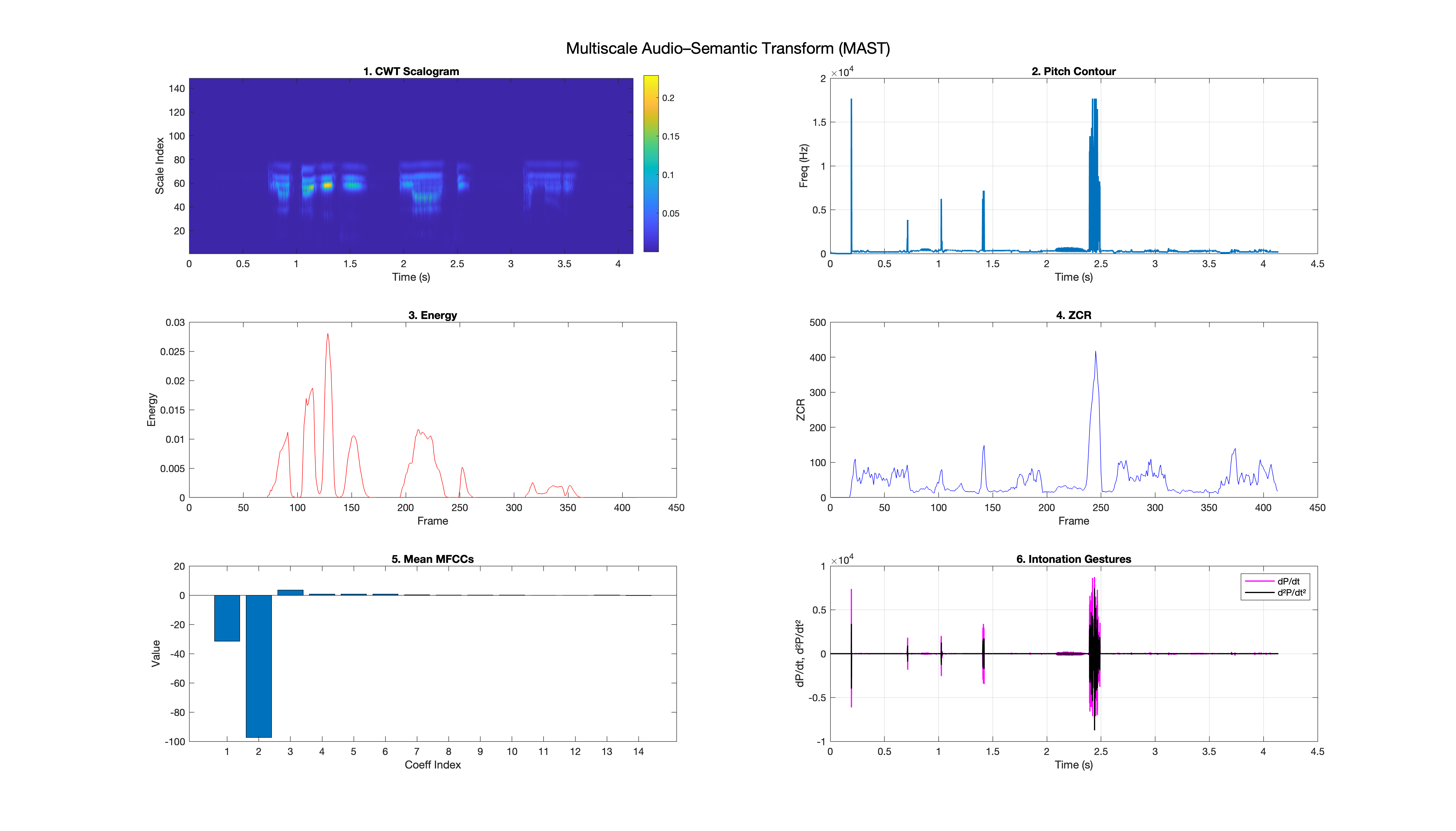} 
\caption{MAST wavelet analysis of an audio in Tommo-So Dogon.}
\label{waveletdogon}
\end{figure}
\subsubsection{Audio-Transformer }
As in Figure  \ref{waveletdogon}, the raw audio is transformed into a vector after the MAST wavelet analysis of the audio. A diffusion process followed by a transformer step can be applied to this mast vector. The resulting output of the transformer is then compared with the audio database using inverse wavelet transform.



\section{Conclusion}
We conclude with some limitations, ethical considerations, and future directions for self-supervised textless voice to voice  technology.  
This work presents a foundational shift in machine intelligence: a fully textless audio-to-audio framework designed to serve the 700 millions of audio-literate individuals, particularly in low-text-resource settings. 
By unifying multiscale acoustic representations, fractional diffusion dynamics, and mean-field-type interactions, we enable direct, semantically rich communication between humans and AI, and among AI agents 
themselves, entirely in the audio domain. Our architecture, grounded in the MAST, is robust to tonal, prosodic, and speaker variability. The integration of 
agentic behavior and risk-awareness further allows these systems to adaptively manage uncertainty in real-time conversational settings. 
To analyze and guide these interactions, we employ mean-field-type game theory, a mathematical formalism well-suited to modeling collective behaviors among  several agents under uncertainty. Each AI agent optimizes its audio output policy not in isolation, but relative to a distribution of the behaviors of  the environment and the shared resources by others. The interaction between human voices and AI-generated speech is governed by dynamic strategies conditioned on prosody, speaker identity, and intent, all encoded through the MAST. Each agent incorporates risk-awareness by minimizing audio interaction cost under expectile-based value-at-risk, capturing asymmetries in communicative errors (tone misinterpretation vs. content mismatch). The underlying fractional diffusion models allow agents to adaptively update these risk profiles as conversations evolve.  These interactions, whether between humans and AI, or among AI agents, are embedded within a fractional mean-field-type diffusion architecture.

While promising, our framework faces several practical, scientific, and ethical challenges that must be addressed for responsible deployment. High-quality, annotated audio corpora for many low-text-resource audio-rich languages poses a significant hurdle, particularly during initial training phases, despite our reliance on self-supervised learning. The computational demands of fractional diffusion models and wavelet-based transforms may limit real-time scalability, especially on low-power or edge devices.  The evaluation remains a persistent limitation, as universally accepted benchmarks for textless audio translation and dialogue are lacking; current assessments often rely on subjective human judgment, which can be inconsistent from one village to another. Ethically, the deployment of agentic audio-AI systems demands rigorous safeguards. Linguistic sovereignty must be upheld to ensure that audio-AI tools empower rather than homogenize linguistic communities, with co-development involving native speakers as a prerequisite. Given the potential misuse of continuous listening systems for passive surveillance or behavioral manipulation, all applications must respect consent, privacy, and cultural norms. There is also a risk of AI agents learning and reinforcing biases embedded in speech patterns linked to culture, gender, accent, or emotional expression, necessitating continuous auditing and fairness interventions. As these agents gain autonomy and operate in sensitive domains such as finance, healthcare, or legal consultation, robust accountability frameworks must be established to clarify responsibility for their actions and decisions.

\subsection*{Affiliations}
 Hamidou Tembine  is with  Department of Electrical Engineering and Computer Science, School of Engineering, UQTR, Quebec, Canada. He is also with AI Mali, Timadie, Grabal, Guinaga, MFTG, SK1 Sogoloton, WETE, CI4SI and TF.  Issa Bamia, Massa NDong, Oumar Issiaka Traor\'e, Moussa Sanogo are  with AI Mali. Bakary Coulibaly  is with School of Economics and Management, China University of Geosciences,  Moussa Traor\'e is with OSU THETA Kaboratory, Marie and Louis Pasteur University.   Mamadou Eric Sangar\'e is with Interdisciplinary Laboratory for Applied Research, Universiapolis, International University of Agadir, Morocco,  Salif Kant\'e is with Tomsk State University of Control Systems and Radioelectronics (TUSUR), Tomsk, Russia and AI Mali,  Daryl Noupa Yongueng and  Mamadou Lamine Doumbia are with UQTR,  Hafiz Tiomoko Ali is with Samsung Research UK, Malik Tiomoko is with Huawei Noah's Ark Lab, Fréjus Laleye is with Opscidia,  Boualem Djehiche is with Department of mathematics, KTH, Stockholm, Sweden,   Wesmanegda Elisee Dipama, Idris Baba Saje, Hammid Mohammed  Ibrahim, Marie Coursel Nini Nahazwe and Abdul-Latif  Siita, Mariam Serine Jeridi, Mutiyamuogo Parfait Mupenge, Lekoueiry Dehah,   Abdoul Aziz Bio Sidi D Bouko, Wilfried Franceslas Zokoue, Odette Richette Sambila, Alina RS Mbango, Mady Diagouraga , Oumarou  Moussa Sanoussi, Gizachew Dessalegn, Bintou Laeticia Audrey Coulibaly   are with China University of GeoSciences in Beijing,  Moumini Sanogo and Teddy Nelvy Dieu Merci Kouka, Mady Diagouraga are with Beijing Language and Culture University, Haine Mhlongo is with Beijing Film Academy. Mohamed Lamine Samoura is with Tianjin Normal University.


\bibliographystyle{plain} 
\bibliography{a2a} 

\end{document}